\documentclass[11pt,letterpaper]{article}
\usepackage[margin=1in]{geometry}

\usepackage{amsmath, amsthm, amsfonts, amssymb}
\usepackage{dsfont}
\usepackage{autobreak}
\allowdisplaybreaks[4]
\usepackage{booktabs}
\usepackage{xcolor}
\usepackage{hyperref}
\usepackage{url}
\usepackage{appendix}
\usepackage{subfigure}
\usepackage{graphicx}
\usepackage{enumitem}
\usepackage[ruled, linesnumbered]{algorithm2e}
\usepackage[numbers,sort]{natbib}

\usepackage{tikz}
\usetikzlibrary{calc}

\newtheorem{theorem}{Theorem}[section]

\newtheorem{lemma}[theorem]{Lemma}

\newtheorem{definition}[theorem]{Definition}
\newtheorem{fact}[theorem]{Fact}

\newtheorem{remark}[theorem]{Remark}

\newenvironment{proofof}[1]{\begin{proof}[Proof of #1]}{\end{proof}}

\newcommand{\eps}{\varepsilon}
\newcommand{\alg}{\mathrm{ALG}}
\newcommand{\opt}{\mathrm{OPT}}
\renewcommand{\tilde}{\widetilde}
\renewcommand{\emptyset}{\varnothing}
\DeclareMathOperator{\E}{\mathbb{E}}
\newcommand{\pois}{\operatorname{Pois}}
\newcommand{\ber}{\operatorname{Ber}}

\title{On the Cost of Non-Adaptivity in Matroid Prophet Inequalities}
\author{Tianle Jiang\thanks{Supported by NSF grant IIS-2402823.}\\
Duke University\\
\texttt{tianle.jiang@duke.edu}
}
\date{}

\begin{document}

\maketitle

\begin{abstract}
    Matroid prophet inequalities admit an optimal 2-competitive algorithm, which relies on adaptively updating thresholds based on previous outcomes. Motivated by applications to posted-price mechanisms and the structural simplicity of fixed-threshold policies, recent work initiated the study of non-adaptive matroid prophet inequalities. The central question is to understand how much the performance deteriorates when thresholds must be fixed in advance.

    We explore the fundamental limits of non-adaptive algorithms and show new structural barriers and algorithmic insights. We first identify a simple case where non-adaptive algorithms admit a lower bound strictly above $2$: for truncated partition matroids where every local partition has rank $1$, there is an instance giving a lower bound of $\approx 2.179$, and we give a non-adaptive OCRS-style algorithm that exactly matches this ratio. We then show that richer matroid structures can amplify this barrier: we obtain stronger lower bounds of $2.217$ for laminar matroids and $3$ for graphic matroids, and complement the hardness results with improved upper bounds for these matroids.
\end{abstract}

\section{Introduction}
\label{sec:intro}

Prophet inequalities are a fundamental model for online decision-making under uncertainty. 
In the matroid prophet inequality problem, the values of the elements are drawn independently from known distributions and revealed one by one in an online order; upon seeing each realized value, the algorithm must irrevocably decide whether to accept it, subject to the constraint that the accepted elements form an independent set of a given matroid. 
The goal is to compete with the prophet, who knows all realizations in advance and selects the maximum-value feasible set. 

In the classical single-choice setting (i.e., $1$-uniform matroids), the optimal $2$-competitive algorithms of~\citet{Krengel_Sucheston1978,Samuel_Cahn1984} use thresholds that can be fixed before the online process begins. 
For general matroids, \citet{stoc/KleinbergW12} obtained a $2$-competitive algorithm, matching the classical lower bound and therefore achieving the best possible worst-case guarantee. However, the algorithm crucially relies on using adaptive thresholds for each online item that depend on the previous realizations and decisions. This naturally raises the question of how the performance deteriorates if the thresholds must be fixed in advance.

This question is further motivated by the connection between prophet inequalities and posted-price mechanisms.
Prophet inequalities have long been used as a tool for Bayesian mechanism design, starting from the seminal work of~\citet{aaai/HajiaghayiKS07} and later developed more systematically by~\citet{stoc/ChawlaHMS10, focs/DuettingFKL17}, who showed that prophet inequalities can be converted into order-oblivious posted pricing mechanisms for unit-demand buyers. 
However, as emphasized by~\citet{sagt/ChawlaGKM24}, adaptive thresholds are less suitable when one wants to translate a prophet inequality into a truthful mechanism for richer multi-dimensional buyers. 
For example, for a single constrained-additive buyer who may purchase multiple items subject to a matroid constraint, adaptive thresholds correspond to prices that change as a function of the buyer's previous purchases, which can destroy truthfulness. 
Moreover, non-adaptive algorithms also have practical advantages: item prices are anonymous, i.e., for each item the price remains the same for all buyers; the thresholds are computed offline while the online phase requires little additional computation.

Motivated by these considerations, we study the following question: how much does the competitive ratio deteriorate if the thresholds must be fixed before the online process begins? In short, what is the \textbf{cost of non-adaptivity} in matroid prophet inequalities? 
Our understanding of this problem remains limited. \citet{soda/FeldmanSZ16} provide an $\Omega(\log n/\log\log n)$ lower bound for gammoids, implied by their lower bound for online contention resolution schemes. 
\citet{sagt/ChawlaGKM24} initiated the study of non-adaptive algorithms and gave a $32$-competitive non-adaptive algorithm for graphic matroids. 
More recently, \citet{dam/PashkovichS26} obtain $O(1)$-competitive algorithms for several additional matroid classes, and observe that some results can be implied by known single-sample matroid prophet inequalities~\cite{geb/AzarKW19, corr/abs-2103-13089}. 

These results reveal a rich landscape, but a basic problem is left open.
The only known hardness result comes from gammoids, a highly expressive class of matroids that is less commonly studied as a natural matroid object in algorithmic applications.
On the algorithmic side, the known constant guarantees are still far from the adaptive benchmark of $2$, and it remains unclear whether this gap is merely an artifact of the analyses or an inherent limitation of non-adaptive algorithms. 

In this paper, we make progress on understanding the cost of non-adaptivity. For adaptive algorithms, the classical two-variable prophet inequality instance directly implies the tight worst-case ratio of $2$ for many matroids. However, we observe that non-adaptivity introduces new structural barriers and leads to lower bounds strictly larger than $2$ for many natural classes of matroids, including laminar and graphic matroids. We also complement these lower bounds with improved non-adaptive algorithms, providing new algorithmic insights for non-adaptive prophet inequalities.

\subsection{Our Results}

Our results involve new lower and upper bounds for many natural classes of matroids, and are summarized in Table~\ref{tab:overview}. The formal definitions for these matroid classes appear in Section~\ref{sec:matroid_classes}.

\begin{table}[ht]
    \centering
    \caption{Overview of our results and comparison to prior work}
    \label{tab:overview}
    \renewcommand{\arraystretch}{1.18}
    \begin{tabular}{p{0.28\linewidth} p{0.27\linewidth} p{0.36\linewidth}}
        \toprule
        Matroid type & Lower bound & Upper bound \\
        \midrule

        truncated $R_1$-partition
        &
        \(2.179\)~(Theorem~\ref{thm:truncated_lower_bound})
        &
        \(8\)~(\cite{corr/abs-2103-13089})
        \(\to\)
        \(2.179\)~(Theorem~\ref{thm:truncated-R1-partition})
        \\

        truncated partition
        &
        \(2.179\)~(Theorem~\ref{thm:truncated_lower_bound})
        &
        \(8\)~(\cite{corr/abs-2103-13089})
        \(\to\)
        \(3\)~(Theorem~\ref{thm:truncated_upper_bound})
        \\

        laminar
        &
        \(2.217\)~(Theorem~\ref{thm:laminar_lower_bound})
        &
        \(9.6\)~(\cite{geb/AzarKW19})
        \(\to\)
        \(6.311\)~(Theorem~\ref{thm:laminar_upper_bound})
        \\

        graphic
        &
        \(3\)~(Theorem~\ref{thm:graphic_lower_bound_3})
        &
        \(32\)~(\cite{sagt/ChawlaGKM24})
        \(\to\)
        \(8\)~(Theorem~\ref{thm:graphic_upper_bound_8})
        \\

        simple graphic
        &
        \(2\)~(\cite{itcs/AlonGPRWW025})
        &
        \(16\)~(\cite{dam/PashkovichS26})
        \(\to\)
        \(5.828\)~(Theorem~\ref{thm:graphic_upper_bound_simple})
        \\
        
        \bottomrule
    \end{tabular}
\end{table}

\paragraph{Tight analysis of a basic case.}
We begin with a special class of matroids, which we refer to as truncated $R_1$-partition matroids: there is a global cardinality constraint of size $k$, and the items are divided into several disjoint partitions such that an independent set can contain at most one item from each partition. 
This is arguably the simplest natural setting beyond $k$-uniform and partition matroids\footnote{
    For $k$-uniform matroids with $k\geq 2$, an even more restrictive class of algorithms that use a single threshold for all items, namely static threshold algorithms, has been extensively studied \cite{Samuel_Cahn1984, aaai/HajiaghayiKS07, ior/ChawlaDL24, ec/JiangMZ23}, and the competitive ratio is strictly smaller than $2$. This prevents us from showing a separation from $2$ for non-adaptive algorithms using $k$-uniform matroids or partition matroids. In Appendix~\ref{sec:discussion}, we provide a more detailed discussion of static threshold algorithms, and show that they are no longer powerful for richer matroid structures, which justifies our focus on non-adaptive algorithms.
}, and captures realistic scenarios such as selecting $k$ candidates subject to group fairness constraints. In this setting, we show a lower bound of $\approx 2.179$, strictly larger than the adaptive benchmark of $2$, and give a non-adaptive algorithm with exactly the same competitive ratio.

The hard instance starts with many singleton partitions each containing a low-value Bernoulli item, ends with a partition that embeds the classic two-variable hard instance for single-choice prophet inequalities, and has global rank $2$. This captures a fundamental limitation of non-adaptive algorithms: they cannot simultaneously ensure high expected reward from the lower-value Bernoulli items and keep one slot available for the final partition with sufficiently large probability. 

We view the matching upper bound as a main technical contribution of the paper. It is based on an intricate analysis of an OCRS-style process, which rounds the optimal ex-ante solution in an online fashion but computes non-adaptive thresholds in advance so that each item is accepted with a prescribed marginal probability. We characterize the tight competitive ratio by modeling the interaction between the partitions and the global cardinality constraint via a Poisson binomial variable and its mean capped at $k$, and, roughly speaking, demonstrate that the worst case of the analysis has the same structure as the hard instance, which explains why the lower bound and the upper bound coincide.

\paragraph{Stronger separations from richer matroid structure.}
Next we show that the cost of non-adaptivity from the basic case can be amplified by exploiting richer matroid structures. 
For laminar matroids, which allow multiple levels of capacity constraints, we recursively extend the basic construction and obtain a lower bound of $2.217$. 
For graphic matroids, we embed a similar structural barrier into the ``hat graph'' (see also, e.g., \cite{esa/LeeS18, wine/BahraniBSW21}) and obtain a lower bound of $3$.
Note that the construction crucially relies on using parallel edges to implement the two-variable hard instance, and therefore does not extend to simple graphs, where the best known lower bound remains $2$ \cite{itcs/AlonGPRWW025}.

\paragraph{Complementary improved upper bounds.}
We further complement the lower bounds with improved non-adaptive algorithms for laminar and graphic matroids. The non-adaptive OCRS-style framework behind the tight result for truncated $R_1$-partition matroids can be extended to more general laminar-type constraints, and gives a $3$-competitive algorithm for truncated partition matroids and a $6.311$-competitive algorithm for laminar matroids. 

For graphic matroids, we use a different idea: instead of the standard ex-ante relaxation over the matroid polytope, we consider an alternative relaxation tailored to spanning forests, which leads to an $8$-competitive algorithm for graphic matroids, and the ratio improves to $3+2\sqrt{2}\approx5.828$ for simple graphs with no parallel edges.

\subsection{Related Work}

\paragraph{Prophet inequalities and Bayesian mechanism design.} Prophet inequalities have become a central tool in online stochastic optimization and Bayesian mechanism design. The connection to mechanism design was initiated by~\citet{aaai/HajiaghayiKS07} and developed systematically through sequential posted pricing~\cite{stoc/ChawlaHMS10, focs/DuettingFKL17}. We do not attempt to provide an exhaustive survey of this broad literature, and refer the reader to the surveys of~\citet{sigecom/Lucier17} and~\citet{sigecom/CorreaFHOV18}.

\paragraph{Other online selection models.} Beyond the classic matroid prophet inequality setting, there are three other commonly studied online selection models, namely matroid secretary~\cite{soda/BabaioffIK07, focs/Lachish14, FeldmanSZ15}, prophet matroid secretary~\cite{soda/EhsaniHKS18}, and sample-based matroid prophet inequalities~\cite{geb/AzarKW19, soda/CaramanisDFFLLP22, corr/abs-2103-13089, sigecom/0001LTWW024, soda/FeldmanSZ26, sosa/NutiW26}. While it remains unclear whether matroid secretary admits an $O(1)$-competitive algorithm, for graphic and laminar matroids this is true and there is a long line of work improving the competitive ratio~\cite{icalp/KorulaP09, soda/ImW11, stacs/Ma0W13, JailletSZ13, soda/SotoTV18, esa/BanihashemHKKMO25, ipco/BercziLSV25, esa/0002PZ24}. In particular, as noted by~\citet{dam/PashkovichS26}, for laminar and truncated partition matroids, the single-sample prophet inequalities in~\citet{geb/AzarKW19, corr/abs-2103-13089}, which rely on a restricted class of secretary algorithms, imply non-adaptive prophet inequalities since they set non-adaptive thresholds based on samples. Our improved algorithms are designed more directly and make better use of the full distributional information.

\paragraph{Online contention resolution schemes (OCRS).} OCRS was originally proposed by~\citet{soda/FeldmanSZ16,siamcomp/FeldmanSZ21} as a general framework to convert ex-ante feasible activation probabilities into online selections, and has wide applications in prophet inequalities, stochastic probing, and Bayesian mechanism design. Since then, OCRS has been widely studied in various settings: matroids~\cite{esa/LeeS18, itcs/DinevW24}, matching~\cite{sigecom/EzraFGT20, icalp/FuTWWZ21, sigecom/PollnerRSW22, soda/MacRuryMG23, sigecom/MaMN24, stoc/MacRuryM24, itcs/MaMZ26, corr/abs-2603-21532}, combinatorial auctions~\cite{wine/ChawlaRTT23, soda/0001CDHKR25, itcs/MaMZ26}, etc. There is also growing interest in OCRS under other online arrival models, such as random-order~\cite{focs/Adamczyk018, soda/MacRuryMG23, stoc/MacRuryM24, siamcomp/Dughmi25} and preselected-order~\cite{icalp/Zhao25}. 

While standard OCRSs are typically required to work for arbitrary unknown order or uniformly random arrival order, and are usually adaptive, our algorithm for truncated partition matroids can be viewed as a variant of OCRS where the arrival order is known, but the algorithm must use non-adaptive thresholds.

\section{Preliminaries}
\label{sec:prelim}

\subsection{Matroid Prophet Inequalities}

\begin{definition}[Matroid]
A matroid $M=(N,\mathcal I)$ consists of a ground set $N$ and a family of independent sets $\mathcal I\subseteq 2^N$ satisfying:
\begin{enumerate}
\item if $I\subseteq J$ and $J\in\mathcal I$, then $I\in\mathcal I$;
\item if $I,J\in\mathcal I$ and $|J|>|I|$, then there exists $j\in J\setminus I$ such that $I\cup\{j\}\in\mathcal I$.
\end{enumerate}
\end{definition}

A matroid prophet inequality instance is given by a matroid $M=([n],\mathcal I)$ and independent nonnegative random variables $X_1,\ldots,X_n$, where $X_i$ denotes the value of item $i$. The items arrive in order $1,2,\ldots,n$. When item $i$ arrives, the algorithm observes the realization of $X_i$ and must immediately and irrevocably decide whether to accept it, subject to the accepted set remaining independent in $M$.

\begin{definition}[Competitive ratio]
    Given an algorithm $\alg$, let $\E[\alg]$ denote the expected total value of accepted items, and $\E[\opt] = \E\left[\max_{I\in\mathcal{I}}\sum_{i\in I}X_i\right]$ denote the prophet's value, i.e. the expected value of the maximum independent set in hindsight. Then we say that $\alg$ has competitive ratio $\rho$ if for every instance, $\E[\opt]\leq \rho\cdot \E[\alg]$.
\end{definition}

\subsection{Matroid Classes}
\label{sec:matroid_classes}

\begin{definition}[Laminar matroid]
A family $\mathcal L\subseteq 2^N$ is laminar if for every $A,B\in\mathcal L$, either $A\cap B=\emptyset$, $A\subseteq B$, or $B\subseteq A$. A laminar matroid is specified by a laminar family $\mathcal L$ and integer ranks $(r_S)_{S\in\mathcal L}$; a set $I\subseteq N$ is independent if $|I\cap S|\leq r_S$ for every $S\in\mathcal L$. For convenience, we assume that no constraint is redundant: for every $A,B\in\mathcal{L}$, if $A\subseteq B$ then $r_A\leq r_B$.
\end{definition}

\begin{definition}[Truncated partition matroid]
A truncated partition matroid is the special case of a laminar matroid where the laminar family consists of a partition $\mathcal P$ of $N$ and the whole ground set $N$. Given local ranks $(r_B)_{B\in\mathcal P}$ and a global rank $k$, a set $I\subseteq N$ is independent if $|I|\leq k$ and $|I\cap B|\leq r_B$ for every $B\in\mathcal P$. If every local rank is $1$, we call it a truncated R1-partition matroid. 
\end{definition}

\begin{definition}[Graphic matroid]
A graphic matroid is defined over the edge set of a graph $G=(V,E)$, and a set of edges is independent if it forms a forest. Unless explicitly stated, graphs are allowed to have parallel edges. We use simple graphic matroids to refer to graphic matroids defined on a simple graph.
\end{definition}

\subsection{Non-Adaptive Algorithms}

The celebrated $2$-competitive algorithm of~\citet{stoc/KleinbergW12} is adaptive: the decision for each item may depend on previous realizations and decisions.

In this paper, we focus on non-adaptive algorithms, which choose thresholds $T_1,\ldots,T_n$ before any value is realized. These thresholds may only depend on the distributions, the matroid, and the arrival order. When item $i$ arrives, the algorithm accepts it if and only if $X_i\geq T_i$ and adding $i$ preserves independence of the set of accepted items. Note that the thresholds for the items can be different. A more restricted setting where all thresholds must be identical, namely static threshold algorithms, is discussed in Appendix~\ref{sec:discussion}.

\subsection{Continuous Distributions, Tie-Breaking, and Randomization}

For notational convenience, our algorithms are described for continuous distributions. Thus, for every item $i$ and every $q\in[0,1]$, there exists a threshold $\tau_i(q)$ such that $\Pr[X_i\geq \tau_i(q)]=q$. For discrete distributions with atoms, we can apply a standard randomized tie-breaking at the threshold so that item $i$ is considered above the threshold with probability exactly $q$.

In lower-bound proofs, it suffices to consider discrete distributions and deterministic thresholds. Indeed, any ex-ante random choice of thresholds (including randomness of tie-breaking) can be viewed as a distribution over deterministic threshold vectors. Thus, if every deterministic non-adaptive algorithm has competitive ratio at least $\rho$ on an instance, then so does every randomized non-adaptive algorithm.

\section{Truncated Partition and Laminar Matroids}
\label{sec:laminar}

\subsection{The Simplest Separation: Truncated \texorpdfstring{$R_1$}{R1}-Partition Matroids}
\label{sec:truncated_partition_lower_bound}

\begin{theorem}
\label{thm:truncated_lower_bound}
There exists a truncated $R_1$-partition matroid prophet inequality instance such that every non-adaptive algorithm has competitive ratio at least $2.179$.
\end{theorem}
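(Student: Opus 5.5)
We will prove Theorem~\ref{thm:truncated_lower_bound} with an explicit instance of global rank $k=2$, following the description in the introduction. There are $m$ singleton partitions, each holding a ``small'' item with value $a\in(0,1]$ with probability $\lambda/m$ and $0$ otherwise. These are followed by one final partition of local rank $1$ containing the classic two-variable instance: an item $x$ with $X_x=1$ deterministically, then an item $y$ with $X_y=1/\eps$ with probability $\eps$ and $0$ otherwise. The parameters $a,\lambda$ will be tuned numerically, and we will let $m\to\infty$ and $\eps\to 0$. By the remark in the preliminaries, it suffices to consider deterministic thresholds.

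\textbf{Step 1 (prophet).} The prophet takes at most one item from the final partition, worth $\max(X_x,X_y)$, and uses the remaining slot on a small item if one realizes. Since $a\le 1$, taking two small items is never better than taking one small item plus the final partition. Hence
\[
\E[\opt]=a\Pr[N\ge 1]+\E[\max(X_x,X_y)]\to a(1-e^{-\lambda})+2,
\]
where $N\sim\mathrm{Binomial}(m,\lambda/m)$ counts the nonzero small items, and $N\to\pois(\lambda)$ as $m\to\infty$.

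\textbf{Step 2 (any non-adaptive algorithm).} For a small item, any threshold either accepts it whenever it is nonzero or never accepts it. So the algorithm amounts to choosing a subset $S$ of small items; let $\mu=\lambda|S|/m\in[0,\lambda]$. The number $N_S$ of accepted small items is binomial with mean $\mu$ and converges to $\pois(\mu)$. The algorithm collects $a\min(N_S,2)$ from the small items.

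At the final partition, if $N_S\ge 2$ no slot remains and the algorithm gets $0$. If $N_S\le 1$, a slot remains and the local rank is $1$. A fixed threshold then either takes $x$ (worth $1$) or skips $x$ and takes $y$ if it is large (worth $\eps\cdot 1/\eps=1$). Either way the expected value is at most $1$, independently of $N_S$. Thus
\[
\E[\alg]\le a\,\E[\min(N_S,2)]+\Pr[N_S\le 1]\to g(\mu):=a\bigl(2-2e^{-\mu}-\mu e^{-\mu}\bigr)+(1+\mu)e^{-\mu}.
\]

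\textbf{Step 3 (optimization).} The lower bound is
\[
\sup_{a\in(0,1],\,\lambda>0}\ \frac{a(1-e^{-\lambda})+2}{\max_{\mu\in[0,\lambda]}g(\mu)}.
\]
Differentiating gives $g'(\mu)=e^{-\mu}\bigl(a(1+\mu)-\mu\bigr)$. Hence $g$ is unimodal, with maximizer $\mu^*=a/(1-a)$ if $\mu^*\le\lambda$ and $\mu=\lambda$ otherwise. This reduces the problem to a two-parameter calculus problem. Plausibly the optimum has $\lambda=\mu^*$, i.e.\ $\lambda=a/(1-a)$, leaving a single-variable optimization over $a$. We then evaluate it numerically and check that the ratio exceeds $2.179$. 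Finally, we choose $m$ large and $\eps$ small so that the finite-instance errors (the binomial-to-Poisson approximation and the $O(\eps)$ terms) lose less than the slack.

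The main obstacle I expect is Step 3: confirming that this particular gadget actually reaches $2.179$, rather than a smaller separation. If it falls short, the first adjustment is to let the small items carry heterogeneous values (or their arrival order relative to their values matter), while keeping the same Step 2 reduction to a capped Poisson count. The second adjustment is to reorder $y$ before $x$; the bound of $1$ on the final partition's contribution in Step 2 is unaffected by this. A secondary point to verify carefully is that no cleverer threshold choice in the final partition earns more than $1$ when a slot remains; this follows from the local rank being $1$ and the two-item case analysis above.
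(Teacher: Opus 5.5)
\textbf{Gap in Step 3.} Your Steps 1 and 2 match the paper's computation: your $a$ is the paper's $r$, your $\mu$ is its limiting $Mb$, and your $g$ is its $V_r$. Step 3, however, contains a genuine error, and it is the step that carries the theorem's quantitative content. The guess that the optimum has $\lambda=\mu^*=a/(1-a)$ is wrong, and evaluating there gives no separation at all. At $\lambda=\mu^*$ the ratio is
\[
\frac{2+a\bigl(1-e^{-\mu^*}\bigr)}{2a+(1-a)e^{-\mu^*}}.
\]
Substituting $a=\mu^*/(1+\mu^*)$ shows this is at most $2$ if and only if $2-\mu^*\le(2+\mu^*)e^{-\mu^*}$. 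That inequality holds for every $\mu^*\ge0$: it is trivial for $\mu^*\ge 2$, and for $\mu^*<2$ it is $e^{\mu}\le(1+\mu/2)/(1-\mu/2)$, which follows by comparing power series. For instance, at $a\approx0.321$ the ratio is about $1.991$. So carrying out your plan, you would conclude that the gadget falls short and turn to the fallback modifications, which address the wrong issue.

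\textbf{The missing observation.} For fixed $a$, the ratio only improves as $\lambda$ grows past $\mu^*$. The best algorithm's value stays at $g(\mu^*)$, since it can simply ignore the surplus small items. Meanwhile the prophet's value $2+a(1-e^{-\lambda})$ keeps increasing, because the prophet always has a spare global slot for one small item. For $\lambda\le\mu^*$, the same algebra shows the ratio never exceeds $2$. Hence the supremum over $\lambda$ is approached as $\lambda\to\infty$, which is why the paper takes $Nb\to\infty$ from the outset.

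\textbf{Completing the argument.} In that regime the ratio is $R(a)=\frac{2+a}{2a+(1-a)e^{-a/(1-a)}}$. Its first-order condition $(5-2a)e^{-a/(1-a)}=4(1-a)$ gives $a^*\approx0.3211$ and $R(a^*)\approx2.1789$, which is the constant the statement rounds to $2.179$. For a finite instance you then take $\lambda$ large, in addition to $m$ large and $\eps$ small.

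Your fallback ideas could not have helped in any case. The paper's matching algorithm (Theorem~\ref{thm:truncated-R1-partition}) shows that no truncated $R_1$-partition instance forces a ratio above $R(a^*)$. The only repair needed is the choice of $\lambda$.
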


\begin{proof}
    We construct an instance that implements the basic intuition: use a sequence of Bernoulli variables to create additional risk to the classic two-variable hard instance. 
    
    \paragraph{Construction.} Construct an instance with $N+2$ items $[N]\cup \{g_1,g_2\}$. The first $N$ items have identical distributions and take value $X_i=r \in(0,1)$ with probability $b$, and $X_i=0$ otherwise.  The remaining two ``important items'' $g_1,g_2$ form a classical two-variable hard instance, where $X_{g_1}=1$, and $X_{g_2}=1/\epsilon$ with probability $\epsilon$, and $0$ otherwise. We choose the parameters such that $b\to0, Nb\to\infty, \epsilon\to0$, and $r$ will be optimized later.
    
    The arrival order is $1,2,\ldots,N,g_1,g_2$. Each item $i\in[N]$ forms a singleton partition, and $\{g_1,g_2\}$ forms the last partition. Each partition has local rank $1$ and the global rank is $2$.

    \paragraph{Prophet value.} Subject to the constraints, the prophet takes the better of $g_1,g_2$, which gives expected value $2-\epsilon$.  It also obtains value $r$ from the first $N$ items whenever at least one of them realizes value $r$.  Therefore
    \[
    \E[\opt] = 2-\epsilon+r(1-(1-b)^N) ~,
    \]
    which approaches $2+r$ as $b\to0$, $Nb\to\infty$, and $\epsilon\to0$.
    
    \paragraph{Non-adaptive algorithms.}  Essentially, a deterministic non-adaptive algorithm only needs to decide for how many items in $[N]$ it should set a positive threshold. Let $\alg_M$ denote the algorithm that sets $M$ positive thresholds, and let $Z\sim\operatorname{Bin}(M,b)$ be the number of items among the $M$ chosen items that realize value $r$. The algorithm gains value $r\min\{Z,2\}$ from $[N]$, and only when $Z\leq 1$ the algorithm can in addition gain expected value $1$ from $\{g_1,g_2\}$. Therefore, for each $M$, we have
    \[
    \E[\alg_M] = r\E[\min\{Z,2\}] + \Pr[Z\le1] = 2r+(1-2r)(1-b)^M+(1-r)Mb(1-b)^{M-1} ~.
    \]
    
    Recall that we take $b\to0$. Suppose $Mb\to\lambda$ for some constant $\lambda\ge0$, then $(1-b)^M\to e^{-\lambda}$, and $Mb(1-b)^{M-1}\to \lambda e^{-\lambda}$, hence
    \[
    \E[\alg_M]\to 2r+e^{-\lambda}\bigl((1-2r)+(1-r)\lambda\bigr) := V_r(\lambda) ~.
    \]
    
    The optimal algorithm chooses $M$ such that $V_r(\lambda)$ is maximized. Differentiating $V_r(\lambda)$ gives
    \[
    \frac{d}{d\lambda}\left[2r + e^{-\lambda}\bigl((1-2r)+(1-r)\lambda\bigr)\right] = e^{-\lambda}(r-(1-r)\lambda) ~,
    \]
    thus the maximum is attained at $\lambda^*=\frac{r}{1-r}$, and $V_r(\lambda^*) = 2r+(1-r)\exp\left(-\frac{r}{1-r}\right)$.
    
    Therefore, for each $r$, the instance implies an asymptotic lower-bound ratio of
    \[
    R(r)=\frac{2+r}{2r+(1-r)\exp(-r/(1-r))} ~.
    \]
    It remains to optimize $r$. Differentiating $R(r)$ gives the first-order optimality condition
    \[
    (5-2r)\exp\left(-\frac{r}{1-r}\right)=4(1-r) ~.
    \]
    This equation has a unique solution $r^*\approx 0.3211$ in $(0,1)$, and gives $R(r^*) \approx 2.179$.
\end{proof}

\begin{remark}
    This construction can be converted to a transversal matroid instance that implies the same lower bound of $2.179$: consider a bipartite graph where the left side consists of $N+2$ vertices $[N]\cup \{g_1,g_2\}$, and the right side consists of two vertices $\{h_1,h_2\}$. Every vertex $i\in[N]$ has an edge to both $h_1$ and $h_2$, while $g_1,g_2$ each has one edge to $h_2$. It can be verified that this is equivalent to the original truncated partition matroid. 
\end{remark}

\subsection{Tight Upper Bound for Truncated \texorpdfstring{$R_1$}{R1}-Partition Matroids}
\label{sec:truncated_R1_partition_upper_bound}

We settle the cost of non-adaptivity in the special case of truncated $R_1$-partition matroids. We relax the prophet benchmark via the standard ex-ante relaxation, then analyze a non-adaptive OCRS-style process. The resulting upper bound exactly matches the lower bound in Theorem~\ref{thm:truncated_lower_bound}. 

\paragraph{Ex-ante relaxation.}
For each item $i$, define its expected top-quantile value $\nu_i:[0,1]\to\mathbb R_{\ge0}$ by $\nu_i(q):=\E[X_i\cdot \mathbf 1\{X_i\ge \tau_i(q)\}]$.
The function $\nu_i$ is non-decreasing and concave.

The matroid polytope of a matroid $\mathcal{M}=(N,\mathcal{I})$ is the convex hull of the indicator vectors of the independent sets of $\mathcal{M}$, formally $\mathcal{P}(\mathcal{M})=\mathrm{conv}(\{e_I\mid I\in \mathcal{I}\})$.  The following ex-ante relaxation to the matroid polytope upper bounds the expected value of the prophet:
\begin{equation}
\label{eq:exante_relaxation}
    \E[\opt]\le \max_{x\in P(\mathcal M)}\sum_i \nu_i(x_i) ~.
\end{equation}
Indeed, suppose the prophet selects item $i$ with probability $x'_i$, then $x'\in P(\mathcal{M})$ and the expected value from $i$ is at most $\nu_i(x'_i)$.

\paragraph{Non-adaptive OCRS.} We use the following algorithmic framework. Fix an ex-ante solution $x\in P(\mathcal M)$ that maximizes~\eqref{eq:exante_relaxation} and a parameter $\alpha \in (0,1)$. Choose the threshold for each item in arrival order. For item $i$, let $F_i$ be the probability that item $i$ is feasible when it arrives, under the thresholds already fixed for previous items. Set the threshold of item $i$ to $\tau_i\left(\frac{\alpha x_i}{F_i}\right)$. \footnote{Note that the algorithm is similar to the one used in~\citet{sigecom/EzraFGT20} for matching with edge arrival. In their paper, they say that their algorithm is adaptive, since they compute $F_i$ when $i$ arrives, and thus the analysis also holds if the arrival order is not known in advance. In this paper, we always assume the arrival order is known, so $F_i$ can be computed offline in advance.}

In other words, conditioned on item $i$ remaining feasible (i.e., after including $i$ the accepted subset remains independent) when it arrives, the algorithm accepts it with probability $\frac{\alpha x_i}{F_i}$. Thus, item $i$ is accepted with marginal probability $\alpha x_i$. Moreover, by concavity of $\nu_i$, the expected value from $i$ is $F_i\nu_i\left(\frac{\alpha x_i}{F_i}\right) \geq F_i \cdot \frac{\alpha}{F_i}\nu_i(x_i) =\alpha\nu_i(x_i)$. Therefore, if for any $x\in P(\mathcal{M})$, we have $F_i\ge\alpha$ for every item $i$, then the algorithm is well-defined and is $1/\alpha$-competitive, since
\[
    \E[\alg]\ge \alpha\sum_i \nu_i(x_i)\ge \alpha\E[\opt] ~.
\]

\begin{theorem}
\label{thm:truncated-R1-partition}
There is a non-adaptive $\approx 2.179$-competitive algorithm for truncated $R_1$-partition matroid prophet inequalities.
\end{theorem}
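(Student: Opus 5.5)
Within the non-adaptive OCRS framework above, it suffices to show $F_i\ge\alpha$ for every item $i$ and every $x\in P(\mathcal M)$, with $1/\alpha\approx 2.179$ and $\alpha=2r^*+(1-r^*)e^{-r^*/(1-r^*)}$ over $2+r^*$ (the value from Theorem~\ref{thm:truncated_lower_bound}). Fix item $i$ in partition $B$. Item $i$ is feasible iff two events hold: no earlier item of $B$ was accepted, and fewer than $k$ items were accepted overall.

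Because thresholds are fixed and values independent, acceptances of different items are independent events. Item $j$ is accepted with marginal probability $\alpha x_j$, but the indicator that $j$ is above its threshold has probability $\alpha x_j/F_j$. For the local constraint, the event that no earlier item of $B$ was accepted has probability exactly $1-\alpha\sum_{j\in B,j<i}x_j\ge 1-\alpha x(B)+\alpha x_i$, since at most one item per partition is accepted and these events are disjoint. For the global constraint, I will introduce per-partition acceptance indicators $Y_B$, which are independent across partitions with $\Pr[Y_B=1]=\alpha x(B\cap\text{before})$. I then need to lower bound
\[
\Pr\Bigl[\text{no earlier acceptance in } B \text{ and } \textstyle\sum_{B'\ne B}Y_{B'}\le k-1\Bigr]
=\Pr[\text{local}]\cdot\Pr\Bigl[\textstyle\sum_{B'\neq B} Y_{B'}\le k-1\Bigr].
\]
Here the sum is a Poisson binomial variable with mean at most $\alpha(k-x(B))$, using $\sum_B x(B)\le k$ and $x(B)\le 1$. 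Writing $p=\alpha x(B)$ for the local mass, I get
\[
F_i\ge (1-p)\cdot \min \Pr[\mathrm{PB}\le k-1],
\]
where the minimum is over Poisson binomials with mean $\le \alpha k - p$ and each parameter in $[0,\alpha]$.

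The main obstacle is the extremal problem over Poisson binomial distributions with mean $\mu\le \alpha k-p$. I would first invoke Hoeffding's classical result: for a threshold at least the mean, $\Pr[S\le k-1]$ is minimized by the most "spread" configuration. Within the Hoeffding extremal family the worst case is a Poisson limit (many tiny parameters) combined with possibly some parameters at the cap $\alpha$. I expect to reduce to $k$ small, and to show that $k=2$ is the worst case, matching the hard instance where one slot is used by the important partition. That instance had $p=\alpha\cdot 1$ and a Poisson mass of mean $\lambda=\alpha(k-1)$. I would then verify monotonicity in $k$ of $\min_\mu (1-p)\Pr[\pois(\mu)\le k-1]$, possibly via a coupling showing that larger $k$ with proportionally larger mean only helps. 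This reduces the problem to a one-dimensional inequality in $p$ and $\alpha$ for $k=2$: $(1-p)\,e^{-\lambda}(1+\lambda)\ge\alpha$ with $\lambda=2\alpha-p$. I will check at the boundary $p=\alpha$.

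This is not tight as stated, since the bound $(1-\alpha)e^{-\alpha}(1+\alpha)\ge\alpha$ gives $1/\alpha$ near $2.5$. So the analysis must be sharpened, and I would use the concavity gain in $F_i\nu_i(\alpha x_i/F_i)$ rather than the crude bound $F_i\ge\alpha$. I would set per-item acceptance probabilities $q_i$ chosen so that the total reward is compared against $\sum\nu_i(x_i)$ via an LP/duality over worst-case shapes. The worst case should mirror the instance: a Poisson stream of rate $\lambda^*=r/(1-r)$ of low-value items, followed by one rank-one partition with its two-point structure. I would make this rigorous by a factor-revealing program whose optimum is $R(r^*)$.
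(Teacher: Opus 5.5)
\textbf{There is a genuine gap.} Your bound $F_i\ge(1-p)\cdot\min\Pr[\mathrm{PB}\le k-1]$ rests on two independence claims that are false, and both err in the direction of \emph{overestimating} $F_i$.

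First, acceptances are not independent across items, and your per-partition acceptance indicators $Y_{B'}$ are not independent across partitions: each acceptance consumes a global slot and can block later items. What is independent is \emph{activity} (being above threshold). But item $j$ is active with probability $\alpha x_j/F_j>\alpha x_j$. So the Poisson binomial $D$ counting external partitions with an active earlier item has mean strictly larger than $\alpha o$ (where $o$ is the earlier external mass), and its parameters need not lie in $[0,\alpha]$.

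Second, the factorization into $\Pr[\text{local}]\cdot\Pr[\text{global}]$ is unjustified. Let $a=\sum_{j\in B,\,j<i}x_j$ and let $Z$ be the number of accepted earlier external items. The failure events ``some earlier item of $B$ was accepted'' and ``$Z=k$'' are disjoint, since once a $B$-item is accepted at most $k-1$ external items fit. Hence in fact $F_i=1-\alpha a-\Pr[Z=k]$, which is smaller than $(1-\alpha a)(1-\Pr[Z=k])$ by $\alpha a\Pr[Z=k]$.

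You can check that the resulting bound is invalid. The inequality $(1-\alpha)(1+\alpha)e^{-\alpha}\ge\alpha$ holds for all $\alpha\le 0.478$ (your ``near $2.5$'' is an arithmetic slip). It would therefore certify a ratio of about $2.09$, contradicting Theorem~\ref{thm:truncated_lower_bound}. Concretely, on that hard instance with $\alpha=0.478$, the Bernoulli items have activity count $D\approx\pois(\lambda)$ with capped mean $\E[\min\{D,2\}]=\alpha$, i.e.\ $\lambda\approx0.494$. Then $\Pr[Z=2]\approx0.088>1-2\alpha$, so $F_{g_2}<\alpha$.

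\textbf{The missing idea} is how to bound $\Pr[Z=k]$ when only the \emph{capped} quantity $\E[Z]=\alpha o$ is known, not $\E[D]$. The paper writes $Z=\min\{D,k-1\}+\mathbf 1_{\{Z=k\}}$. It then proves that for any Poisson binomial $X$, $\Pr[X\ge k]\le\zeta_k(\E[\min\{X,k\}])$: given the capped mean, the tail is maximized in the Poisson limit. The proof splits a Bernoulli trial into two smaller ones so that $\E[\min\{X,k\}]$ is preserved and $\Pr[X\ge k]$ weakly increases, using log-concavity of Poisson binomial laws. Combined with $\zeta_k(m)-m$ being nonincreasing, this gives $\Pr[Z=k]\le\zeta_k(\alpha o)$. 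Hoeffding's fixed-mean comparison cannot replace this step, since $\E[D]$ is uncontrolled.

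From there, $F_i\ge1-\alpha a-\zeta_k(\alpha(k-a))$, which is minimized at $a=1$ because $\zeta_k'\le1$. For $k=2$ the requirement is exactly $\zeta_2(\alpha)\le1-2\alpha$, which holds with equality at $\alpha^*=1/R(r^*)$. The case $k\ge3$ follows from a direct Poisson tail estimate.

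So your instinct that $k=2$, $a=1$ is the binding case is right, but the inequality needed there is not the one you wrote. Your fallback, extracting a concavity gain from $\nu_i$ through an unspecified factor-revealing program, is neither carried out nor needed: the plain guarantee $F_i\ge\alpha^*$ is already tight.
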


The proof relies on the following technical tools. After relating the event that an item becomes infeasible to a Poisson binomial variable, we apply Lemma~\ref{lem:capped_poisson_binomial} to upper bound its probability. Then Lemma~\ref{lem:alpha_tight_all_k} justifies why the tight ratio exactly matches the lower bound ratio from Theorem~\ref{thm:truncated_lower_bound}.

\begin{definition}
\label{def:zeta}
For $\lambda\geq 0$, let $P_\lambda\sim\pois(\lambda)$, and define
\[
    \mu_k(\lambda):=\E[\min\{P_\lambda,k\}].
\]
For $m\in[0,k)$, let $\lambda_k(m)$ be the unique value satisfying $\mu_k(\lambda_k(m))=m$, and set $\lambda_k(k)=\infty$. Define
\[
    \zeta_k(m):=\Pr[P_{\lambda_k(m)}\geq k].
\]
Equivalently, if $\mu_k(\lambda)=m$, then $\zeta_k(m)=\Pr[P_\lambda\geq k]$.
\end{definition}

\begin{lemma}
\label{lem:capped_poisson_binomial}
    Let $X$ be a Poisson binomial random variable, and suppose a random variable $Z$ and an event $\mathcal E$ satisfy $\mathcal E \subseteq \{X\geq k\}$ and $Z=\min\{X,k-1\}+\mathbf 1_{\mathcal E}$, then $\Pr[\mathcal E]\leq \zeta_k(\E[Z])$.
\end{lemma}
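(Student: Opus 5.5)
The plan is to split the claim into a statement about $X$ alone and a monotonicity property of $\zeta_k$. Write $a:=\E[\min\{X,k-1\}]$, $q:=\Pr[X\ge k]$ and $p:=\Pr[\mathcal E]$. Since $\mathcal E\subseteq\{X\ge k\}$, we have $p\le q$ and $\E[Z]=a+p$. Note that $a+q=\E[\min\{X,k\}]$, and that for a Poisson variable $\mu_k(\lambda)=\E[\min\{P_\lambda,k-1\}]+\Pr[P_\lambda\ge k]$. I would prove two claims:
\begin{enumerate}
\item[(i)] $q\le \zeta_k(a+q)$, i.e.\ among Poisson binomial variables with a given capped mean $\E[\min\{X,k\}]=m$, the tail $\Pr[X\ge k]$ is at most the Poisson value $\zeta_k(m)$;
\item[(ii)] the map $p\mapsto p-\zeta_k(a+p)$ is non-decreasing.
\end{enumerate}
Together these give $p-\zeta_k(a+p)\le q-\zeta_k(a+q)\le 0$, which is exactly the lemma.

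\emph{Step (ii).} Parametrize by $\lambda$. We have $\frac{d}{d\lambda}\Pr[P_\lambda\ge k]=\Pr[P_\lambda=k-1]$ and $\frac{d}{d\lambda}\mu_k(\lambda)=\Pr[P_\lambda\le k-1]$. Hence
\[
\zeta_k'(m)=\frac{\Pr[P_\lambda=k-1]}{\Pr[P_\lambda\le k-1]}\le 1,
\]
with equality only when $k=1$, where $\zeta_1(m)=m$. So $m\mapsto m-\zeta_k(m)$ is non-decreasing, and (ii) follows. This step is routine.

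\emph{Step (i), the main obstacle.} Pick $\lambda$ with $\Pr[P_\lambda\ge k]=q$. Since $\mu_k$ is increasing and $\zeta_k(\mu_k(\lambda))=\Pr[P_\lambda\ge k]$, it suffices to show $\E[\min\{X,k\}]\ge \mu_k(\lambda)$. Writing both sides as $\sum_{j=1}^{k}\Pr[\cdot\ge j]$, it suffices to show $\Pr[X\ge j]\ge \Pr[P_\lambda\ge j]$ for all $j<k$. In words: once the tails agree at level $k$, the Poisson binomial dominates the Poisson at every lower level.

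I would first try an ultra-log-concavity argument. A Poisson binomial pmf is ultra-log-concave, meaning $j\mapsto \Pr[X=j]/\Pr[P_\lambda=j]$ is log-concave. Any $\lambda$ works here; fix the one chosen above. Then the ratio is unimodal, so the pmfs cross at most twice, in the pattern Poisson heavier at $0$, lighter in the middle, heavier at the far tail. Combined with total mass $1$ and the equality of the tails at $k$, this sign pattern should force $\Pr[X\ge j]-\Pr[P_\lambda\ge j]\ge 0$ for $j\le k$, by considering partial sums from below and from above. The delicate point is ruling out the case where the second crossing lies below $k$ while the first tail difference is negative. If that case analysis does not close, the fallback is a smoothing argument: repeatedly split a Bernoulli($p_i$) into two Bernoulli($p_i/2$) variables (converging to Poisson), and show each split can be paired with a rescaling of the parameters that keeps $\E[\min\{X,k\}]$ fixed while not decreasing $\Pr[X\ge k]$, via a two-variable exchange computation conditioned on the remaining variables.
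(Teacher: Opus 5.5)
\paragraph{Verdict.} Your proposal is correct and proves the key step by a genuinely different route from the paper; the case you flag as delicate cannot occur.

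\paragraph{Your reduction.} It matches the paper's. Claim (i) is the paper's Lemma~\ref{lem:bernoulli_splitting}, $T_k(X)\le\zeta_k(M_k(X))$. Claim (ii) is the monotonicity of $\zeta_k(m)-m$ from Lemma~\ref{lem:zeta_property}. The paper combines them in the same way, using $\E[Z]\le M_k(X)$.

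\paragraph{Closing your step (i).} Your ultra-log-concavity route works as follows. Put $d_j=\Pr[X=j]-\Pr[P_\lambda=j]$ and
\[
C_j=\sum_{i<j}d_i=\Pr[P_\lambda\ge j]-\Pr[X\ge j].
\]
Then $C_0=0$, $C_k=0$ by the choice of $\lambda$, and $C_j\to 0$.
\begin{enumerate}
\item For $0<q<1$ we have $0<\lambda<\infty$, so the Poisson pmf is strictly positive.
\item By ultra-log-concavity, $\{j: d_j\ge 0\}$ is an interval $[L,U]$. It is nonempty because both pmfs sum to $1$, and $d_j<0$ strictly outside it.
\item For every $j>U$ this gives $C_j=-\sum_{i\ge j}d_i>0$. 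Hence $C_k=0$ forces $k\le U$, which rules out your delicate case.
\item On $\{0,\dots,U+1\}$, $C$ is decreasing up to $L$ and nondecreasing afterwards. With $C_0=C_k=0$ this gives $C_j\le 0$, i.e.\ $\Pr[X\ge j]\ge\Pr[P_\lambda\ge j]$, for all $j\le k$.
\end{enumerate}
The cases $q\in\{0,1\}$ are trivial.

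\paragraph{Comparison with the paper.} The paper never compares $X$ with a Poisson variable directly.
\begin{enumerate}
\item It replaces one $\ber(p)$ by two independent $\ber(q)$, with $q$ solving $2qL_{k-1}-q^2s_{k-1}=pL_{k-1}$, so that $M_k$ is preserved exactly. This is simpler than your fallback of halving and then rescaling.
\item The tail then changes by $\frac{q^2}{L_{k-1}}(s_{k-2}L_{k-1}-s_{k-1}L_{k-2})\ge 0$. This needs only ordinary log-concavity of the remaining Poisson binomial (Hoggar).
\item It then iterates and passes to a limit: the maximal Bernoulli probability tends to $0$, the means stay bounded by Chebyshev, and the law of small numbers applies along a subsequence.
\end{enumerate}

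Your argument is finite and limit-free, and it proves more: tail domination at every level $j\le k$, not just the capped-mean inequality. The price is the stronger input of ultra-log-concavity (Newton's inequalities) in place of plain log-concavity.
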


\begin{lemma}
\label{lem:alpha_tight_all_k}
Let $R(r)=\frac{2+r}{2r+(1-r)\exp(-r/(1-r))}$ be the lower bound ratio from Theorem~\ref{thm:truncated_lower_bound}, and $r^*$ be the unique maximizer of $R(r)$ on $(0,1)$. Set $\alpha^*:=1/R(r^*)$, then for every $k\geq 2$ and every $a\in[0,1]$, we have
\[
    G_k(a) = 1-\alpha^*a-\zeta_k(\alpha^*(k-a))\geq \alpha^*.
\]
\end{lemma}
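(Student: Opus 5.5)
The plan is to reduce to $k=2$, solve that case in closed form with a Poisson parameter, and match it to the first-order condition defining $r^*$.

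\textbf{Step 1: monotonicity of $\zeta_k$.} Write $m=\mu_k(\lambda)$. Then $\mu_k$ is continuous and strictly increasing in $\lambda$, and so is $\Pr[P_\lambda\ge k]$. Hence $\zeta_k$ is non-decreasing in $m$. Consequently, for fixed $k$, the map $a\mapsto G_k(a)$ has two competing terms: the term $-\alpha^*a$ decreases in $a$, while $-\zeta_k(\alpha^*(k-a))$ increases in $a$.

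\textbf{Step 2: reduction to $k=2$.} Write $\alpha^*(k-a)=k-s$ with slack $s=k(1-\alpha^*)+\alpha^*a$. For fixed $a$, the slack grows linearly in $k$. I would prove the auxiliary claim that for every fixed slack $s\ge 0$, the quantity $\zeta_k(k-s)$ is non-increasing in $k$. Combined with $\zeta_k$ being non-decreasing in $m$, this gives
\[
\zeta_k\bigl(k-s_k\bigr)\le \zeta_k\bigl(k-s_2\bigr)\le \zeta_2\bigl(2-s_2\bigr),
\]
so $G_k(a)\ge G_2(a)$.

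To prove the claim, use $k-\mu_k(\lambda)=\E[(k-P_\lambda)^+]$. The claim then becomes: if $\E[(k-P_\lambda)^+]=\E[(k+1-P_{\lambda'})^+]=s$, then $\Pr[P_{\lambda'}\ge k+1]\le \Pr[P_\lambda\ge k]$. Here $\lambda'>\lambda$ is forced. I would try a direct argument via the Poisson recursions $\frac{d}{d\lambda}\E[(k-P_\lambda)^+]=-\Pr[P_\lambda\le k-1]$ and $\frac{d}{d\lambda}\Pr[P_\lambda\ge k]=\Pr[P_\lambda=k-1]$. These let me parametrize both curves $s\mapsto \zeta$ and compare their slopes $d\zeta/ds=-\Pr[P=k-1]/\Pr[P\le k-1]$, i.e.\ compare Poisson hazard-type ratios across $k$. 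Both curves start at $\zeta=1$ when $s=0$. This step is the main obstacle. If the slope comparison is messy, my fallback is a coupling/majorization argument: $P_{\lambda'}$ restricted to the level $k+1$ is more concentrated in the relevant sense.

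\textbf{Step 3: the case $k=2$ in closed form.} Let $\lambda$ satisfy $\mu_2(\lambda)=2-2e^{-\lambda}-\lambda e^{-\lambda}=\alpha^*(2-a)$. Then $\zeta_2=1-e^{-\lambda}(1+\lambda)$. Eliminating $\alpha^*a=2\alpha^*-2+e^{-\lambda}(2+\lambda)$ gives
\[
G_2(a)-\alpha^* = 2-3\alpha^*-e^{-\lambda}.
\]
This is increasing in $\lambda$, and $\lambda$ is decreasing in $a$. So the worst case is $a=1$, where $\lambda=\hat\lambda$ is defined by $\mu_2(\hat\lambda)=\alpha^*$. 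It therefore suffices to show $e^{-\hat\lambda}\le 2-3\alpha^*$.

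\textbf{Step 4: matching the lower-bound instance.} I expect the inequality in Step 3 to hold with equality at $\hat\lambda=\lambda^*=r^*/(1-r^*)$. This explains tightness: it is exactly the structure of the hard instance. Set $\lambda=r/(1-r)$ and $\alpha^*=\bigl(2r+(1-r)e^{-\lambda}\bigr)/(2+r)$. I will verify the two identities
\[
e^{-\lambda}=2-3\alpha^* \qquad\text{and}\qquad 2-(2+\lambda)e^{-\lambda}=\alpha^*.
\]
Each should reduce algebraically to the first-order condition $(5-2r)e^{-r/(1-r)}=4(1-r)$ from Theorem~\ref{thm:truncated_lower_bound}. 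Since $\mu_2$ is strictly increasing, this pins down $\hat\lambda=\lambda^*$ and gives $G_2(1)=\alpha^*$, hence $G_2(a)\ge\alpha^*$ for all $a\in[0,1]$. Step 2 then yields $G_k(a)\ge\alpha^*$ for every $k\ge2$.
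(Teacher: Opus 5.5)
\textbf{The gap is Step 2, which is the only part of your plan covering $k\ge 3$.} Your Steps 3--4 for $k=2$ are correct and amount to the paper's Case 1. Eliminating $\alpha^* a$ gives $G_2(a)-\alpha^*=2-3\alpha^*-e^{-\lambda}$, and the condition $(5+3\lambda^*)e^{-\lambda^*}=4$ gives both $e^{-\lambda^*}=2-3\alpha^*$ and $\mu_2(\lambda^*)=\alpha^*$.

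Step 2 relies on the claim that $\zeta_k(k-s)$ is non-increasing in $k$ for fixed slack $s$. That claim is false, and in fact the monotonicity runs the other way. Take the slack you need at $a=1$, namely $s_2=2-\alpha^*\approx 1.541$. On one side, $\zeta_2(\alpha^*)=1-2\alpha^*\approx 0.082$. On the other side, $\mu_3(\lambda')=1+\alpha^*\approx 1.459$ forces $\lambda'\approx 1.56$, so $\zeta_3(1+\alpha^*)=\Pr[P_{1.56}\ge 3]\approx 0.21$. At $a=0$ (slack $\approx 1.08$) the comparison is $\approx 0.28$ for $k=2$ against $\approx 0.38$ for $k=3$. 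Conceptually, the Poisson spread near $k$ is of order $\sqrt{k}$, so holding $\E[(k-P_\lambda)^+]=s$ fixed as $k$ grows forces $\Pr[P_\lambda\le k-1]\to 0$, i.e.\ $\zeta_k(k-s)\to 1$. No slope comparison of hazard ratios and no coupling can prove this step.

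The root cause is your first inequality $\zeta_k(k-s_k)\le \zeta_k(k-s_2)$. It discards exactly what makes the lemma true: the slack $s_k=k(1-\alpha^*)+\alpha^* a$ grows linearly in $k$. Equivalently, the argument $\alpha^*(k-a)$ is a fixed fraction ($\approx 0.46$) of $k$, so the matching Poisson parameter is about $k/2$ and the tail at level $k$ is small.

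The paper uses this directly, in two steps:
\begin{itemize}
\item It first reduces to $a=1$ for every $k$, using $\zeta_k'(m)=\Pr[P_\lambda=k-1]/\Pr[P_\lambda\le k-1]\le 1$, so that $G_k'\le 0$. You only obtain the reduction to $a=1$ for $k=2$, via the closed form.
\item For $k\ge 3$ it sets $\lambda_0=(k-1)/2$ and shows $\mu_k(\lambda_0)\ge \lambda_0(1-\tau_k)$ with $\tau_k=\Pr[P_{\lambda_0}\ge k]$. It then bounds $\tau_k$ by a geometric series $U_k$ that decreases in $k$, with $U_3=2/(9e)\approx 0.0818<1-2\alpha^*\approx 0.0822$.
\end{itemize}
The margin at $k=3$ is thin: the true value is $\zeta_3(2\alpha^*)\approx 0.069$. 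Whatever you substitute for Step 2 must therefore keep the linear growth of the slack and still be reasonably sharp.
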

These two lemmas will be restated and proved in Appendix~\ref{sec:appendix_truncated_deferred_proofs}.

\begin{proofof}{Theorem~\ref{thm:truncated-R1-partition}}
Let $k$ denote the global rank. If $k=1$ then the instance is equivalent to single-choice prophet inequality, and there is nothing to prove. Now assume $k\geq 2$. Fix any $x\in P(\mathcal{M})$ and run the non-adaptive OCRS procedure with $\alpha=\alpha^*=1/R(r^*)$. It suffices to prove inductively that $F_i\ge\alpha^*$ for every item $i$. 

Fix $i$ and assume $F_j\geq \alpha^*$ for all $j<i$, so each previous item $j$ is accepted with probability $\alpha^* x_j$. Suppose item $i$ belongs to partition $B$. We refer to other partitions and items in those partitions as ``external''. Define
\[
    a:=\sum_{j<i,\ j\in B}x_j, \qquad o:=\sum_{j<i,\ j\notin B}x_j.
\]
Since $x\in P(\mathcal M)$, we have $a\leq 1$ and $a+o\leq k$.

Item $i$ can become infeasible if one of the following two disjoint events occurs:
\begin{enumerate}[leftmargin=*]
    \item Some item $j\in B$ that arrives before $i$ is accepted. By inductive hypothesis, for each $j\in B,j<i$ this happens with probability $\alpha^* x_j$, and this cannot simultaneously happen for two items in $B$. Thus the total probability of this event is exactly $\alpha^* a$.
    \item Exactly $k$ external items that arrive before $i$ are accepted. Let $Z$ be the number of accepted external items before item $i$ arrives, then $\E[Z]=\alpha^* o$. We can obtain a strong upper bound on the probability that $Z=k$, since $Z$ is not concentrated around $Z=0$ and $Z=k$, but is related to a Poisson binomial variable in the following way. 
    
    For each external partition $C\neq B$, let $Y_C$ be the indicator variable that at least one previous item in $C$ realizes above its thresholds (regardless of whether the global rank is full), and $D:=\sum_{C\ne B}Y_C$, then $D$ is a Poisson binomial variable. If $D\leq k-1$ then at most $k-1$ external elements can be accepted, so we have $Z=\min\{D,k-1\} + \mathbf 1_{Z=k}$. Combining this relationship and $\E[Z]=\alpha^* o$ we get $\Pr[Z=k]\le \zeta_k(\alpha^* o)$ by Lemma~\ref{lem:capped_poisson_binomial}.
\end{enumerate}

Therefore, by union bound, we can lower bound the probability that $i$ remains feasible:
\[
    F_i \geq 1-\alpha^* a-\zeta_k(\alpha^* o) \geq 1-\alpha^* a-\zeta_k(\alpha^*(k-a)) \geq \alpha^* ~.
\]
The second inequality follows from monotonicity of $\zeta_k$, and the last inequality is by Lemma~\ref{lem:alpha_tight_all_k}.
\end{proofof}

\subsection{Improved Bounds for Laminar Matroids}

For laminar matroids, we can further recursively extend the basic construction of Theorem~\ref{thm:truncated_lower_bound} into a general laminar matroid and obtain an improved lower bound. The details are deferred to Appendix~\ref{sec:appendix_laminar_lower_bound}.

\begin{theorem}
\label{thm:laminar_lower_bound}
There exists a laminar matroid prophet inequality instance such that every non-adaptive algorithm has competitive ratio at least $2.217$.
\end{theorem}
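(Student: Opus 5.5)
We will recursively extend the construction of Theorem~\ref{thm:truncated_lower_bound}. There, a block of $N$ low-value Bernoulli items of value $r$ competes for a global slot with a final partition that embeds the two-variable hard instance. The key observation is that the ``important'' part $\{g_1,g_2\}$ only needs to behave like a gadget with two properties: OPT extracts value $2$ from it, and any non-adaptive algorithm extracts at most $1$ from it when given one free slot. We will replace this gadget by the whole truncated instance from Theorem~\ref{thm:truncated_lower_bound}, suitably rescaled and nested inside a new laminar level.

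\textbf{Construction.} Let $\mathcal{G}_0=\{g_1,g_2\}$ with rank $1$. Given a gadget $\mathcal G_{t}$ that is a laminar instance with top rank $k_t$, build $\mathcal G_{t+1}$ as follows. First place $N$ singleton Bernoulli items with value $r_{t+1}$ and probability $b$ (with $b\to0$, $Nb\to\infty$). Then place $\mathcal G_t$. Enclose everything in a new laminar set of rank $k_t+1$, so that the outer laminar set is strictly larger than the inner one. The arrival order puts the Bernoulli items first. For two levels, this gives global rank $3$, inner rank $2$, and innermost rank $1$ for $\{g_1,g_2\}$.

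\textbf{Analysis.} Let $O_t=\E[\opt(\mathcal G_t)]$ and let $A_t(j)$ be the best value a non-adaptive algorithm gets from $\mathcal G_t$ when $j$ slots of the enclosing constraint remain free. Then
\[
O_{t+1}=O_t+r_{t+1}.
\]
As in the proof of Theorem~\ref{thm:truncated_lower_bound}, a non-adaptive algorithm only chooses $\lambda$, the expected number of Bernoulli items with positive threshold. The number $Z$ of such items that realize is Poisson with mean $\lambda$. The algorithm therefore earns
\[
V=\E\bigl[r_{t+1}\min\{Z,k_{t+1}\}+A_t(k_{t+1}-Z)\bigr].
\]
The inner gadget's free slots are capped by its own rank $k_t$, so when $Z=0$ the gadget still gets only $A_t(k_t)$. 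The Bernoulli items, however, can use all $k_{t+1}$ slots. This is the amplification: accepting one Bernoulli item is free relative to the gadget, but further acceptances steal gadget slots.

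We then compute $\max_\lambda V$ in closed form. This is a finite sum of terms of the form $e^{-\lambda}\cdot\mathrm{poly}(\lambda)$ over the cases $Z=0,1,\dots$. We then numerically optimize the parameters $r_1,r_2$ (and possibly a third level) to maximize the ratio $O/A$. We expect two levels to already push the ratio from $2.179$ to about $2.217$. We will also check that adding more levels gives diminishing or no improvement, and then fix the best finite instance.

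\textbf{Main obstacle.} The hardest step is justifying that the inner gadget's value $A_t(j)$ is well defined as a function of $j$ alone. Thresholds are fixed in advance and cannot react to how many outer slots were consumed, so the algorithm must commit to a single threshold vector inside $\mathcal G_t$. Its value is then an expectation over $j$, and the value is not separable per $j$. We therefore need to maximize jointly over all threshold choices. We plan to argue that, by symmetry and the limit $b\to0$, the inner thresholds reduce to a few scalar choices: the inner $\lambda'$ and whether to threshold $g_1$. The joint optimum then becomes a low-dimensional optimization, which we will bound analytically or by a careful numerical grid with an error margin to certify the constant $2.217$.
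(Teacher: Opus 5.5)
Your construction is the paper's: a Bernoulli block $B_\ell$ arriving first, with nested constraints of rank $\ell+1$. Your formula for $V$ is the paper's recurrence $\alg_\ell(s)=\sup_{\lambda}\{r_\ell\E[\min\{Z_\lambda,s\}]+\E[\alg_{\ell-1}(\min\{\ell,(s-Z_\lambda)^+\})]\}$. Two small fixes: write it for every residual capacity $s\le k_{t+1}$, not only $s=k_{t+1}$, and note that $O_{t+1}=O_t+r_{t+1}$ needs $r_{t+1}<r_t<1$.

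You differ from the paper on what you call the main obstacle, and the paper simply sidesteps it. $A_t(j)$ \emph{is} well defined as the best value for a fixed capacity $j$; what fails is only achievability. The outer Bernoulli items are independent of the inner gadget, so any fixed inner threshold vector $T$ with value $v(T,j)$ at capacity $j$ earns $\E_J[v(T,J)]\le\E_J[A_t(J)]$. Hence the recursion upper-bounds every non-adaptive algorithm, in effect letting inner thresholds see the realized capacity. That is all a lower bound needs, and it reduces the analysis to one-dimensional suprema.

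Your joint optimization over $(\lambda_1,\dots,\lambda_L)$ is also valid, provided you justify the on/off reduction. One way: with fixed thresholds, an active item is accepted iff it is not spanned by earlier active items, so making a zero-valued item active can only shrink the later accepted set. The joint bound is tighter, and this matters. Under the relaxation, two levels reach only about $2.21$. That is why the paper uses three levels, with $(r_1,r_2,r_3)=(0.34,0.10,0.02)$, $\opt\to2.46$ and $\alg\le1.1092$.

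For your exact two-level objective, with $Z_i\sim\pois(\lambda_i)$,
\[
F(\lambda_1,\lambda_2)=r_2\E[\min\{Z_2,3\}]+\Pr[Z_2\le1]\,W_2(\lambda_1)+\Pr[Z_2=2]\,W_1(\lambda_1),
\]
where $W_2(\lambda)=2r_1+e^{-\lambda}\bigl((1-2r_1)+(1-r_1)\lambda\bigr)$ and $W_1(\lambda)=r_1+(1-r_1)e^{-\lambda}$. For fixed $\lambda_2$, the inner optimum is $\lambda_1=\max\{0,\frac{r_1}{1-r_1}-\frac{\Pr[Z_2=2]}{\Pr[Z_2\le1]}\}$. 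My own quick calculation (not a figure from the paper) at $(r_1,r_2)=(0.34,0.10)$ gives $\max F\approx1.089$, a ratio near $2.24$. So your two-level guess holds only because you avoid the relaxation. The price is a genuinely multi-dimensional optimization that must be certified numerically.
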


Our tight upper bound analysis for truncated $R_1$-partition matroids cannot be directly extended to more complex capacity constraints. For general truncated partition matroids, where local ranks can be larger than $1$, we have a coarser analysis that simply upper bounds the probability of the two events that make an item infeasible using Markov inequalities. For laminar matroids, there are more such events, and we can only analyze a weaker algorithm that sets the thresholds for each item independently. The proofs are deferred to Appendix~\ref{sec:appendix_laminar_upper_bound}, where we also discuss the algorithmic aspects of computing $F_i$ efficiently.

\begin{theorem}
\label{thm:truncated_upper_bound}
There is a $3$-competitive non-adaptive algorithm for truncated partition matroid prophet inequalities.
\end{theorem}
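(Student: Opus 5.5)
We will run the same non-adaptive OCRS framework from Section~\ref{sec:truncated_R1_partition_upper_bound}, now with $\alpha = 1/3$, and show by induction over the arrival order that $F_i \ge 1/3$ for every item $i$. Fix an optimal ex-ante solution $x\in P(\mathcal M)$. For a truncated partition matroid with local ranks $(r_B)$ and global rank $k$, membership in the polytope gives $\sum_{j\in B}x_j\le r_B$ for every part $B$ and $\sum_j x_j\le k$. By the discussion preceding Theorem~\ref{thm:truncated-R1-partition}, it suffices to prove $F_i\ge \alpha$ for every $i$. Then the algorithm is well-defined, each item is accepted with marginal probability exactly $\alpha x_i$, and concavity of $\nu_i$ gives $\E[\alg]\ge\alpha\sum_i\nu_i(x_i)\ge\alpha\E[\opt]$.

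For the inductive step, fix item $i$ in part $B$ and assume $F_j\ge\alpha$ for all $j<i$. Then each earlier item $j$ is accepted with probability exactly $\alpha x_j$. Item $i$ is infeasible only if one of two events holds.
\begin{enumerate}
\item The event $\mathcal E_1$ that $r_B$ earlier items of $B$ are already accepted. Let $A_B$ be the number of accepted earlier items in $B$. Then $\E[A_B]=\alpha\sum_{j<i,j\in B}x_j\le\alpha r_B$, and Markov's inequality gives $\Pr[\mathcal E_1]=\Pr[A_B\ge r_B]\le\alpha$.
\item The event $\mathcal E_2$ that $k$ earlier items are accepted in total. Let $A$ be the total number of accepted earlier items. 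Then $\E[A]=\alpha\sum_{j<i}x_j\le\alpha k$, and Markov gives $\Pr[\mathcal E_2]\le\alpha$.
\end{enumerate}
A union bound yields $F_i\ge 1-2\alpha$, which equals $\alpha$ exactly when $\alpha=1/3$. This closes the induction and gives the $3$-competitive guarantee.

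There is a subtlety to check. The induction needs $\alpha x_i/F_i\le 1$ so that the threshold $\tau_i(\cdot)$ is defined. This holds because $x_i\le 1$ and $F_i\ge\alpha$. The Markov bounds use only expectations, so they need neither independence nor any Poisson-binomial structure. This is why the argument is coarser than, but much simpler than, the proof of Theorem~\ref{thm:truncated-R1-partition}. One could try to sharpen the bound by exploiting that $\mathcal E_1$ and $\mathcal E_2$ are not disjoint, or by refining $\sum_{j<i}x_j\le k - (\text{mass in }B\text{ after }i)$. The target here, though, is only $3$, so I would not pursue this.

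I do not expect a real obstacle. The only points needing care are that the accepted marginals are exactly $\alpha x_j$, not merely at most, under the inductive hypothesis, and that these expectation bounds hold for the capped process the algorithm actually runs. Both follow directly from the construction: the acceptance probability of $j$ is $F_j\cdot\frac{\alpha x_j}{F_j}$, and feasibility is determined only by previously accepted items.
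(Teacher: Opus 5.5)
Your proposal is correct and follows essentially the same route as the paper: run the non-adaptive OCRS framework with $\alpha=1/3$, bound the probability that the local part or the global constraint is already full via Markov's inequality using $\E[|A\cap B|]\le\alpha r_B$ and $\E[|A|]\le\alpha k$, and conclude $F_i\ge 1-2\alpha=\alpha$ by a union bound. Your extra remarks (well-definedness of $\tau_i(\alpha x_i/F_i)$ and exactness of the marginals $\alpha x_j$) are valid details that the paper leaves implicit.
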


\begin{theorem}
\label{thm:laminar_upper_bound}
There is a $6.311$-competitive non-adaptive algorithm for laminar matroid prophet inequalities.
\end{theorem}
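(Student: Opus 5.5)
We propose to prove Theorem~\ref{thm:laminar_upper_bound} with the weaker, "independent thresholds" variant of the non-adaptive OCRS mentioned above. We fix an optimal ex-ante solution $x\in P(\mathcal M)$ of~\eqref{eq:exante_relaxation} and a parameter $c\in(0,1)$. We then set the threshold of each item to $\tau_i(c\,x_i)$, so that item $i$ is \emph{active}, i.e., $X_i\ge\tau_i(cx_i)$, independently with probability $cx_i$. We accept an active item whenever it keeps the accepted set independent. Because the thresholds no longer depend on $F_i$, we only need a lower bound $F_i\ge\beta$ for every $i$. Each item then contributes at least $F_i\,\nu_i(cx_i)\ge \beta c\,\nu_i(x_i)$ by concavity, and the ratio is $1/(c\beta)$. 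The target is $\max_c c\beta(c)\ge 1/6.311$.

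\textbf{Step 1: reduce infeasibility to a chain of tight sets.} For a laminar matroid, item $i$ is infeasible on arrival exactly when some set $S\in\mathcal L$ with $i\in S$ is already saturated, i.e., $r_S$ earlier items of $S$ were accepted. The sets containing $i$ form a chain $S_1\subsetneq S_2\subsetneq\cdots\subsetneq S_m$ with nondecreasing ranks. By a union bound, $1-F_i\le\sum_\ell \Pr[\text{$S_\ell$ saturated}]$. Moreover, the number of accepted items of $S_\ell$ is at most the number of active items of $S_\ell$. These counts are independent Bernoulli sums with mean at most $c\,r_{S_\ell}$, because $x(S_\ell)\le r_{S_\ell}$. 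Hence
\[
\Pr[\text{$S_\ell$ saturated}]\le\Pr[\mathrm{PB}\ge r_{S_\ell}]\le \Pr[\pois(c r_{S_\ell})\ge r_{S_\ell}],
\]
using the standard fact that Poisson dominates a Poisson binomial of the same mean in upper tails at integer levels above the mean.

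\textbf{Step 2: avoid summing over a long chain.} A naive union bound over the chain fails when many nested sets share similar ranks. We intend to merge nested sets with equal rank, since only the innermost such set can be tight first (non-redundancy). After merging, the ranks along the chain are strictly increasing integers $r_1<r_2<\cdots$. The bound becomes
\[
1-F_i\le \sum_{r\ge 1}\Pr[\pois(cr)\ge r].
\]
This is a convergent series for $c<1$, since the Chernoff bound gives roughly $(ce^{1-c})^r$. For the set containing $i$ with rank $1$ we use the sharper bound $c\,x(S_1\setminus\{i\})\le c$, which matches the rank-$1$ term. We then choose $c$ to maximize
\[
c\Bigl(1-\sum_{r\ge1}\Pr[\pois(cr)\ge r]\Bigr).
\]
We expect the numerical optimum to be $1/6.311$; if it is not, we would tune it by bounding the first few terms exactly and the tail by Chernoff.

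\textbf{Main obstacle.} The delicate step is Step 2: justifying that the total infeasibility probability can be charged to one Poisson tail per distinct rank, and getting a constant as good as $6.311$. If the plain union bound over distinct ranks is too lossy, we would first try to exploit disjointness. Saturating $S_{\ell}$ for a larger rank requires many active items outside $S_{\ell-1}$, so we would condition on the inner set not being saturated and bound only the marginal additional overflow. We would also double-check monotonicity of $\Pr[\pois(cr)\ge r]$ in $r$ to keep the series bound clean.
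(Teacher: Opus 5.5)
\textbf{Gap in Step~2: the union bound cannot reach $6.311$.} Combining the per-rank failure probabilities by a union bound is too lossy, however you tune the individual terms.

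First, the rank-$1$ term cannot be $\Pr[\pois(c)\ge 1]=1-e^{-c}$. An earlier item of the same rank-$1$ set with $x_j$ close to $1$ makes that failure probability close to $c>1-e^{-c}$. The Poisson comparison you invoke only holds at integer levels $r\ge\mu+1$, so the rank-$1$ term must be $c$, which is weaker than $1-e^{-c}$, not sharper. The function you would actually maximize is therefore $c\bigl(1-c-\sum_{r\ge2}\Pr[\pois(cr)\ge r]\bigr)$. Its maximum is about $0.146$, near $c\approx 0.24$, which gives a ratio of about $6.86$. Even with the unjustified $1-e^{-c}$ you would only reach about $6.55$.

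The missing idea is positive correlation. Let $S_r$ be the largest rank-$r$ set containing $i$, and let $E_r$ be the event that at most $r-1$ earlier items of $S_r$ are active. Each $E_r$ is a decreasing event in the independent activity indicators. By the FKG (Harris) inequality, $F_i\ge\Pr[\bigcap_r E_r]\ge\prod_r\Pr[E_r]$. The paper does exactly this, with $\Pr[E_r]\ge 1-\zeta_r(\beta r)$ from its capped-mean Poisson-binomial lemma. The resulting quantity $\beta\prod_{r\ge1}(1-\zeta_r(\beta r))$ is maximized at $\beta\approx 0.2788$ and gives $6.311$. The gain over your sum is essentially the cross term between the rank-$1$ failure probability ($\approx 0.28$) and the higher-rank ones (together $\approx 0.22$). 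Your fallback of conditioning on the inner set being unsaturated points in this direction, and FKG is the clean way to make it rigorous.

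\textbf{Two further remarks.} Your merging rule is backwards. If $S\subseteq S'$ have equal rank, saturation of $S$ implies saturation of $S'$. So you must keep the outermost set of each rank, not the innermost; this is why the paper takes the largest one.

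On the other hand, once you use FKG, your per-level Poisson bounds are usable and even slightly stronger than the paper's. For $r\ge 2$ and $c\le 1/2$ the relevant mean is at most $cr\le r-1$. Hoeffding's comparison with the binomial, followed by the binomial-to-Poisson comparison, then gives $1-\Pr[E_r]\le\Pr[\pois(cr)\ge r]$. This is at most $\zeta_r(cr)$, because $\zeta_r$ only uses the capped mean and so corresponds to a Poisson with a larger parameter. Hence $c(1-c)\prod_{r\ge 2}\bigl(1-\Pr[\pois(cr)\ge r]\bigr)$ would certify a competitive ratio no worse than $6.311$.
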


\section{Graphic Matroids}
\label{sec:graphic}

\subsection{Lower Bound for Graphic Matroids}

In this section we demonstrate for the first time a strict cost of non-adaptivity in graphic matroid prophet inequalities: there exist instances where any non-adaptive algorithm has competitive ratio strictly greater than $2$, whereas adaptive algorithms are $2$-competitive. We first give a basic construction that already gives a lower bound of $\frac{5}{2}$, and later show how it can be modified to obtain a stronger lower bound of $3$. Note that our construction crucially relies on using parallel edges and does not apply to simple graphs.

\begin{theorem}
\label{thm:graphic_lower_bound_5/2}
    There exists a family of graphic matroid prophet inequality instances such that any non-adaptive algorithm has competitive ratio at least $\frac{5}{2}$.
\end{theorem}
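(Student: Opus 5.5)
We would prove Theorem~\ref{thm:graphic_lower_bound_5/2} with the same template as Theorem~\ref{thm:truncated_lower_bound}. A batch of many small Bernoulli items arrives first, followed by the classical two-variable instance. The difference is that the ``global rank'' obstacle is now replaced by a \emph{connectivity} obstacle in a hat graph. We have not yet verified that the construction below yields exactly $5/2$; fixing it is the main open step.

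\textbf{Construction.} Take two special vertices $u,w$ and $N$ middle vertices $v_1,\dots,v_N$. For each $i$ there are two ``brim'' edges $(u,v_i)$ and $(v_i,w)$, which arrive first, in order $i=1,\dots,N$. Their values are independent Bernoulli, equal to $r$ with probability $b$ and $0$ otherwise, with $b\to0$ and $N\sqrt b\to\infty$ or $Nb\to\infty$ as needed. At the end come two parallel edges $g_1,g_2$ between $u$ and $w$, carrying the two-variable instance: $X_{g_1}=1$, and $X_{g_2}=1/\epsilon$ with probability $\epsilon$. These parallel edges are what the remark in the paper says is essential. The key structural feature is that any accepted path $u$--$v_i$--$w$ makes both $g_1$ and $g_2$ infeasible. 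Meanwhile brim edges from different indices never conflict with each other unless they close a cycle through $u$ and $w$.

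\textbf{Prophet value.} The prophet can take every realized brim edge except that, for at most one $i$, it must drop one edge of a fully realized pair. It then adds the better of $g_1,g_2$. This gives roughly $2+r\cdot\E[\#\text{realized brim edges}]$ in the appropriate scaling. For the final ratio to be a constant, we will likely rescale the brim value to $r/(Nb)$ so that the total brim contribution is $\Theta(1)$. Alternatively, we make one edge of each pair deterministic, so that a \emph{single} Bernoulli success suffices to connect $u$ and $w$; this mirrors Theorem~\ref{thm:truncated_lower_bound}, where a single extra success fills the rank. We will settle this parameterization first.

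\textbf{Non-adaptive algorithm.} By the reduction in the preliminaries it suffices to consider deterministic thresholds. A deterministic non-adaptive algorithm is then described by which brim edges get a positive threshold, and by the number $M$ of indices $i$ at which both edges are ``open''. We then compute, exactly as for $\alg_M$ in Theorem~\ref{thm:truncated_lower_bound}:
\begin{itemize}
\item the expected brim reward, which is essentially linear in the number of open edges;
\item the probability that $u$ and $w$ are still disconnected when $g_1,g_2$ arrive, which in the limit is $e^{-\lambda}$ with $\lambda$ the Poisson rate of completed paths.
\end{itemize}
Adding these gives a limiting value of the form $V_r(\lambda)=c\,r\,\lambda'+e^{-\lambda}$ for the relevant rates. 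We maximize over $\lambda$ by a first-order condition, and then choose $r$ to maximize $\E[\opt]/\max_\lambda V_r(\lambda)$. The target is that this optimum equals $5/2$.

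\textbf{Main obstacle.} The hard part is the calibration of the brim values and probabilities. We want a structure that creates genuine tension for the algorithm: opening more brim edges must increase both the reward and the probability of blocking the final pair. At the same time, the prophet must lose essentially nothing from the cycle constraint. We must also show that opening the two edges of a pair at asymmetric levels, or opening only one side, never helps the algorithm. We would try first the variant in which the $(u,v_i)$ edges are deterministic with small value and the $(v_i,w)$ edges are Bernoulli. In that variant, any realized $(v_i,w)$ edge after an accepted $(u,v_i)$ connects $u$ to $w$. This reduces the computation to the single-parameter Poisson optimization above, which we expect to evaluate to a clean $5/2$.
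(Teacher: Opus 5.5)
\textbf{The proposed construction cannot prove the theorem.} No choice of $r,b,N$ gives a ratio above $2$, so what you call the ``main open step'' is not a calibration problem: the per-index gadget is too weak.

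Take arbitrary distributions on the path edges, and write $a_i=\E[X_{(u,v_i)}]$ and $c_i=\E[X_{(v_i,w)}]$. Consider the following non-adaptive algorithm:
\begin{itemize}
\item for each $i$, it opens only the edge of $\{(u,v_i),(v_i,w)\}$ with the larger mean (threshold $0$) and closes the other (threshold $+\infty$);
\item it sets threshold $1$ on $g_1$ and $+\infty$ on $g_2$.
\end{itemize}
The accepted path edges form two vertex-disjoint stars centred at $u$ and $w$. So no cycle ever closes, $u$ and $w$ stay disconnected, and $g_1$ is always accepted. The prophet takes at most one of the parallel edges $g_1,g_2$, so
\[
\E[\alg]\ \ge\ \sum_i\max\{a_i,c_i\}+1\ \ge\ \frac12\Bigl(\sum_i(a_i+c_i)+2\Bigr)\ \ge\ \frac12\,\E[\opt] .
\]
This covers the rescaled variant and the variant with deterministic $(u,v_i)$ edges alike. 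Hence the Poisson first-order optimization you sketch tops out at ratio $2$. The analogy with Theorem~\ref{thm:truncated_lower_bound} breaks down because there every Bernoulli item consumes the shared global rank. Here, half of each path can be collected with zero risk.

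\textbf{The missing idea.} Each gadget arriving before the final pair must itself embed a two-variable instance, so that a safe strategy captures only about a third of that gadget's prophet value. Parallel edges are needed inside each gadget, not only in the final pair.

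In the paper's hat graph, with $p=n^{-1/3}$:
\begin{itemize}
\item hat $i$ has two parallel $s$--$v_i$ edges (value $1$ deterministically, and $1/p$ with probability $p$) and one $v_i$--$t$ edge ($1/p$ with probability $p$);
\item the final $s$--$t$ pair carries the two-variable instance scaled by $n$.
\end{itemize}
The optimal tree uses one brim edge and one edge per hat, so $\E[\opt]=(5-o(1))n$.

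On the algorithm side, any strategy that never connects $s,t$ through hat $i$ gets at most $1$ from that hat. The best risky strategy gets $2$, by putting threshold $1/p$ on both Bernoulli edges, and it connects $s,t$ with probability $p^2$. With $k$ risky hats,
\[
\E[\alg_k]=n+1/p+(1-p)k+(n-1/p)(1-p^2)^k ,
\]
which is convex in $k$. Because $np^2=n^{1/3}\to\infty$, both endpoints $k=0$ and $k=n$ give at most $2n$.

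So the $5/2$ comes from an endpoint comparison, not from an interior stationary point. The all-safe endpoint gets $n$ from the hats plus $n$ from the brim; the all-risky endpoint gets $2n$ from the hats and essentially nothing from the brim.
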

\begin{proof}
    We use a structure inspired by the ``hat graph'', which has been used to demonstrate the challenge of designing OCRS and matroid secretary algorithms for graphic matroids~\cite{esa/LeeS18, wine/BahraniBSW21}.

    \paragraph{Construction.} Let $n$ be a sufficiently large integer. Consider the following graph $G=(V,E)$. The vertices are $V=\{s,t\}\cup \{v_1,v_2,\ldots,v_n\}$. Let $p = n^{-1/3}$, and the edges are:
    \begin{itemize}
        \item For each $i\in[n]$, there are two edges $e_{i,1}$ and $e_{i,2}$ between $s,v_i$. $X_{e_{i,1}}$ is a deterministic $1$, while $X_{e_{i,2}}$ takes value $1/p$ with probability $p$ and $0$ otherwise. There is one edge $e_{i,3}$ between $t,v_i$, and $X_{e_{i,3}}$ takes value $1/p$ with probability $p$ and $0$ otherwise. These three edges are referred to as ``hat $i$'' or ``hat edges''.
        \item There are two edges $e_{b,1}$ and $e_{b,2}$ between $s,t$. $X_{e_{b,1}}$ has deterministic value $n$, and $X_{e_{b,2}}$ takes value $n/\epsilon$ with probability $\epsilon$ and $0$ otherwise, for some sufficiently small $\epsilon>0$ (think of $\epsilon = 1/poly(n)$). These two edges are referred to as ``brim edges''.
    \end{itemize}
    Figure~\ref{fig:single-layer-hat} provides an illustration of the construction. The edges arrive in the following order: for each $i$ sequentially, $e_{i,1},e_{i,2},e_{i,3}$ arrive one after another; finally $e_{b,1}$ and $e_{b,2}$ arrive one after another.

    \begin{figure}[ht]
    \centering
    \begin{tikzpicture}[
        x=1cm, y=1cm,
        vertex/.style={circle, draw, fill=white, inner sep=1.2pt, minimum size=18pt},
        edge/.style={draw, thick},
        brim/.style={draw, very thick},
        elabel/.style={font=\scriptsize, fill=white, inner sep=1pt}
    ]
    
    \node[vertex] (s) at (-5,0) {$s$};
    \node[vertex] (t) at (5,0) {$t$};
    
    \draw[brim, bend left=5] (s) to node[elabel, above] {$b_1$} (t);
    \draw[brim, bend right=5] (s) to node[elabel, below] {$b_2$} (t);
    
    \node[vertex] (v1) at (0,3) {$v_1$};
    \node[vertex] (vn) at (0,1.5) {$v_n$};
    
    \node at (0,2.25) {$\vdots$};
    
    \draw[edge] (s) to[bend left=5]
        node[elabel, pos=0.6, above, sloped] {$e_{1,1}$} (v1);
    \draw[edge] (s) to[bend right=5]
        node[elabel, pos=0.6, below, sloped] {$e_{1,2}$} (v1);
    \draw[edge] (t) --
        node[elabel, pos=0.6, above, sloped] {$e_{1,3}$} (v1);
    
    
    \draw[edge] (s) to[bend left=5]
        node[elabel, pos=0.6, above, sloped] {$e_{n,1}$} (vn);
    \draw[edge] (s) to[bend right=5]
        node[elabel, pos=0.6, below, sloped] {$e_{n,2}$} (vn);
    \draw[edge] (t) --
        node[elabel, pos=0.6, above, sloped] {$e_{n,3}$} (vn);
    
    \end{tikzpicture}
    \caption{The hat example.}
    \label{fig:single-layer-hat}
    \end{figure}

    \paragraph{Prophet Value.} There are essentially two types of spanning trees of the graph: either take one brim edge and one edge from each hat, or take two edges from one hat and one edge from every other hat. The expected maximum value of the brim edges is 
    \[
    \E[\max\{X_{e_{b,1}}, X_{e_{b,2}}\}] = n\cdot (1-\epsilon) + n/\epsilon\cdot \epsilon = (2-\epsilon)n ~,
    \]
    and taking $e_{b,1}$ guarantees value at least $n$, while giving up the brim can only get one more hat edge in return, which gives value at most $1/p=n^{1/3}$. Thus, the maximum spanning tree must be of the first type and includes exactly one brim edge. 
    
    For each hat, the expected maximum value among the three hat edges is 
    \[
    \E[\max\{X_{e_{i,1}}, X_{e_{i,2}}, X_{e_{i,3}}\}] = 1\cdot (1-p)^2 + 1/p\cdot (1-(1-p)^2) = 3-3p+p^2 ~.
    \]
    Therefore, summing over the brim edges and all $n$ hats, we have
    \[
    \E[\opt] = (5-3p+p^2-\epsilon)n ~.
    \]
    
    \paragraph{Non-Adaptive Algorithms.} There are four different non-adaptive strategies for each hat $i$: 
    \begin{itemize}
        \item Strategy 1: ignore all three edges. Expected value $0$.
        \item Strategy 2: always accept one edge and ignore the other two edges. Expected value $1$.
        \item Strategy 3: accept $e_{i,1}$, then ignore $e_{i,2}$ and set threshold $1/p$ for $e_{i,3}$. Expected value $2$.
        \item Strategy 4: ignore $e_{i,1}$, and set threshold $1/p$ for both $e_{i,2}$ and $e_{i,3}$. Expected value $2$. 
    \end{itemize}
    Strategy 1 is dominated by Strategy 2 since taking one edge from a hat does not affect other hats or the brim edges. Strategy 3 is dominated by Strategy 4 because they have the same expected value while Strategy 4 has lower probability of making $s,t$ connected.

    Therefore, the optimal non-adaptive algorithm essentially depends on how many hats use Strategy 2 and how many hats use Strategy 4. This is the major dilemma that the non-adaptive algorithm must face: if it does not use Strategy 4 very often, then it obtains low expected reward from the hats; if it uses Strategy 4 very often, then there is a high probability that $s,t$ become connected before the brim edges arrive.
    
    Let $\alg_k$ denote an algorithm where $k$ hats use Strategy 4, and w.l.o.g. assume they are the first $k$ hats. For each hat that uses Strategy 4, conditioned on $s,t$ not being connected before the hat arrives, $s,t$ become connected after the hat arrives with probability $p^2$. Let $q = 1-p^2$, then after the first $j$ hats arrive, the probability that $s,t$ are still not connected is $q^{j}$.

    For each of the first $k$ hats, if $s,t$ are not connected when it arrives, Strategy 4 gets expected value $2$; otherwise only one edge can be accepted, and the expected value is $2-p$. Therefore, the expected value from the hats is
    \begin{align*}
    (n-k) + \sum_{j=1}^k\left(q^{j-1}\cdot 2 + (1-q^{j-1})\cdot (2-p)\right)
    = &~ (n-k) + (2-p)k + p\sum_{j=0}^{k-1}q^j\\
    = &~ n+k-pk + p\frac{1-q^k}{1-q} = n+k-pk + \frac{1-q^k}{p} ~.
    \end{align*}
    After all hats have arrived, $s,t$ are still not connected with probability $q^k$, and the algorithm can get additional expected value $n$ by accepting one brim edge. We conclude that
    \[
    \E[\alg_k] = n+\frac{1}{p} + (1-p)k + \left(n-\frac{1}{p}\right)q^k ~.
    \]

    Notice that this is convex in $k$, so the maximum is achieved at either $k=0$ or $k=n$. We have $\E[\alg_0] = 2n$ and
    \begin{align*}
    \E[\alg_n] = &~ n+n^{1/3} + (1-n^{-1/3})n + (n-n^{1/3})\left(1-n^{-2/3}\right)^n\\
    \leq &~ 2n -(n^{2/3}-n^{1/3}) + (n-n^{1/3})\cdot e^{-n^{1/3}} < 2n ~.
    \end{align*}
    The last inequality is because $n^{-1/3} \gg e^{-n^{1/3}} \implies n^{2/3}-n^{1/3} \gg (n-n^{1/3})\cdot e^{-n^{1/3}}$.

    Thus, the optimal non-adaptive algorithm has expected value at most $2n$, hence the competitive ratio when $n\to\infty$ is
    \[
    \lim_{n\to\infty}\frac{\E[\opt]}{\E[\alg_0]} = \lim_{n\to\infty}\frac{(5-3n^{-1/3}+n^{-2/3}-\epsilon)n}{2n} = \frac{5}{2}-\frac{\epsilon}{2} ~. \qedhere
    \]
\end{proof}

The lower bound can be further strengthened: for each pair of hat edges $e_{i,1},e_{i,2}$, we can view them as brim edges and recursively apply the construction. The only modification is that the variables on the internal brim edges must be selected carefully to enable a recursive analysis. The detailed recursive construction is deferred to Appendix~\ref{sec:appendix_graphic_recursive}.

\begin{theorem}
\label{thm:graphic_lower_bound_3}
    For any integer $k\geq 1$, there exists a family of graphic matroid prophet inequality instances such that any non-adaptive algorithm has competitive ratio at least $3-\frac{1}{k+1}$.
\end{theorem}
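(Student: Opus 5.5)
The plan is to prove Theorem~\ref{thm:graphic_lower_bound_3} by induction on $k$. I would build a gadget $H_k(s,t)$ between two terminals, defined recursively. Theorem~\ref{thm:graphic_lower_bound_5/2} is the case $k=1$. I would normalize each gadget by a scale $w$ and track two quantities:
\begin{itemize}
\item the prophet's expected contribution $a_k w$ from the edges of $H_k$, in a maximum spanning forest that connects $s$ and $t$ inside the gadget;
\item the best expected value $b_k w$ that a non-adaptive algorithm can collect from $H_k$.
\end{itemize}
The target is $a_k = 3k+2$ and $b_k = k+1$, so that $a_k/b_k = 3-\frac{1}{k+1}$. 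The base gadget $H_0$ is the two-variable pair of parallel edges: one deterministic edge of value $1$, and one edge of value $1/\epsilon$ with probability $\epsilon$. It has $a_0 = 2$ and $b_0 = 1$, since $\epsilon \to 0$.

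To define $H_k$ at scale $w$, I take $n$ hats $v_1,\ldots,v_n$ and a brim. In hat $i$, the pair $e_{i,1},e_{i,2}$ between $s$ and $v_i$ is replaced by a copy of $H_{k-1}(s,v_i)$ at scale $w/n$. The edge $e_{i,3}$ between $v_i$ and $t$ takes value $c/p$ with probability $p$, scaled so that a lone risky edge is worth about one unit of the hat scale. The brim is a two-variable instance between $s$ and $t$ at scale $w$, contributing $2w$ to the prophet, and all of it arrives last. The prophet computation mirrors Theorem~\ref{thm:graphic_lower_bound_5/2}. With $p=n^{-1/3}$ and $\epsilon = 1/\mathrm{poly}(n)$, the maximum spanning tree uses one brim edge. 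In each hat it collects $a_{k-1}$ from the inner gadget plus about one more unit from $e_{i,3}$: this is the analogue of $3-3p+p^2 \approx 2+1$, where the inner gadget's ``$2$'' becomes $a_{k-1}$. This gives the recursion $a_k = a_{k-1}+1+2 = a_{k-1}+3$. This is also where the inner brim values must be chosen carefully, as the paper hints. The rare high value of the inner brim edge and the risky edges must be scaled so the prophet's gain adds linearly, while for the algorithm the inner gadget still offers only ``safe value $b_{k-1}$'' versus ``risky extra value''.

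For the algorithm, I would prove the stronger inductive statement that any non-adaptive threshold vector collects at most $b_k w + o(w)$ from $H_k$, and in fact at most $b_{k-1}w + w$ from the hats plus the brim combined. The argument follows the dichotomy in Theorem~\ref{thm:graphic_lower_bound_5/2}. In each hat, the algorithm either behaves ``safely'', getting at most $b_{k-1}$ from the inner gadget and nothing from $e_{i,3}$ beyond what is dominated, or ``riskily'', getting at most $b_{k-1}+1$ but connecting $s$ to $t$ with probability roughly $p^2$ or larger. Summing over the hats, $m$ risky hats gain at most about $m(1-p)+1/p$ extra, but leave $s,t$ disconnected with probability about $(1-p^2)^m$. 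The brim is then worth $w$ only with that probability. The same convexity-in-$m$ argument then shows that $m=0$ or $m=n$ is optimal, and that both give at most $(b_{k-1}+1)w + o(w)$. Hence $b_k = b_{k-1}+1$, and the ratio tends to $(3k+2)/(k+1)$ as $n\to\infty$ and the inner parameters tend to $0$ in the right order.

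The main obstacle is the algorithmic upper bound inside the recursion. Now the inner gadget $H_{k-1}(s,v_i)$ has many non-adaptive strategies, not four. I need a clean reduction showing that, for connectivity purposes, each inner gadget only matters through (i) its value and (ii) the probability that it connects $s$ to $v_i$. Moreover, I need ``value above $b_{k-1}$'' to force a large connection probability. My first attempt would be to strengthen the inductive hypothesis into a trade-off: any strategy on $H_{k-1}$ gets value at most $b_{k-1}w'$ plus a term proportional to $\Pr[s\text{-}v_i \text{ connected without using its last brim}]$, scaled so that it is negligible. Then I would check that the hat-level exchange argument (Strategy~3 dominated by Strategy~4, Strategy~1 by Strategy~2) goes through with this abstract interface. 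The choice of scales, and of the order in which $p,\epsilon,n$ tend to their limits across levels, is where I expect the delicate bookkeeping.
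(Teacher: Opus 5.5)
You have the paper's architecture: each hat's parallel pair becomes a scaled copy of the previous gadget, $a_k=a_{k-1}+3$ and $b_k=b_{k-1}+1$ are the right recursions, and the upper bound comes from a safe/risky split plus convexity in the number $m$ of risky hats. But the step you flag as the main obstacle is genuinely missing, and your tentative fix goes the wrong way.

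The recursion needs a two-part invariant, which the paper proves by induction:
(i) any threshold vector on the level-$\ell$ gadget that never connects its two terminals earns at most $b_\ell-1$ units;
(ii) every threshold vector earns at most $b_\ell$ units, and any that earns more than $b_\ell-1$ connects the terminals with probability at least $p_\ell$.

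Part (i) is what bounds a safe hat that \emph{does} accept $e_{i,3}$: it gets one unit from $e_{i,3}$ plus at most $b_{k-1}-1$ from a never-connecting inner strategy. You call this case ``dominated'', but once the inner gadget is nested there is no four-strategy exchange argument to invoke. Part (ii) is what makes a risky hat connect $s,t$ with probability about $p^2$. It is also the source of the loss of $p$ units when a risky hat arrives after $s,t$ are already connected. Your expressions $m(1-p)+1/p$ and $(1-p^2)^m$ silently assume both facts. A hypothesis of the form ``value at most $b_{k-1}w'$ plus a negligible term proportional to the connection probability'' is an upper bound on value that grows with connection probability. It can never force a high-value inner strategy to connect with probability at least $p$.

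Moreover, (ii) only propagates under a specific coupling of parameters, not merely a suitable order of limits. Suppose the inner gadget guarantees connection probability $\pi$ and $e_{i,3}$ fires with probability $p$. Then with $n$ hats the all-risky strategy earns at most about $b_k-\pi+\frac{1}{np}+e^{-n\pi p}$ units, so you need $\pi\gg 1/(np)$ and $n\pi p\to\infty$. The paper meets this as follows:
\begin{itemize}
\item the rare brim edge of each inner gadget gets the same probability $p_{\ell-1}$ as the outer risky edge;
\item the level has $N_\ell=p_{\ell-1}^{-3}$ hats;
\item the rare brim edge of the level-$\ell$ gadget gets probability $p_\ell=p_{\ell-1}^2$, exactly what a risky hat guarantees.
\end{itemize}

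If you instead keep a ``sufficiently small'' $\epsilon$ for the inner brims, as in Theorem~\ref{thm:graphic_lower_bound_5/2} (say $\epsilon\ll 1/(np)$), the construction fails outright. In each hat, the algorithm can play safe on the inner hats of $H_{k-1}(s,v_i)$ and put a threshold only on its rare brim edge. This earns the full $b_{k-1}$ while connecting $s,v_i$ with probability only $\epsilon$. Adding $e_{i,3}$ gives $b_{k-1}+1$ per hat. Meanwhile $s,t$ get connected with total probability at most about $n\epsilon p\to0$, so the outer brim is still collected. The total is $b_{k-1}+2$; already at $k=2$ this is $4$ against the prophet's $8$.
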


\subsection{A Simple Algorithm}

We use the same random partitioning idea from prior work~\cite{sagt/ChawlaGKM24, dam/PashkovichS26}, but instead of relying on the ex-ante relaxation, we compare the algorithm against a different relaxation of $\opt$, which turns out to be simpler to analyze and also facilitates an improved analysis for simple graphs. 
Given a forest of a graph, we can arbitrarily root each tree, then charge each edge to the endpoint with larger depth. In this way, each vertex is charged at most one edge. Fix a realization $\vec X$, suppose in the optimal spanning forest, edge $e(\vec X,v)$ is charged to $v$. Then we can upper bound $\E[\opt]$ by
\begin{align*}
\E[\opt] \leq\E\left[\sum_{v\in V}X_{e(\vec X, v)}\right]
\leq  \E\left[\sum_{v\in V}\max_{e\ni v}X_e\right] = \sum_{v\in V}\E\left[\max_{e\ni v}X_e\right] ~.
\end{align*}
That is, instead of accepting each edge with probability proportional to the ex-ante relaxation, it suffices to accept an incident edge for each vertex that is comparable to the maximum incident edge in expectation, so this effectively reduces the problem to standard single-choice prophet inequalities.

Our algorithm randomly partitions the graph into two parts $A,B$, then for each $v\in B$, run an independent prophet inequality algorithm on only the edges incident to $v$ such that the other endpoint is in $A$. Since we never accept edges internal to $A$ or $B$, for each $v\in B$ we can always accept at least one incident edge without violating the graphic matroid constraint, which enables us to obtain a $1/2$-approximation to the prophet inequality subproblem.

\begin{algorithm}
    \caption{Improved non-adaptive algorithm for graphic matroid}
    \label{alg:alg1}
    \KwIn{Graph $G = (V,E)$, variables $(X_e)_{e\in E}$.}
    \KwOut{A vector of thresholds $(T_e)_{e\in E}$.}
    $A\gets \text{a random subset of $V$ that includes each vertex with probability $1/2$}$\;
    $B\gets V\setminus A$\;
    For each $v\in B$, compute $T_v$ such that $\Pr[\max_{e=(v,u),u\in A}X_e\geq T_v]=1/2$. Then for each $e=(v,u),u\in A$, set $T_e = T_v$\;
    For all other edges with both endpoints in $A$ or both endpoints in $B$, set $T_e=+\infty$.
\end{algorithm}

The analysis relies on the following fact.

\begin{fact}
\label{fact:random_subset}
    If a randomized subset $S\subseteq [n]$ includes each $i\in[n]$ with marginal probability $1/2$, then for any sequence of random variables $X_1,\ldots,X_n$, we have
    \[
    \E_S\left[\E\left[\max_{i\in S}X_i\right]\right] \geq \frac{1}{2}\cdot \E\left[\max_{i\in[n]}X_i\right] ~.
    \]
\end{fact}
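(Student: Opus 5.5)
The plan is a coupling argument: realize the random set $S$ together with an independent copy of itself. Let $S$ be drawn from its distribution, independently of the $X_i$, and let $\bar S := [n]\setminus S$. Then $\Pr[i\in\bar S] = 1/2$ for every $i$. So $\bar S$ is also a random subset with marginals $1/2$, although its distribution may differ from that of $S$. A symmetric argument therefore does not immediately give the claim, and I will instead argue pointwise using the complement.

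\textbf{Step 1 (pointwise bound).} Fix a realization of $X_1,\ldots,X_n$ and let $i^* := \arg\max_i X_i$, with ties broken by a fixed rule that depends only on $X$. Then
\[
\max_{i\in S}X_i \ \ge\ X_{i^*}\,\mathbf 1\{i^*\in S\},
\]
where the maximum over the empty set is interpreted as $0$. Since the $X_i$ are implicitly nonnegative in the paper's setting, the maximum over the empty set is harmless.

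\textbf{Step 2 (average over $S$).} Because $S$ is independent of $X$ and $i^*$ is a function of $X$ alone, taking the expectation over $S$ conditional on $X$ gives
\[
\E_S\!\left[\max_{i\in S}X_i \,\middle|\, X\right]\ \ge\ X_{i^*}\Pr[i^*\in S\mid X] \;=\; \tfrac12\, X_{i^*} \;=\; \tfrac12\max_{i\in[n]}X_i .
\]
The key point is that the marginal probability $\Pr[i\in S]=1/2$ holds for every fixed $i$, and hence also for the data-dependent index $i^*$, since that index is independent of $S$.

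\textbf{Step 3 (conclude).} Take the expectation over $X$ and swap the order of expectation by Fubini, which applies because everything is nonnegative. This yields the Fact.

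The only point needing care is independence between $S$ and the $X_i$. In Algorithm~\ref{alg:alg1}, $A$ and $B$ are drawn before, and independently of, the realizations, so independence holds. Note that we never need independence across the coordinates of $S$; marginals alone suffice. This matters because in the application the relevant sets (neighbors $u$ of $v$ lying in $A$) have marginals exactly $1/2$ but are conditioned on $v\in B$. With parallel edges, several edges may share the same endpoint $u$, so their inclusions are perfectly correlated; the argument above is unaffected by this correlation.

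If nonnegativity were not assumed, the bound could fail: for example, with all $X_i=-1$, the empty set would give $0$ while the right-hand side is $-1/2$, and other conventions for the empty set can break it. I therefore state and use nonnegativity explicitly, consistent with the paper's model.
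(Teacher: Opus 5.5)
Your argument is correct. The paper states this Fact without proof, and your argument is the standard one it implicitly relies on: condition on $X$, and note that the maximizer $i^*$, being a function of $X$ alone and hence independent of $S$, lies in $S$ with probability $\tfrac12$. You are also right to make explicit the two hypotheses that make it work, namely nonnegativity and independence of $S$ from $X$.

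Two minor points. The opening paragraph about the complement $\bar S$ and a ``coupling'' is never used and can be deleted. Your negative-value counterexample is also misdescribed: the failure comes from nonempty $S$, not from the empty set. For example, with $X_1=X_2=-1$ and $S$ a uniformly random singleton of $\{1,2\}$, the left side is $-1<-\tfrac12$. Neither point affects the proof.
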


Therefore, informally speaking, we lose three factors of $2$ due to $\Pr[v\in B]=1/2$, Fact~\ref{fact:random_subset}, and the prophet inequality approximation respectively. 

\begin{theorem}
\label{thm:graphic_upper_bound_8}
    Algorithm~\ref{alg:alg1} is  $8$-competitive for graphic matroid prophet inequalities.
\end{theorem}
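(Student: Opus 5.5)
The argument has two halves: feasibility, meaning every vertex $v\in B$ can always accept an edge when its threshold is exceeded, and a charging bound that loses one factor of $2$ three times.

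\emph{Feasibility.} Fix the partition $(A,B)$. The algorithm only accepts edges between $A$ and $B$. Each $v\in B$ accepts at most one edge, namely the first incident $A$-edge whose value exceeds $T_v$. We claim these accepted edges never contain a cycle. Suppose a cycle existed. It would alternate between $A$ and $B$, so each $B$-vertex on it would be incident to two cycle edges, which would mean two accepted edges at a single $B$-vertex. The only remaining possibility is that the selected set, viewed as a function $B\to A$, is a forest: every component is a star centred at an $A$-vertex. This fails only with parallel edges, and even then $v$ accepts just one edge. So a single edge per $B$-vertex always keeps the accepted set independent. The rule ``accept the first incident $A$-edge with value $\ge T_v$'' is therefore exactly the single-choice threshold algorithm on the variables $\{X_e: e=(v,u),\,u\in A\}$.

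\emph{Per-vertex value.} Let $Y_v=\max_{e=(v,u),u\in A}X_e$. By the median-threshold prophet inequality (the $\Pr[Y_v\ge T]=1/2$ rule of Samuel-Cahn), the value collected at $v$ is at least $\frac12\E[Y_v]$. I would recall this in one line:
\[
\E[\text{value}] \ge \tfrac12 T_v + \sum_e \Pr[\text{no earlier exceed}]\,\E[(X_e-T_v)^+] \ge \tfrac12 T_v+\tfrac12\E[(Y_v-T_v)^+]\ge \tfrac12\E[Y_v].
\]
Continuous distributions, or the tie-breaking convention stated in the paper, ensure the median threshold exists.

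\emph{Averaging over the random partition.} Condition on $v\in B$. Each other vertex $u$ then lies in $A$ independently with probability $1/2$, so each edge $e\ni v$ appears in $v$'s candidate set with marginal probability $1/2$. Parallel edges to the same $u$ appear together, but Fact~\ref{fact:random_subset} only needs marginals, so this is fine. Hence
\[
\E_A[\E[Y_v]\mid v\in B]\ge \tfrac12\E\left[\max_{e\ni v}X_e\right].
\]
Combining,
\[
\E[\alg]\ge\sum_v\Pr[v\in B]\cdot\tfrac12\cdot\tfrac12\,\E\left[\max_{e\ni v}X_e\right]=\tfrac18\sum_v\E\left[\max_{e\ni v}X_e\right].
\]
The vertex-charging relaxation above gives $\sum_v\E[\max_{e\ni v}X_e]\ge\E[\opt]$, which proves the ratio of $8$.

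The main point needing care is the feasibility step: one must check that no acceptance decision is ever blocked. This holds because blocking requires a cycle, and a cycle requires some $B$-vertex to hold two accepted edges, which the algorithm never allows. A smaller point is that Fact~\ref{fact:random_subset} must be applied conditionally on $v\in B$, where the membership events of the other vertices in $A$ remain independent with probability $1/2$.
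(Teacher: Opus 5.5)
Your overall structure matches the paper's: charge each accepted edge to its $B$-endpoint, use the median threshold $T_v$ to get $\frac12\E[Y_v]$, apply Fact~\ref{fact:random_subset} conditionally on $v\in B$, pay $\Pr[v\in B]=\frac12$, and finish with the vertex-charging relaxation. Those steps are fine.

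The feasibility step, however, relies on a false description of Algorithm~\ref{alg:alg1}. It is a non-adaptive threshold policy: it accepts \emph{every} edge with $X_e\ge T_e$ whose addition keeps the accepted set acyclic. Nothing limits a vertex $v\in B$ to one edge, so ``each $v\in B$ accepts at most one edge'' is wrong. Your conclusion that no acceptance is ever blocked is also wrong. Take $v_1,v_2\in B$ and $u_1,u_2\in A$, with arrival order $v_1u_1,v_1u_2,v_2u_1,v_2u_2$. If all four values clear their thresholds, the first three edges are accepted and $v_2u_2$ is rejected because it closes a $4$-cycle. Capping each $B$-vertex at one acceptance would make your argument valid. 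But it analyzes a different algorithm, whose decisions depend on past acceptances beyond the matroid feasibility check, so it is not non-adaptive in the paper's sense.

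What you actually need is weaker, and it is true: the \emph{first} candidate edge $e=(v,u)$, $u\in A$, with $X_e\ge T_v$ is never blocked. Edges inside $A$ or inside $B$ have infinite threshold. Every $A$--$B$ edge uses the threshold of its unique $B$-endpoint. So any accepted edge at $v$ is a candidate edge of $v$ that exceeded $T_v$. Hence when $e$ arrives, $v$ is an isolated vertex of the current accepted forest, and $e$ cannot close a cycle.

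Each accepted edge has exactly one $B$-endpoint, so the value charged to $v$ is at least the value of this first edge. That value is exactly the outcome of the single-choice threshold rule on $\{X_e : e=(v,u),\,u\in A\}$; further edges accepted at $v$ only add nonnegative value. This is the paper's ``we only count the first one.'' With this replacement, the rest of your computation goes through unchanged and gives the factor $8$.
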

The formal proof is deferred to Appendix~\ref{sec:appendix_graphic_upper}.

\begin{remark}
    The randomized partition can be derandomized using standard techniques, as stated in \citet{sagt/ChawlaGKM24}.
\end{remark}

\subsection{Improved Algorithm for Simple Graphs}

For simple graphs, the above analysis of Algorithm~\ref{alg:alg1} is not tight: we lose a factor of $2$ due to Fact~\ref{fact:random_subset} and another factor of $2$ due to prophet inequality, but it turns out that these two factors cannot be simultaneously tight when there are no parallel edges, i.e., the events that the edges incident to $v$ are active are independent. In fact, we can show a more general result: there is an algorithm that, given a full prophet inequality instance but each variable is presented independently with probability $p$ in the online process, obtains expected reward at least a $\frac{p}{p+1}$ fraction of the full prophet, instead of $\frac{p}{2}$. By optimizing over the partition probability, this eventually leads to an improved analysis for simple graphs. The details are deferred to Appendix~\ref{sec:appendix_graphic_upper}.

\begin{theorem}
\label{thm:graphic_upper_bound_simple}
    There is a $3+2\sqrt 2\approx 5.828$-competitive non-adaptive algorithm for simple graphic matroid prophet inequalities.
\end{theorem}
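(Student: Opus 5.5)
We follow the template of Algorithm~\ref{alg:alg1}, but change the partition probability and use a sharper single-vertex subproblem. Put each vertex in $A$ independently with probability $p$ and in $B$ otherwise. For each $v\in B$ we run a single-choice non-adaptive rule on the edges $e=(v,u)$ with $u\in A$, and set every other threshold to $+\infty$. Feasibility is argued exactly as in Theorem~\ref{thm:graphic_upper_bound_8}. Every accepted edge joins a $B$-vertex to an $A$-vertex, and each $v\in B$ accepts at most one edge. The accepted edges therefore form a star forest centred at $A$, which is acyclic, so each $v\in B$ can always take its first above-threshold edge. On the benchmark side we reuse the vertex-charging relaxation $\E[\opt]\le\sum_v\E[\max_{e\ni v}X_e]$. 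It then suffices to show that, for each fixed $v$, the expected value collected at $v$ is at least $c(p)\,\E[\max_{e\ni v}X_e]$.

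Fix $v$. It lies in $B$ with probability $1-p$. Conditioned on this, since the graph is simple, each incident edge $(v,u)$ is present, meaning $u\in A$, independently with probability $p$. This is precisely the subsampled prophet setting described before the statement. Our key lemma is that for independent $X_1,\dots,X_m$, each present independently with probability $p$, some threshold rule achieves $\frac{p}{p+1}\E[\max_i X_i]$. The threshold may depend only on which edges are present, and those are known offline once $A$ is drawn, so the algorithm stays non-adaptive.

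To prove the lemma, we choose one fixed threshold $T$ for the effective variables $Y_i=X_i\mathbf 1\{i\text{ present}\}$. We set $\Pr[\max_i X_i\ge T]=q$ and bound the algorithm's value by the standard Samuel-Cahn decomposition. The algorithm collects $T$ whenever some $Y_i\ge T$; this happens with probability $1-\prod_i(1-p\Pr[X_i\ge T])\ge 1-(1-q)^p$, by concavity and the bound $1-pz\le(1-z)^p$. On top of this it collects the surplus $\sum_i\E[(Y_i-T)^+]\Pr[\text{no earlier }Y_j\ge T]$. Each no-earlier-success factor is at least $\prod_j(1-p\Pr[X_j\ge T])\ge(1-q)^p$, and $\sum_i\E[(Y_i-T)^+]=p\sum_i\E[(X_i-T)^+]\ge p\,\E[(\max_i X_i-T)^+]$. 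Writing $M=\max_i X_i$, we have $\E[M]\le T+\E[(M-T)^+]$, so we want to choose $q$ making both coefficients at least $\frac{p}{1+p}$.

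Two candidate choices present themselves. We could set $(1-q)^p=\frac1{1+p}$, which gives $1-(1-q)^p=\frac{p}{1+p}$ and $p(1-q)^p=\frac{p}{1+p}$. Alternatively we could take $T=\E[M]\cdot\beta$ in mean-based form, in the style of Kleinberg--Weinberg. The first choice balances both terms exactly, and it is the one we intend to use. The main obstacle is making the concavity and independence step rigorous, namely the inequality $\prod_j(1-pz_j)\ge(1-q)^p$ when $\prod_j(1-z_j)=1-q$. We expect it to follow from $1-pz\ge(1-z)^p$ for $p\in[0,1]$, applied factorwise. The lemma also relies on the presence indicators of distinct edges at $v$ being independent, and this is exactly where simplicity is used: with parallel edges they coincide.

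Combining the pieces, the algorithm collects at least $(1-p)\frac{p}{1+p}\E[\max_{e\ni v}X_e]$ at each vertex. Maximizing $f(p)=\frac{p(1-p)}{1+p}$ gives $p=\sqrt2-1$ and $f=(\sqrt2-1)^2=3-2\sqrt2$. The competitive ratio is therefore $1/(3-2\sqrt2)=3+2\sqrt2$. The randomization over $A$ can be removed afterwards, just as in the remark following Theorem~\ref{thm:graphic_upper_bound_8}.
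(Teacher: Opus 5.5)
Your architecture matches the paper's: partition with $\Pr[u\in A]=p$, the vertex-charging bound $\E[\opt]\le\sum_v\E[\max_{e\ni v}X_e]$, a per-vertex $\frac{p}{1+p}$ subsampled prophet inequality, and $p=\sqrt2-1$. However, your proof of the key lemma fails at the $T$-term. For $p\in[0,1]$ the map $z\mapsto(1-z)^p$ is concave with tangent $1-pz$ at $z=0$, so Bernoulli's inequality gives $(1-z)^p\le 1-pz$. The bound $1-pz\le(1-z)^p$ that you cite for the first term is false; take $p=\tfrac12$, $z=1$.

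So with $z_i=\Pr[X_i\ge T]$ and $\prod_i(1-z_i)=1-q$, you correctly get $Q:=\prod_i(1-pz_i)\ge(1-q)^p$ for the surplus term. But then the acceptance probability $1-Q$ is \emph{at most} $1-(1-q)^p$, not at least. You are using one inequality in both directions, so the two coefficients cannot both be certified through the surrogate $(1-q)^p$. This is not just a loose analysis: your calibration $(1-q)^p=\frac{1}{1+p}$ actually breaks the lemma. Take a single variable ($m=1$) with $X_1$ uniform on $[1-\delta,1]$ and $\delta\to0$. Your rule collects about $pq$ with $q=1-(1+p)^{-1/p}$. At $p=\sqrt2-1$ this is about $0.235$, below $\frac{p}{1+p}\approx 0.293$. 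The proposal as written therefore does not establish $3+2\sqrt2$.

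The repair is to calibrate $T$ on the exact subsampled no-acceptance probability, not on the tail of the full maximum. Choose $T$ with $Q_p(T):=\prod_i(1-p\Pr[X_i\ge T])=\frac{1}{1+p}$, which is what the paper does. The $T$-term coefficient is then exactly $1-Q_p(T)=\frac{p}{1+p}$. Bounding every no-earlier-success probability below by $Q_p(T)$ gives a surplus coefficient of at least $p\,Q_p(T)=\frac{p}{1+p}$, with no concavity step at all. Such a $T$ exists by continuity and monotonicity, since $Q_p(0)=(1-p)^m\le 1-p\le\frac{1}{1+p}\le 1=Q_p(+\infty)$.

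A smaller inaccuracy concerns feasibility. A non-adaptive rule cannot stop after one acceptance, so a vertex $v\in B$ may accept several edges, and the accepted set need not be a star forest. The correct argument, as in the paper, is that $v$ has no accepted incident edge until its first above-threshold present edge arrives. That edge therefore cannot close a cycle and is always accepted, and any later acceptances only add value. With these two changes your argument goes through and coincides with the paper's proof.
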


\section{Conclusion and Open Problems}

In this paper, we study the cost of non-adaptivity in matroid prophet inequalities, and demonstrate that for many natural classes of matroids, non-adaptivity introduces additional structural barriers and leads to competitive ratios strictly larger than $2$. We obtain a tight competitive ratio for the special class of truncated $R_1$-partition matroids, and reveal an interesting connection between non-adaptive algorithms and Poisson binomial variables.

For other matroids, our current results are not tight, and the immediate open problem is to close these gaps. In particular, we are most interested in the following two questions:
\begin{enumerate}
    \item Is there a lower bound for simple graphic matroids that is strictly larger than the adaptive benchmark of $2$? As shown in Appendix~\ref{sec:appendix_graphic_simple_fail}, our approach fails to extend to simple graphs, and new constructive ideas are needed. 
    \item We conjecture that for general truncated partition matroids the ratio of $\approx 2.179$ is also tight. Proving this would require a non-trivial extension of the Poisson binomial characterization for truncated $R_1$-partition matroids.
\end{enumerate}

\section*{Acknowledgements}
The author would like to thank Kamesh Munagala, Siddhartha Banerjee, and Yu Cheng for helpful discussions. 

The proofs in this paper are all human-written. ChatGPT-5.5 is used to assist paraphrasing and polishing the text, searching related work, plotting tables and figures, and generating the code for the numeric optimizations in the proof of Theorem~\ref{thm:laminar_lower_bound} and Theorem~\ref{thm:laminar_upper_bound}. In particular, it discovered the connection to prior lower bound on static threshold algorithms for matroid secretary, which led to Theorem~\ref{thm:static-partition-superconstant}.
All generated content was reviewed, verified, and finalized by the author who take full responsibility for the paper’s correctness.

\bibliographystyle{plainnat}
\bibliography{main}

\newpage
\appendix

\section{Discussion on \texorpdfstring{$k$}{k}-Uniform Matroids and Static Thresholds}
\label{sec:discussion}

In this section, we provide some observations that further complete the landscape of different types of algorithms for matroid prophet inequalities. For $k$-uniform matroids, non-adaptive algorithms are, in a worst-case sense, not more powerful than using a static threshold. Further, while static threshold algorithms have strong performance for $k$-uniform matroids, they turn out to be too restrictive for more general classes of matroids: already for partition matroids, static thresholds are not constant-competitive. The results in this section mainly follow from known results that have not been explicitly stated in the non-adaptive setting, and should not be viewed as our novel contributions.

\subsection{\texorpdfstring{$k$}{k}-uniform matroids}

For $k$-uniform matroids, we \textbf{temporarily} change to the notation that the competitive ratio is the ratio between the algorithm and the prophet, so it is always a number less than or equal to $1$, and a larger number is better. This is aligned with the conventional choice of prior papers on this line of work. 

It has long been known that the competitive ratio approaches $1$ as $k\to\infty$ \cite{aaai/HajiaghayiKS07, focs/Alaei11}. For static threshold algorithms, \citet{aaai/HajiaghayiKS07} gives an asymptotic competitive ratio of $1-O(\sqrt{\log k / k})$, while for small values of $k$, \citet{ior/ChawlaDL24} propose an algorithm such that for all $k\geq 2$, the competitive ratio is strictly better than $1/2$. This algorithm is later shown to be optimal among static threshold algorithms by \citet{ec/JiangMZ23}. For adaptive algorithms, \citet{focs/Alaei11} gave an algorithm with competitive ratio $1-1/\sqrt{k+3}$; \citet{ior/JiangMZ25} further improve upon this bound for all $k>1$, and demonstrate that for all $2\leq k\leq 50$, their adaptive algorithm strictly outperforms the optimal static threshold algorithm. For example, when $k=2$, the static threshold algorithm has competitive ratio $\approx 0.5859$, while the adaptive algorithm has competitive ratio $\approx 0.6148$. It is reasonable to believe that the separation also exists for all $k>50$. 

The power and limits of non-adaptive algorithms have not been explicitly considered before in the setting of $k$-uniform matroids. We observe that, perhaps surprisingly, allowing item-specific thresholds does not help if the thresholds must be fixed in advance: non-adaptive algorithms have at most the same worst-case competitive ratio as the optimal static threshold algorithm of \citet{ior/ChawlaDL24}. We provide an explicit counterexample, whose construction and analysis are structurally similar to the counterexample used in the proof of Theorem~\ref{thm:truncated_lower_bound}. Since static threshold algorithms are a subset of non-adaptive algorithms, this counterexample implies that the static threshold algorithm of \citet{ior/ChawlaDL24} is in fact optimal even among non-adaptive algorithms (in the worst-case sense). Previously, \citet{ec/JiangMZ23} proved its optimality among static threshold algorithms via an LP-based argument without explicitly constructing a counterexample, and our observation can be viewed as a concrete and simpler proof of a stronger tightness argument.

\begin{theorem}\label{thm:kuniform-nonadaptive}
For every $k\geq 2$, no non-adaptive algorithm for $k$-uniform matroid prophet inequalities can achieve a better competitive ratio than the optimal static threshold algorithm.
\end{theorem}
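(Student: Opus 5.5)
We will build, for each $k\ge 2$, an explicit hard instance modeled on the one in Theorem~\ref{thm:truncated_lower_bound}, and show that on it every non-adaptive algorithm earns at most $c_k\cdot\E[\opt]$, where $c_k$ is the ratio of the optimal static threshold algorithm of \citet{ior/ChawlaDL24}. Throughout this subsection ratios are written as $\alg/\opt$. The instance has two kinds of items. First come $N$ i.i.d.\ ``small'' items, each taking value $r$ with probability $b$ and $0$ otherwise. We let $b\to 0$ and $Nb\to\infty$, so the number of nonzero small items is effectively Poisson with unbounded mean. Last comes one ``large'' item realizing the classic heavy-tail value, namely $1/\epsilon$ with probability $\epsilon$. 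Further deterministic or two-point large items could be added if the optimization needs them. The prophet then essentially collects $k$ small items of value $r$ plus the large item's expectation, or a similar closed form. The parameter $r$, and the Poisson rate if we cap $N$, are left free and will be tuned.

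\textbf{Reducing non-adaptive algorithms to a single parameter.} By the preliminaries it suffices to consider deterministic thresholds. For each small item, any positive threshold is equivalent to threshold $r$, so the algorithm is determined by the number $M$ of small items it allows. For the final heavy-tailed item, setting threshold $1/\epsilon$ is weakly dominant, and the last items are accepted iff a slot remains. Taking $Mb\to\lambda$, the number of accepted small items is $\min\{P_\lambda,k\}$ with $P_\lambda\sim\pois(\lambda)$, and the algorithm's value becomes
\[
V_r(\lambda)=r\,\mu_k(\lambda)+(\text{large value})\cdot\Pr[P_\lambda\le k-1],
\]
with $\mu_k$ from Definition~\ref{def:zeta}. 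This is exactly the same computation as in Theorem~\ref{thm:truncated_lower_bound}, now with the cap at $k$ instead of $2$. The best non-adaptive algorithm therefore attains $\sup_\lambda V_r(\lambda)$, and the instance yields the upper bound $\min_r \sup_\lambda V_r(\lambda)/\E[\opt_r]$.

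\textbf{Matching to the static-threshold optimum (main obstacle).} The remaining task is to show that this min-sup equals $c_k$. We will use the characterization of \citet{ior/ChawlaDL24}/\citet{ec/JiangMZ23}: $c_k$ is given by a one-parameter Poisson expression in which the static threshold's expected number of exceedances $\lambda$ trades off $\E[\min\{P_\lambda,k\}]/k$ against the probability $\Pr[P_\lambda\le k-1]$ that slots remain. Concretely, their worst case is $\max_\lambda\min\{\mu_k(\lambda)/k,\ \Pr[P_\lambda<k]\}$ or a closely related formula. The plan is to choose $r$, relative to the large value, so that $V_r(\lambda)/\E[\opt_r]$ is a convex combination of these two terms. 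Its maximum over $\lambda$ is then at least as small as the balanced min-max value, giving ratio at most $c_k$. The delicate part is matching the exact formula for $c_k$, including its first-order condition, analogous to the equation for $r^*$ in Theorem~\ref{thm:truncated_lower_bound}. If a single large item does not reproduce the formula, we will instead let the tail instance consist of a heavy-tailed item whose value is spread over $k$ slots, for example $k$ copies with Poisson arrivals, so that the prophet's tail contribution scales with $k$. We then verify by the minimax (LP duality) structure that the adversary's optimal mixing weight gives exactly $c_k$. Since static thresholds form a subclass of non-adaptive algorithms, the two bounds coincide.
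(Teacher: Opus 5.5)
You have the same instance as the paper and the same reduction to the single parameter $\lambda$. Up to scaling they coincide: the paper gives small items value $1$ and the last item expected value $h_k$, i.e.\ your $r=1/h_k$. The decisive step, however, is wrong as stated. Since $\E[\opt_r]\to rk+1$, for \emph{every} $r$ you have
\[
\frac{V_r(\lambda)}{\E[\opt_r]}=w\,\frac{\E[\min\{P_\lambda,k\}]}{k}+(1-w)\Pr[P_\lambda\le k-1],\qquad w=\frac{rk}{rk+1}.
\]
A convex combination dominates the minimum of its two terms. So $\sup_\lambda$ of this expression is always \emph{at least} $\max_\lambda\min\{\cdot,\cdot\}=c_k$, never automatically at most; for $w$ near $1$ it is near $1$. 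Being a convex combination therefore gives nothing. The real work is to choose the one weight for which the crossing point $\lambda_k$ (where both terms equal $c_k$) is the \emph{global} maximizer, and to prove that it is. Your proposal neither specifies this $r$ nor gives the argument. The LP-duality fallback would not supply it either: exchanging $\min_w$ and $\max_\lambda$ yields a max-min over \emph{mixtures} of $\lambda$, which could a priori exceed $c_k$, and excluding that is exactly the missing fact.

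The missing idea is a unimodality argument. Write $\delta_k(\lambda)=\Pr[P_\lambda\le k-1]$ and $p_{k-1}(\lambda)=\Pr[P_\lambda=k-1]$. The identities $\frac{d}{d\lambda}\E[\min\{P_\lambda,k\}]=\delta_k(\lambda)$ and $\delta_k'(\lambda)=-p_{k-1}(\lambda)$ give
\[
V_r'(\lambda)=p_{k-1}(\lambda)\bigl(r\,\delta_k(\lambda)/p_{k-1}(\lambda)-1\bigr).
\]
Take $r=p_{k-1}(\lambda_k)/\delta_k(\lambda_k)$, so that $\lambda_k$ is stationary. The ratio
\[
\frac{\delta_k(\lambda)}{p_{k-1}(\lambda)}=\sum_{t=0}^{k-1}\frac{(k-1)!}{t!}\,\lambda^{t-(k-1)}
\]
is strictly decreasing for $k\ge2$; it is identically $1$ when $k=1$, which is where $k\ge2$ enters. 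Hence $V_r'$ changes sign exactly once, from positive to negative, at $\lambda_k$. This gives $\sup_\lambda V_r(\lambda)=V_r(\lambda_k)=rk\,c_k+c_k=(rk+1)c_k$. So every non-adaptive algorithm, even a randomized one, has ratio at most $c_k+o(1)$, and no additional large items are needed.
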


\begin{proof}
Let $P_\lambda\sim \pois(\lambda)$ with probability mass function $p_t(\lambda) = \Pr[P_{\lambda}=t] = e^{-\lambda}\frac{\lambda^t}{t!}$, and define
\[
\delta_k(\lambda):=\Pr[P_\lambda\leq k-1] ~,
\qquad
\mu_k(\lambda):=\frac{1}{k}\E[\min\{P_\lambda,k\}] ~.
\]
Let $\lambda_k$ be the unique solution of $\delta_k(\lambda_k)=\mu_k(\lambda_k)$, and let $\phi_k:=\delta_k(\lambda_k)=\mu_k(\lambda_k)$. According to \citet{ior/ChawlaDL24, ec/JiangMZ23}, the optimal static threshold algorithm has competitive ratio $\phi_k$. Next we show that for all $k\geq 2$, there is an instance on which every non-adaptive algorithm also obtains expected value at most $(\phi_k+o(1))\E[\opt]$.

\paragraph{Construction.}
Construct an instance with $N+1$ items $[N]\cup \{g\}$. 
\begin{itemize}
    \item The first $N$ items have identical distributions and take value $X_i=1$ with probability $b$, and $X_i=0$ otherwise.
    \item Let $h_k=\frac{\delta_k(\lambda_k)}{p_{k-1}(\lambda_k)}$. The final item $g$ takes value $h_k/\eps$ with probability $\eps$ and $0$ otherwise. 
\end{itemize} 
We choose the parameters such that $b \to 0,  Nb \to \infty, \eps \to 0$.

\paragraph{Prophet value.}
Since we choose $Nb\to\infty$, at least $k$ items in $[N]$ realize value $1$ with probability $1-o(1)$. Therefore, if $g$ takes value $0$, the prophet obtains expected value $k-o(1)$ from $[N]$. Otherwise, the prophet takes $g$ and $k-1$ largest items in $[N]$, obtaining expected value $h_k/\eps + k-1-o(1)$. Hence
\[
\E[\opt] = (1-\eps)(k-o(1))
+ \eps\left(\frac{h_k}{\eps} + k-1-o(1)\right) = k+h_k-o(1) ~.
\]

\paragraph{Non-Adaptive Algorithms.}
Essentially, a non-adaptive algorithm only needs to decide for how many items in $[N]$ it sets a positive threshold. Let $\alg_M$ denote the algorithm that sets $M$ positive thresholds, and let $Z\sim\operatorname{Bin}(M,b)$ be the number of items among the $M$ chosen items that realize value $1$. The algorithm gains value $\min\{Z,k\}$ from $[N]$, and only if $Z<k$ the algorithm can obtain additional expected value $h_k$ from $g$. Thus, for each $M$, we have
\[
\E[\alg_M]=\E[\min\{Z,k\}]+h_k\Pr[Z\le k-1] ~.
\]

Suppose $Mb\to\lambda$ for some constant $\lambda \geq 0$, recall that we take $b\to 0$, so $Z$ converges to $P_\lambda\sim\pois(\lambda)$, and hence
\[
\E[\alg_M]\to
\E[\min\{P_\lambda,k\}]+h_k\delta_k(\lambda):= V_k(\lambda) ~.
\]

It remains to maximize $V_k(\lambda)$. To differentiate $V_k(\lambda)$, we use the following elementary properties of Poisson variables:
\[
\frac{d}{d\lambda}\E[\min\{P_\lambda,k\}]=\delta_k(\lambda) ~,
\qquad
\frac{d}{d\lambda}\delta_k(\lambda)=-p_{k-1}(\lambda) ~.
\]
Therefore $V_k'(\lambda)=\delta_k(\lambda)-h_kp_{k-1}(\lambda)$. By definition of $h_k$, this is equal to $0$ when $\lambda = \lambda_k$.

Moreover, we have
\[
\frac{\delta_k(\lambda)}{p_{k-1}(\lambda)} = \sum_{t=0}^{k-1}\frac{p_t(\lambda)}{p_{k-1}(\lambda)} = \sum_{t=0}^{k-1}\frac{(k-1)!}{t!}\lambda^{t-(k-1)} ~.
\]
For $k\ge2$, this is strictly decreasing in $\lambda$, hence $\frac{\delta_k(\lambda)}{p_{k-1}(\lambda)}>
\frac{\delta_k(\lambda_k)}{p_{k-1}(\lambda_k)}
=h_k$ for all $\lambda < \lambda_k$, which implies $V'_k(\lambda)>0$; similarly we can show $V_k'(\lambda)<0$ for all $\lambda>\lambda_k$.

Thus we conclude that $V_k(\lambda)$ is uniquely maximized at $\lambda=\lambda_k$, and
\[
V_k(\lambda_k) = \E[\min\{P_{\lambda_k},k\}]+h_k\delta_k(\lambda_k) = (k+h_k)\cdot \phi_k ~.
\]

Therefore, for this instance, every non-adaptive algorithm satisfies
\[
\frac{\E[\alg]}{\E[\opt]} \leq \frac{(k+h_k)\cdot \phi_k +o(1)}{k+h_k-o(1)} = \phi_k+o(1) ~. \qedhere
\]
\end{proof}

\subsection{The limitation of static thresholds beyond \texorpdfstring{$k$}{k}-uniform matroids}

The strength of static threshold algorithms no longer holds for more structured matroids. We show that even for partition matroids, no static threshold can be constant-competitive. Note that a non-adaptive algorithm can still achieve competitive ratio at most $2$ for partition matroids: for each partition, set the optimal static threshold. However, a static threshold algorithm must use the \textbf{same} threshold for all partitions, which leads to super-constant competitive ratio in the worst case.

\begin{theorem}
\label{thm:static-partition-superconstant}
For every $m>1$, there exists a partition matroid prophet inequality instance with $m$ parts in which all variables are deterministic, such that every static threshold algorithm has competitive ratio $\Omega(\log m/\log\log m)$.
\end{theorem}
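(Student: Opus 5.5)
Since all variables are deterministic and every part has rank $1$, a static threshold algorithm with threshold $T$ accepts, in each part, the first arriving item of value at least $T$. The prophet takes the maximum of each part. We want deterministic values in which each part's maximum is worth about the same to the prophet, but any single $T$ either rejects the valuable items in most parts or lets cheap early items fill them. The natural construction is a geometric-scale one, as in the known static-threshold lower bounds for matroid secretary. Take $L\approx \log m/\log\log m$ scale levels $\ell=1,\dots,L$ with base $B\approx \log m$. We give part group $\ell$ about $m_\ell$ parts, with $m_\ell\propto B^{-\ell}$ times a normalization. Each part in group $\ell$ contains items arriving in increasing order of value, namely $B^{1},B^{2},\dots,B^{\ell}$. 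The prophet earns $B^\ell$ from each part of group $\ell$.

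The counts are chosen so that every group contributes equally: $m_\ell B^\ell = C$ for all $\ell$, which gives $\E[\opt]=LC$. The total number of parts is $\sum_\ell C B^{-\ell}\approx C/B$. Setting this equal to $m$ fixes $C\approx mB$. This requires $m_L=CB^{-L}\ge 1$, i.e.\ $B^{L}\lesssim mB$, which holds for $L=\log m/\log B$.

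Now fix $T$, and say $B^{j-1}<T\le B^j$ (the extreme cases are similar). In group $\ell<j$, every item is below $T$, so those groups contribute nothing. In group $\ell\ge j$, the first item at least $T$ is $B^{j}$, which arrives before the larger ones and blocks the part. That group therefore yields $m_\ell B^{j}=CB^{j-\ell}$. Summing over $\ell\ge j$ gives at most $C\sum_{d\ge 0}B^{-d}\le 2C$. So $\E[\alg]\le 2C$ while $\E[\opt]=LC$, and the ratio is $L/2=\Omega(\log m/\log\log m)$. Randomized static thresholds do no better, since the bound holds for each $T$.

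The main obstacle is the bookkeeping. The counts $m_\ell$ must be integers, and the total must be exactly $m$; this is handled by rounding and by padding with empty or zero-valued parts. The choice of $B$ must also be tuned. With counts $m_\ell = CB^{-\ell}$ the argument above works for any $B\ge 2$, so in fact we can take $B=2$ and $L=\Theta(\log m)$, which already beats $\log m/\log\log m$. I would double-check this. If the arrival order is adversarially fixed, the argument stands. If the intended model is that each part has a single deterministic item, blocking cannot occur, and a different construction would be needed. In that case the key step, and where I would spend effort, is using multiple items per part with increasing values to create the blocking effect.
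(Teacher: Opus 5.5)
Your argument is correct, and it takes a genuinely different route from the paper.

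The paper uses $m$ parts of only two items each: $a_i$ of value $1/(Ci)$ followed by $b_i$ of value $1/i$, so $\opt=H_m$. A threshold $\tau$ collects high items only on the window $1/(C\tau)<i<1/\tau$, whose harmonic mass is $O(\log C)$. It also collects low items worth at most $H_m/C$. Choosing $C=\lceil\log m\rceil$ balances these to $O(\log\log m)$, with an $O(1)$ slack for ties.

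Your increasing chains $B,B^2,\dots,B^\ell$, with multiplicities $B^{L-\ell}$, eliminate that window. Any threshold is trapped at the first chain item of its own scale, so it collects a geometric series dominated by one level, while all $L$ levels contribute equally to $\opt$.

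The details hold up:
\begin{itemize}
\item The bookkeeping is as easy as you say: take $B$ an integer, $C=B^L$, $L$ maximal with $(B^L-1)/(B-1)\le m$, and pad with zero-valued parts.
\item Your case split $B^{j-1}<T\le B^j$ covers every $T$ exactly under the acceptance rule $X\ge T$.
\item A randomized tie-breaking coin at $T=B^j$ changes nothing. Each part has at most one tied item and parts do not interact, so the expectation is a convex combination of the two deterministic cases.
\end{itemize}

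Your suspicion about $B=2$ is also right. With $2^{L-\ell}$ parts at level $\ell$ there are $2^L-1$ parts and fewer than $2^{L+1}$ items, $\opt=L2^L$, and every threshold earns less than $2^{L+1}$. So the ratio is $\Omega(\log m)$, which is stronger than the theorem. It is also tight up to constants for deterministic rank-$1$ instances: bucketing the part maxima into $O(\log m)$ dyadic scales shows that some threshold is $O(\log m)$-competitive.

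What the paper's construction buys is minimality. It has two items per part, as a direct transcription of the matroid-secretary example of Babaioff, Immorlica, and Kleinberg. Yours needs parts with up to $L$ items.

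Your closing hedge is unnecessary. Parts may contain several items in a fixed order, and the paper's own instance relies on exactly this blocking effect.
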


\begin{proof}
    Let $C>1$ be a parameter to be chosen later. The matroid has $m$ parts $B_1,\ldots,B_m$, each of rank $1$.  Part $B_i$ contains two elements $a_i$ and $b_i$, arriving in this order, with deterministic values $X_{a_i}=1/(Ci)$ and $X_{b_i}=1/i$.

    The prophet takes every $b_i$, so $\opt=\sum_{i=1}^m 1/i:=H_m$. Let $\alg_\tau$ be the value of a static threshold algorithm with threshold $\tau$. If $\tau=0$, then the algorithm obtains value $\sum_{i=1}^m 1/(Ci) = H_m/C$. Otherwise, for all $1\leq i<1/(C\tau)$, we have $X_{a_i}>\tau$, so the algorithm accepts $X_{a_i}$; for all $1/(C\tau)<i<1/\tau$, we have $X_{a_i}<\tau<X_{b_i}$, so the algorithm accepts $X_{b_i}$; for all $i>1/\tau$, we have $X_{a_i}<X_{b_i}<\tau$ and the algorithm obtains value $0$. The remaining values of $i$ involve tie-breaking, but we can simply upper bound their contribution by $O(1)$. Therefore
    \[
    \alg_\tau\le \sum_{i< 1/(C\tau)}\frac1{Ci}+\sum_{1/(C\tau)<i< 1/\tau}\frac1i+O(1)\le \frac{H_m}{C}+O(\log C) ~.
    \]
    Choosing $C=\lceil\log m\rceil$, we get $\alg_\tau=O(\log\log m)$. Hence every static threshold algorithm has competitive ratio $\Omega(\log m/\log\log m)$.
\end{proof}

The construction is similar to the one used in \citet{soda/BabaioffIK07} to show hardness for single-threshold algorithms in the matroid secretary setting, and we adapt it into the prophet inequality language. Since partition matroids are laminar matroids, the super-constant lower bound also extends to laminar matroids. We can also extend the construction to show super-constant lower bound for graphic matroids, even for simple graphs.

\paragraph{Extension to graphic matroids.} 
For general graphs, we can simply create a chain of $n$ pairs of parallel edges, and obtain an instance that is equivalent to the one used in the proof of Theorem~\ref{thm:static-partition-superconstant}. For simple graphs, we replace each pair of parallel edges with a triangle. In the $i$-th triangle, two edges have deterministic value $1/(Ci)$ and the other edge has deterministic value $1/i$. Then without tie-breaking, any static threshold algorithm can only get $2/(Ci)$ or $1/i$ from each triangle, and tie-breaking can only contribute an additional $O(1)$. The full analysis is similar to the proof of Theorem~\ref{thm:static-partition-superconstant} and is omitted.

\section{Deferred Proofs from Section~\ref{sec:laminar}}
\label{sec:appendix_laminar}

\subsection{Deferred Proofs from Section~\ref{sec:truncated_R1_partition_upper_bound}}
\label{sec:appendix_truncated_deferred_proofs}

We first restate the notations and provide some technical tools.

\begin{definition}[Restatement of Definition~\ref{def:zeta}]
For $\lambda\geq 0$, let $P_\lambda\sim\pois(\lambda)$, and define
\[
    \mu_k(\lambda):=\E[\min\{P_\lambda,k\}] ~.
\]
For $m\in[0,k)$, let $\lambda_k(m)$ be the unique value satisfying $\mu_k(\lambda_k(m))=m$, and set $\lambda_k(k)=\infty$. Define
\[
    \zeta_k(m):=\Pr[P_{\lambda_k(m)}\geq k] ~.
\]
Equivalently, if $\mu_k(\lambda)=m$, then $\zeta_k(m)=\Pr[P_\lambda\geq k]$.
\end{definition}

\begin{lemma}
\label{lem:zeta_property}
For every $k\geq 1$, the function $\mu_k$ is continuous and strictly increasing from $0$ to $k$. Moreover, for $m=\mu_k(\lambda)\in(0,k)$, we have $\zeta_k'(m) \leq 1$, and $\zeta_k(m)-m$ is nonincreasing in $m$.
\end{lemma}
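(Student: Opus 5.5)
Everything will be parametrized by $\lambda$. I will use the standard Poisson identities already used in the proof of Theorem~\ref{thm:kuniform-nonadaptive}. Writing $p_t(\lambda)=e^{-\lambda}\lambda^t/t!$, we have
\[
\frac{d}{d\lambda}\mu_k(\lambda)=\Pr[P_\lambda\le k-1] ~,\qquad \frac{d}{d\lambda}\Pr[P_\lambda\ge k]=p_{k-1}(\lambda) ~.
\]
The first identity follows from $\min\{P,k\}=\sum_{t=1}^{k}\mathbf 1\{P\ge t\}$ and $\frac{d}{d\lambda}\Pr[P_\lambda\ge t]=p_{t-1}(\lambda)$.

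\emph{Properties of $\mu_k$.} The derivative $\Pr[P_\lambda\le k-1]$ is positive for every finite $\lambda$, so $\mu_k$ is continuous and strictly increasing. Moreover $\mu_k(0)=0$ and $\mu_k(\lambda)\to k$ as $\lambda\to\infty$, since $\Pr[P_\lambda\ge k]\to1$. This shows that $\lambda_k(m)$ is well defined and differentiable on $(0,k)$ by the inverse function theorem.

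\emph{The derivative bound.} By the chain rule, at $m=\mu_k(\lambda)$,
\[
\zeta_k'(m)=\frac{p_{k-1}(\lambda)}{\Pr[P_\lambda\le k-1]}=\frac{p_{k-1}(\lambda)}{\sum_{t=0}^{k-1}p_t(\lambda)} ~.
\]
This is at most $1$, because the numerator is one of the nonnegative terms in the denominator. Hence $\frac{d}{dm}(\zeta_k(m)-m)=\zeta_k'(m)-1\le 0$ on $(0,k)$, so $\zeta_k(m)-m$ is nonincreasing there. It extends to the closed interval $[0,k]$ by continuity, since $\zeta_k(m)\to1$ as $m\to k$, matching $\lambda_k(k)=\infty$. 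For $k=1$ the ratio equals exactly $1$, so $\zeta_1(m)-m$ is constant, namely $\zeta_1(m)=m$. This is consistent with the statement.

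I expect no real obstacle. The only care needed is justifying differentiability of the inverse, which holds because $\mu_k'>0$, and handling the endpoint $m=k$. This lemma also gives the monotonicity of $\zeta_k$ used in the proof of Theorem~\ref{thm:truncated-R1-partition}, since $\zeta_k'\ge0$.
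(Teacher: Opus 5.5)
Your proposal is correct and follows essentially the same route as the paper's proof. Both use $\mu_k'(\lambda)=\Pr[P_\lambda\le k-1]>0$ for monotonicity, and both apply the chain rule through the inverse $\lambda_k$ to get $\zeta_k'(m)=\Pr[P_\lambda=k-1]/\Pr[P_\lambda\le k-1]\le 1$. Your added justifications (the inverse function theorem, extension to the endpoints by continuity, and the $k=1$ case $\zeta_1(m)=m$) are harmless refinements of the same argument.
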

\begin{proof}
The continuity and strict monotonicity of $\mu_k$ follow from the standard identity $\mu_k'(\lambda) = \Pr[P_\lambda \leq k-1] > 0$.
Moreover,
$\mu_k(0)=0$ and $\mu_k(\lambda)\to k$ as $\lambda\to\infty$.

Next, for $m=\mu_k(\lambda)\in(0,k)$, since $\zeta_k(m) = \Pr[P_{\lambda}\geq k]$, by the chain rule, 
\[
    \zeta_k'(m) = (\Pr[P_\lambda\geq k])' \cdot \lambda_k'(m) ~.
\]
The first term is equal to $\Pr[P_{\lambda}=k-1]$, and the second term is equal to $\frac{1}{\mu_k'(\lambda)} = \frac{1}{\Pr[P_\lambda \leq k-1]}$ since $\lambda_k$ is the inverse function of $\mu_k$. So we conclude that $\zeta_k'(m) = \frac{\Pr[P_\lambda=k-1]}{\Pr[P_\lambda\leq k-1]}$. Since $\Pr[P_\lambda=k-1]\leq \Pr[P_\lambda\leq k-1]$ for all $\lambda\geq 0$, we have
$\zeta_k'(m)\leq 1$. 

Therefore, $(\zeta_k(m)-m)' =\zeta_k'(m)-1\leq 0$, which implies $\zeta_k(m)-m$ is nonincreasing.
\end{proof}

\begin{lemma}
\label{lem:Poisson binomial-log-concavity}
Let $S$ be a Poisson binomial random variable and write $s_t:=\Pr[S=t]$ and $L_t:=\Pr[S\leq t]$. Then the sequence $(s_t)$ is log-concave. In particular, this implies $s_t/L_t$ is nonincreasing in $t$.
\end{lemma}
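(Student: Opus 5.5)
We need to prove that the pmf of a Poisson binomial variable is log-concave, and that $s_t/L_t$ is nonincreasing in $t$.

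\textbf{Log-concavity.} We write $S=\sum_{j=1}^n B_j$ with independent $B_j\sim\ber(p_j)$, so the pmf of $S$ is the convolution of the Bernoulli pmfs. Each Bernoulli pmf $(1-p_j,p_j)$ is trivially log-concave as a sequence with no internal zeros. We will then invoke the classical fact that convolution preserves log-concavity of sequences without internal zeros. This is a discrete Pr\'ekopa-type statement; it also follows from the Cauchy--Binet argument showing that the Toeplitz matrix of a P\'olya frequency sequence of order $2$ is totally positive of order $2$, and that TP$_2$ is closed under products.

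Alternatively, we can give a self-contained route via real-rootedness. The generating polynomial $\sum_t s_t z^t=\prod_j(1-p_j+p_jz)$ has only real (nonpositive) roots. By Newton's inequalities, $s_t^2\ge s_{t-1}s_{t+1}\bigl(1+\tfrac1t\bigr)\bigl(1+\tfrac1{n-t}\bigr)\ge s_{t-1}s_{t+1}$. We would take this route, since it is a short citation. The support of $(s_t)$ is an interval, so there are no internal zeros to worry about.

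\textbf{Monotonicity of $s_t/L_t$.} Log-concavity with no internal zeros gives that $s_{t+1}/s_t$ is nonincreasing in $t$ on the support. We want $s_{t+1}/L_{t+1}\le s_t/L_t$, which is equivalent to $s_{t+1}L_t\le s_t L_{t+1}=s_t L_t+s_t s_{t+1}$, i.e.
\[
s_{t+1}L_t-s_tL_t\le s_ts_{t+1}.
\]
A cleaner route is to compare reciprocals: $L_t/s_t=\sum_{u\le t}s_u/s_t$. For each $u\le t$, we have $s_u/s_t=\prod_{v=u}^{t-1}s_v/s_{v+1}$, and since the ratios $s_v/s_{v+1}$ are nondecreasing in $v$,
\[
\frac{s_u}{s_t}\le \frac{s_{u+1}}{s_{t+1}}.
\]
Summing over $u=0,\dots,t$ gives
\[
\frac{L_t}{s_t}\le \frac{L_{t+1}-s_0}{s_{t+1}}\le \frac{L_{t+1}}{s_{t+1}}.
\]
Hence $s_t/L_t$ is nonincreasing wherever $s_t>0$.

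\textbf{Edge cases.} Beyond the top of the support, $s_t=0$, so the ratio is $0$ and monotonicity persists. The lower end of the support is $t=0$ unless some $p_j=1$; in that case we restrict to the support, where $L_t=0$ below it makes the ratio undefined. The only mildly delicate points are therefore these zero-handling issues and citing the convolution/Newton fact. I expect no real obstacle.
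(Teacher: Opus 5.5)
Your proposal is correct, but your main route for log-concavity differs from the paper's. The paper proves the lemma in one line by citing Hoggar's theorem (convolution of log-concave sequences without internal zeros is log-concave), applied to the Bernoulli summands; this is the first option you mention and then set aside.

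You instead commit to the real-rootedness route. The generating polynomial $\prod_j(1-p_j+p_jz)$ has only nonpositive real roots, and Newton's inequalities then give $s_t^2\ge s_{t-1}s_{t+1}(1+1/t)(1+1/(n-t))$.

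Your route is specific to sums of Bernoullis, but it yields the stronger ultra-log-concavity. Hoggar's theorem is more general, since it covers sums of arbitrary independent log-concave integer variables, but it only gives plain log-concavity. Plain log-concavity is all that is needed here.

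The paper also treats the consequence ``$s_t/L_t$ is nonincreasing'' as immediate, whereas you actually prove it. You use $s_u/s_t\le s_{u+1}/s_{t+1}$ for $u\le t$ and sum over $u$ to get $L_t/s_t\le (L_{t+1}-s_0)/s_{t+1}$. That fills a small step the paper leaves implicit.

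What the paper actually uses later is the cross-multiplied form $s_{k-2}L_{k-1}\ge s_{k-1}L_{k-2}$. This follows directly by summing the termwise inequality $s_{t+1}s_u\le s_t s_{u+1}$ over $u\le t$. It also holds trivially in the zero cases you flag, so your edge-case handling is sufficient.
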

This follows from Hoggar's theorem that convolution preserves log-concavity for independent integer-valued random variables; see, e.g., \cite{hoggar1974chromatic,johnson2006preservation}.

\begin{lemma}[Restatement of Lemma~\ref{lem:capped_poisson_binomial}]
\label{lem:capped_poisson_binomial_restate}
    Let $X$ be a Poisson binomial random variable, and suppose a random variable $Z$ and an event $\mathcal E$ satisfy $\mathcal E \subseteq \{X\geq k\}$ and $Z=\min\{X,k-1\}+\mathbf 1_{\mathcal E}$, then $\Pr[\mathcal E]\leq \zeta_k(\E[Z])$.
\end{lemma}

The proof is a corollary of the following more general statement.

\begin{lemma}
\label{lem:bernoulli_splitting}
Let $X$ be a Poisson binomial random variable, define
\[
    M_k(X):=\E[\min\{X,k\}],\qquad T_k(X):=\Pr[X\geq k].
\] 
Then $T_k(X)\leq \zeta_k(M_k(X))$ for all integer $k\geq 2$.
\end{lemma}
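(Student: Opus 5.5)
Since $\zeta_k$ is defined as the "tail given capped mean" map for Poisson variables, the plan is to show that among Poisson binomial variables with fixed $M_k(X)$, the Poisson law maximizes $T_k$. Concretely, write $X=\sum_{i=1}^n B_i$ with $B_i\sim\ber(p_i)$. We transform $X$ toward a Poisson variable by a \emph{splitting} operation: replace one Bernoulli $B_i\sim\ber(p)$ by two independent Bernoullis $\ber(p_1),\ber(p_2)$ that keep the capped mean $M_k$ unchanged. We then show $T_k$ does not decrease under this operation. Repeating the splitting indefinitely (each coordinate into many tiny Bernoullis), the sum converges in distribution to a Poisson variable $P_\lambda$. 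Since $M_k$ and $T_k$ are continuous under this convergence (bounded/indicator functionals of integer variables), the limit satisfies $M_k(P_\lambda)=M_k(X)$. Hence $T_k(X)\le \Pr[P_\lambda\ge k]=\zeta_k(M_k(X))$ by the definition of $\zeta_k$.

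The key local computation is as follows. Let $Y$ denote the sum of the other Bernoullis, with $y_t=\Pr[Y=t]$ and $L_t=\Pr[Y\le t]$. Adding an independent variable $W$ changes the functionals linearly in the law of $W$, through
\[
\E[\min\{Y+W,k\}]-\E[\min\{Y,k\}]=\sum_{w}\Pr[W=w]\sum_{j=0}^{w-1}\Pr[Y\le k-1-j],
\]
and similarly $\Pr[Y+W\ge k]-\Pr[Y\ge k]=\sum_w \Pr[W=w]\sum_{j=0}^{w-1}y_{k-1-j}$. Compare $W=\ber(p)$ with $W'=\ber(p_1)+\ber(p_2)$, which has $\Pr[W'=2]=p_1p_2$. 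For $W$, the marginal "mass per unit of capped mean" is $y_{k-1}/L_{k-1}$. For the second unit of $W'$, the ratio is $y_{k-2}/L_{k-2}$. Matching the $M_k$ increments and using Lemma~\ref{lem:Poisson binomial-log-concavity} ($y_t/L_t$ nonincreasing, applied to the Poisson binomial $Y$), the tail gain of $W'$ should be at least that of $W$. Concretely, choose $p_1=p_2=q$ with $2qL_{k-1}+q^2(L_{k-2}-L_{k-1})=pL_{k-1}$. Then the difference in $T_k$ gains reduces to $q^2\,(y_{k-2}L_{k-1}-y_{k-1}L_{k-2})/L_{k-1}\ge 0$ after substitution, which is exactly the log-concavity ratio inequality. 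I would verify this algebra carefully, including that a valid $q\in[0,p]$ exists.

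To pass to the limit cleanly, I would instead perform the split and then note that for fixed total $\sum p_i$ the variable approaches Poisson. A cleaner route is to parametrize: for each $X$ consider the Poisson binomial family with all $p_i$ replaced by $n'$ copies adjusted to preserve $M_k$, and use compactness. I expect the \textbf{main obstacle} to be the limiting argument: keeping $M_k$ exactly fixed while the maximal $p_i\to 0$, and showing the resulting Poisson parameter is well defined. I would handle this by noting that $M_k$ is continuous and strictly increasing along each splitting, so at every stage we can rescale, and then invoke Le Cam's inequality, which bounds the total variation distance to $\pois(\sum p_i)$ by $\sum p_i^2\to 0$. A secondary check is the case $M_k(X)$ close to $k$ (where $\lambda_k(m)\to\infty$), which is handled by monotonicity of $\zeta_k$ from Lemma~\ref{lem:zeta_property}. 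The corollary Lemma~\ref{lem:capped_poisson_binomial_restate} then follows because $\E[Z]\ge M_k(X)-T_k(X)+\Pr[\mathcal E]$, combined with $\zeta_k(m)-m$ nonincreasing.
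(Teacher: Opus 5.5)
Your one-step computation is exactly the paper's. Splitting $\ber(p)$ into two independent $\ber(q)$ with $2qL_{k-1}-q^2y_{k-1}=pL_{k-1}$ preserves $M_k$, and the tail changes by $q^2(y_{k-2}L_{k-1}-y_{k-1}L_{k-2})/L_{k-1}\ge 0$ by log-concavity of $Y$. The overall plan (iterate the split, pass to a Poisson limit) is also the paper's. The gap is in the limit passage: you flag it as the main obstacle but do not actually close it, and the tool you name does not suffice as stated.

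\textbf{Bounded mean.} Le Cam's bound $\sum_i p_i^2$ tends to $0$ only if the total mean $\sum_i p_i$ stays bounded. This is not automatic, because each split \emph{increases} the mean: $2q=p+q^2y_{k-1}/L_{k-1}\ge p$. The missing step, which the paper supplies, runs as follows.
\begin{enumerate}
\item For a Poisson binomial, $\operatorname{Var}(X_r)\le\E[X_r]$.
\item So if $\E[X_r]\to\infty$ along a subsequence, Chebyshev gives $\Pr[X_r<k]\to 0$, hence $M_k(X_r)\to k$.
\item This contradicts $M_k(X_r)=M_k(X)<k$.
\item Bounded means then give a subsequence with $\E[X_r]\to\lambda<\infty$, and hence a genuine $\pois(\lambda)$ limit with $M_k(P_\lambda)=M_k(X)$.
\end{enumerate}
Alternatively, the Stein--Chen refinement $d_{TV}\le\lambda^{-1}(1-e^{-\lambda})\sum_i p_i^2\le\max_i p_i$, combined with continuity of $\zeta_k$ on $[0,k)$, avoids needing bounded means at all.

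\textbf{Vanishing maximum probability.} You also need $\max_i p_i\to 0$, which requires a uniform contraction, not just $q\le p$. From $y_{k-1}\le L_{k-1}$ you get $p\ge 2q-q^2$, i.e.\ $q\le 1-\sqrt{1-p}<p$ for $p\in(0,1)$. When $p=1$, the split can yield $q=1$ only if $Y\ge k-1$ a.s., which would force $X\ge k$ a.s.

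\textbf{Other points.}
\begin{itemize}
\item Drop the ``rescaling'' and ``$n'$ copies adjusted to preserve $M_k$'' ideas. The split already preserves $M_k$ exactly, and a many-way split is not covered by your one-step monotonicity.
\item The degenerate case to set aside is $M_k(X)=k$, which is trivial since $\zeta_k(k)=1$. It is the only way $L_{k-1}$ can vanish, so your division by $L_{k-1}$ is safe otherwise.
\item Values of $M_k(X)$ merely close to $k$ need no special treatment.
\end{itemize}
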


\begin{proofof}{Lemma~\ref{lem:capped_poisson_binomial_restate}}
Let $M=M_k(X)$, then since $\mathcal{E}\subseteq \{X\geq k\}$ and $Z=\min\{X,k-1\}+\mathbf 1_{\mathcal E}$, we have $\E[Z]\leq M$. Further, taking expectations in $1_{\mathcal E} = Z-\min\{X,k-1\}$,
\[
    \Pr[\mathcal E]=\E[Z]-\E[\min\{X,k-1\}] ~.
\]
Since $\E[\min\{X,k-1\}]=M-T_k(X)$, by Lemma~\ref{lem:bernoulli_splitting} we get
\[
    \Pr[\mathcal E]=\E[Z]-M+T_k(X)\leq \E[Z]-M+\zeta_k(M) \leq \zeta_k(\E[Z]) ~.
\]
The last inequality follows from Lemma~\ref{lem:zeta_property} and $\E[Z]\leq M$.
\end{proofof}

\begin{proofof}{Lemma~\ref{lem:bernoulli_splitting}}
For the corner case $M_k(X)=k$ and $T_k(X)=1$, the claim is trivial. Throughout the rest of the proof we assume $M_k(X)<k$ and $T_k(X)<1$.

At a high level, the proof first shows that one can replace a Bernoulli trial by two smaller independent Bernoulli trials so that the capped mean $M_k(X)=\E[\min\{X,k\}]$ is preserved and the tail $T_k(X)=\Pr[X\geq k]$ weakly increases. Iterating this splitting operation gives a sequence of Poisson binomial variables whose maximum Bernoulli probability tends to $0$, while preserving $M_k$ and weakly increasing $T_k$. Therefore, for fixed $M=M_k(X)$, the largest possible value of $T_k(X)$ is achieved in the Poisson limit, namely by the Poisson variable $P_\lambda$ satisfying $\E[\min\{P_\lambda,k\}]=M$.

We start with proving the one-step replacement. Decompose $X$ into $X=S+\ber(p)$, where $S$ can be viewed as a Poisson binomial random variable independent of $\ber(p)$. Write
\[
    L_t:=\Pr[S\leq t], \qquad s_t:=\Pr[S=t] ~.
\]
By the assumption $T_k(X)<1$, we have $L_{k-1}>0$. $\ber(p)$ contributes to $M_k(X)$ when $S\leq k-1$, so 
\[
    M_k(X) = \E[\min\{S,k\}] + p L_{k-1} ~.
\]
Replacing $\ber(p)$ by two independent copies of $\ber(q)$, i.e., let $X' = X-\ber(p) + 2\ber(q)$, then
\[
    M_k(X') = \E[\min\{S,k\}] + 2q L_{k-1}-q^2 s_{k-1} ~.
\]
We choose $q\in[0,1]$ so that $M_k(X') = M_k(X)$, i.e., $2qL_{k-1}-q^2s_{k-1}=pL_{k-1}$.

Now we compare $T_k(X)$ and $T_k(X')$. We have
\[
T_k(X) = \Pr[S\geq k] + ps_{k-1} ~,
\]
and
\[
T_k(X') = \Pr[S\geq k] + (2q-q^2)s_{k-1} + q^2 s_{k-2} ~.
\]
Therefore, plugging in $p = 2q - q^2 s_{k-1}/L_{k-1}$, we get
\[
    T_k(X')-T_k(X) = q^2\left(\frac{s_{k-1}^2}{L_{k-1}}+s_{k-2}-s_{k-1}\right) ~.
\]
It remains to show that the term inside the parentheses is nonnegative, which is equivalent to
\[
    s^2_{k-1} + s_{k-2}L_{k-1} - s_{k-1}(L_{k-2}+s_{k-1}) =s_{k-2}L_{k-1}-s_{k-1}L_{k-2} \geq 0 ~.
\]
This follows from Lemma~\ref{lem:Poisson binomial-log-concavity}.

Given the one-step replacement argument, we repeatedly apply it to every Bernoulli trial. Let $X_r$ be the resulting Poisson binomial variable after $r$ steps of replacement. Then for every $r$,
\[
    M_k(X_r)=M_k(X), \qquad T_k(X_r)\geq T_k(X) ~.
\]
Moreover, the maximum Bernoulli probability in $X_r$ tends to $0$. Indeed, from $2qL_{k-1}-q^2s_{k-1}=pL_{k-1}$ and $s_{k-1}\leq L_{k-1}$, we have $p\geq 2q-q^2$, and hence $q\leq 1-\sqrt{1-p}<p$ whenever $0<p<1$. If $p=1$, then $q=1$ can only happen when $L_{k-2}=0, L_{k-1}=1$, in which case $S\geq k-1$ holds almost surely, so $X=S+\ber(1) \geq k$ holds almost surely, contradicting our assumption $M_k(X)<k$.

Next we show that the means $\E[X_r]$ remain bounded. Otherwise, along a subsequence we have $\E[X_r]\to\infty$. Since $X_r$ is Poisson binomial, $\operatorname{Var}(X_r)\leq \E[X_r]$, so Chebyshev's inequality gives $\Pr[X_r<k]\to0$. This implies $M_k(X_r)\to k$, contradicting $M_k(X_r)=M_k(X)<k$.

Thus, along some subsequence, $\E[X_r]\to\lambda<\infty$. Since the maximum Bernoulli probability tends to $0$, the law of small numbers implies that $X_r$ converges in distribution to $P_\lambda$ along this subsequence. Therefore $M_k(X)=M_k(P_\lambda)$. We conclude that
\[
    T_k(X)\leq \lim_r T_k(X_r)=T_k(P_\lambda)=\zeta_k(M_k(X)) ~. \qedhere
\]
\end{proofof}

\begin{lemma}[Restatement of Lemma~\ref{lem:alpha_tight_all_k}]
\label{lem:alpha_tight_all_k_restate}
Let $R(r)=\frac{2+r}{2r+(1-r)\exp(-r/(1-r))}$ be the lower bound ratio from Theorem~\ref{thm:truncated_lower_bound}, and $r^*$ be the unique maximizer of $R(r)$ on $(0,1)$. Set $\alpha^*:=1/R(r^*)$, then for every $k\geq 2$ and every $a\in[0,1]$, we have
\[
    G_k(a) = 1-\alpha^*a-\zeta_k(\alpha^*(k-a))\geq \alpha^*.
\]
\end{lemma}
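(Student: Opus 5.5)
The plan is to reduce the two-parameter claim to the single point $a=1$, then treat $k=2$, where equality should hold, separately from $k\ge 3$.

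\textbf{Step 1: reduce to $a=1$.} We have $G_k(a)=1-\alpha^*a-\zeta_k(\alpha^*(k-a))$, so
\[
G_k'(a)=-\alpha^*+\alpha^*\zeta_k'(\alpha^*(k-a)) .
\]
Lemma~\ref{lem:zeta_property} gives $\zeta_k'\le 1$, hence $G_k'(a)\le 0$ and $G_k$ is nonincreasing on $[0,1]$. The argument $\alpha^*(k-a)$ stays in $(0,k)$, where $\zeta_k$ is differentiable. So it suffices to prove the single inequality
\[
\zeta_k(\alpha^*(k-1))\le 1-2\alpha^* \qquad\text{for every } k\ge 2 .
\]
This matches the hard instance of Theorem~\ref{thm:truncated_lower_bound}. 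There item $g_2$ sees $a\approx 1$ (mass from $g_1$), external mass $o\approx 1$, and $k=2$.

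\textbf{Step 2: the case $k=2$, where equality should hold.} Parametrize by $\lambda$. We have
\[
\mu_2(\lambda)=2-2e^{-\lambda}-\lambda e^{-\lambda}, \qquad \zeta_2(\mu_2(\lambda))=1-e^{-\lambda}-\lambda e^{-\lambda}.
\]
Let $\lambda$ solve $\mu_2(\lambda)=\alpha^*$. Then the target inequality becomes $e^{-\lambda}(1+\lambda)\ge 2\alpha^*$, and I expect it to hold with equality. To verify this, substitute $\lambda=\lambda^*=r^*/(1-r^*)$ and $\alpha^*=1/R(r^*)$. Use the first-order condition $(5-2r^*)e^{-r^*/(1-r^*)}=4(1-r^*)$ to eliminate the exponential. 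Then check that both $\mu_2(\lambda^*)=\alpha^*$ and $e^{-\lambda^*}(1+\lambda^*)=2\alpha^*$ reduce to the same algebraic identity in $r^*$. This is routine algebra, and its outcome is exactly what makes the ratio tight. If the identification $\lambda=\lambda^*$ turns out to be off, I would instead reduce to the same computation numerically: $\alpha^*\approx 0.459$.

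\textbf{Step 3: the case $k\ge 3$, the main obstacle.} Here we need $\zeta_k(\alpha^*(k-1))\le 1-2\alpha^*\approx 0.082$ uniformly in $k$. Let $\lambda$ satisfy $\mu_k(\lambda)=m:=\alpha^*(k-1)$, and set $\zeta:=\Pr[P_\lambda\ge k]$.
\begin{itemize}
\item \emph{Bounding $\lambda$.} The function $\mu_k$ is concave, since its derivative $\Pr[P_\lambda\le k-1]$ is decreasing, and $\mu_k(0)=0$. Hence $m=\mu_k(\lambda)\ge \lambda\,\mu_k'(\lambda)=\lambda(1-\zeta)$, so $\lambda\le m/(1-\zeta)$.
\item \emph{Bootstrapping.} Suppose for contradiction that $\zeta>0.082$; this is the case to rule out, and the first bound only needs the trivial $\zeta\le 1$. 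Then $\lambda<0.5k$ for all $k\ge 3$, and a Poisson Chernoff bound $\Pr[P_\lambda\ge k]\le e^{-\lambda}(e\lambda/k)^k$ gives an explicit upper bound on $\zeta$. For large $k$, say $k\ge k_0$, this bound is far below $0.082$, a contradiction.
\item \emph{Small $k$.} For $3\le k<k_0$, I would solve $\mu_k(\lambda)=\alpha^*(k-1)$ directly and evaluate the Poisson tail, which is a finite numeric check. As an alternative I would try to show that $k\mapsto\zeta_k(\alpha^*(k-1))$ is decreasing, perhaps via a splitting or coupling argument. I expect the numeric-plus-Chernoff route to be the one that actually closes.
\end{itemize}

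The delicate part is making the Chernoff constants good enough that $k_0$ is small.
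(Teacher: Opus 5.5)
Your Steps 1 and 2 are correct and follow the paper. Since $\zeta_k'\le 1$, $G_k$ is nonincreasing, so everything reduces to $\zeta_k(\alpha^*(k-1))\le 1-2\alpha^*$. For $k=2$, the substitution $\lambda^*=r^*/(1-r^*)$ turns the first-order condition into $(5+3\lambda^*)e^{-\lambda^*}=4$. From this, $\mu_2(\lambda^*)=\alpha^*=\frac{2(1+\lambda^*)}{5+3\lambda^*}$ and $(1+\lambda^*)e^{-\lambda^*}=2\alpha^*$.

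The gap is the bootstrapping in Step 3. Your concavity inequality $\lambda(1-\zeta)\le\alpha^*(k-1)$ is correct, but it bounds $\lambda$ from above only through an \emph{upper} bound on $\zeta$, because $\alpha^*(k-1)/(1-\zeta)$ is increasing in $\zeta$. The trivial bound $\zeta\le 1$ therefore gives nothing. The contradiction hypothesis $\zeta>0.082$ is a \emph{lower} bound on $\zeta$, hence on $\lambda$, and only weakens the bound. In fact, extracting $\lambda<k/2$ from this inequality requires $\zeta<1-2\alpha^*\frac{k-1}{k}\approx 0.082+0.918/k$, which for large $k$ is essentially the claim itself. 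So the implication ``$\zeta>0.082\Rightarrow\lambda<0.5k$'' is unjustified, and the Chernoff step has nothing to rest on.

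The repair is to apply the concavity bound at a fixed test point rather than at the unknown $\lambda$. Let $\lambda_0=(k-1)/2$ and $\tau_k=\Pr[P_{\lambda_0}\ge k]$. If $\tau_k\le 1-2\alpha^*$, then $\mu_k(\lambda_0)\ge\lambda_0(1-\tau_k)\ge\alpha^*(k-1)=\mu_k(\lambda)$. Monotonicity of $\mu_k$ gives $\lambda\le\lambda_0$, hence $\zeta\le\tau_k\le 1-2\alpha^*$. The choice $(k-1)/2$ is exactly what makes the condition for $\lambda\le\lambda_0$ coincide with the target; at $\lambda_0=k/2$ the tail is already about $0.19$ for $k=3$.

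This is the paper's route. The paper then bounds $\tau_k$ by the geometric series $U_k=\frac{e^{-\lambda_0}\lambda_0^k/k!}{1-\lambda_0/(k+1)}$, shows $U_{k+1}<U_k$, and checks $U_3=\frac{2}{9e}\approx 0.0818<1-2\alpha^*\approx 0.0821$, so no case-by-case numerics are needed.

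With the test point in place, your Chernoff bound $e^{-\lambda_0}(e\lambda_0/k)^k$ also works, but only from about $k=11$ on, leaving a finite check for $3\le k\le 10$. A genuine bootstrap is also possible if you start from a nontrivial a priori bound such as Markov's $\zeta\le\mu_k(\lambda)/k<\alpha^*$, but then $k_0$ becomes much larger.
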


\begin{proof}
Fix $k\geq 2$ and take derivative of $G_k(a)$:
\[
G_k'(a) = -\alpha^* + \alpha^*\zeta_k'(\alpha^*(k-a)) \leq 0 ~.
\]
The inequality follows from $0<\alpha^*(k-a)<k$ and $\zeta_k'(m)\leq 1,\forall m\in(0,k)$. Thus, $G_k(a)$ is minimized at $a=1$ over $a\in[0,1]$, and it remains to prove $G_k(1)\geq \alpha^*$, i.e.,
\[
1-2\alpha^* \geq \zeta_k(\alpha^*(k-1)) ~.
\]

\paragraph{Case 1: $k=2$.} We first prove that for $k=2$ the inequality is tight: $1-2\alpha^* = \zeta_2(\alpha^*)$. By standard properties of Poisson variables, we have
\[
    \Pr[P_\lambda\ge2]=1-(1+\lambda)e^{-\lambda}, \qquad \mu_2(\lambda)=\E[\min\{P_\lambda,2\}]=2-(2+\lambda)e^{-\lambda} ~.
\]
Let $\lambda^*>0$ be the unique value satisfying $1-2\mu_2(\lambda^*) = \zeta_2(\mu_2(\lambda^*))= \Pr[P_{\lambda^*}\geq 2]$, i.e.,
\[
    2(2+\lambda^*)e^{-\lambda^*}-3 = 1-(1+\lambda^*)e^{-\lambda^*}\iff (5+3\lambda^*)e^{-\lambda^*}=4 ~.
\]
Next we show that $\alpha^*$ is equal to $\mu_2(\lambda^*)=2-(2+\lambda^*)e^{-\lambda^*}$, so by construction $\zeta_2(\alpha^*)=1-2\alpha^*$.

Recall that the first-order optimality condition of $R(r)$ is
\[
    (5-2r)\exp\left(-\frac{r}{1-r}\right)=4(1-r) ~.
\]
Substituting $\lambda=\frac{r}{1-r}$, we again obtain $(5+3\lambda)e^{-\lambda} = 4$, so if $r^*$ is the maximizer of $R(r)$, then $\lambda^* = \frac{r^*}{1-r^*}$. Therefore, we conclude that
\[
\alpha^* = \frac{1}{R(r^*)} = \frac{2r^*+(1-r^*)\exp(-r^*/(1-r^*))}{2+r^*} = \frac{2\lambda^*+e^{-\lambda^*}}{2+3\lambda^*} = 2-(2+\lambda^*)e^{-\lambda^*} ~.
\] 
The last equality follows from $(5+3\lambda^*)e^{-\lambda^*}=4$. This explains why the upper bound and the lower bound meet at exactly the same value.

\paragraph{Case 2: $k\geq 3$.} Let $\lambda_0:=\frac{k-1}{2}$ and $P_0=P_{\lambda_0}$. Define $\tau_k:=\Pr[P_0\geq k]$. We first claim that $\tau_k\leq 1-2\alpha^*$, assuming it is correct for the moment. 

Since $\min\{P_0,k\}=P_0-(P_0-k)^+$, we have
\[
    \mu_k(\lambda_0) = \E[\min\{P_0,k\}] = \lambda_0-\E[(P_0-k)^+] ~.
\]
Moreover, using the standard property $\E[P_0\mathbf 1\{P_0\geq k+1\}]=\lambda_0\Pr[P_0\geq k]$, we have
\[
    \E[(P_0-k)^+] = \E[P_0\mathbf 1\{P_0\geq k+1\}] - k\Pr[P_0\geq k+1] \leq \lambda_0\Pr[P_0\geq k]= \lambda_0\tau_k ~.
\]
Thus $\mu_k(\lambda_0)\geq \lambda_0(1-\tau_k)$. Using $\tau_k\leq 1-2\alpha^*$, we get
\[
    \alpha^*(k-1) = \frac{k-1}{2}\cdot 2\alpha^* \leq \lambda_0(1-\tau_k) \leq \mu_k(\lambda_0) ~,
\]
which implies $\zeta_k(\alpha^*(k-1)) \leq \zeta_k(\mu_k(\lambda_0)) = \tau_k \leq 1-2\alpha^*$.

It remains to prove the claim $\tau_k\leq 1-2\alpha^*$. For every $j\geq k$, we have
\[
    \frac{\Pr[P_0=j+1]}{\Pr[P_0=j]}=\frac{\lambda_0}{j+1}\leq \frac{\lambda_0}{k+1} ~.
\]
Therefore we can upper bound $\tau_k$ by
\begin{align*}
    \tau_k = &~ \sum_{j=k}^{\infty}\Pr[P_0=j] \\
    \leq &~ \Pr[P_0=k]\sum_{j=0}^{\infty}\left(\frac{\lambda_0}{k+1}\right)^j \\
    = &~ \frac{e^{-\lambda_0 }\lambda_0^k/k!}{1-\frac{\lambda_0}{k+1}} =  \frac{e^{-\frac{k-1}{2}}\left(\frac{k-1}{2}\right)^k/k!}{\frac{k+3}{2(k+1)}} := U_k~.
\end{align*}
Observe that
\[
    \frac{U_{k+1}}{U_k} = 
    \frac{e^{-1/2}}{2}\left(\frac{k}{k-1}\right)^k\frac{k(k+2)(k+3)}{(k+1)^2(k+4)}<1 ~.
\]
The last inequality is verified as follows: for $k=3$, direct calculation gives $\frac{U_4}{U_3} = \frac{405}{448\sqrt e}<1$; for $k\geq 4$, we use $\left(\frac{k}{k-1}\right)^k\leq \left(\frac{4}{3}\right)^4$ and $\frac{k(k+2)(k+3)}{(k+1)^2(k+4)}<1$, then $\frac{U_{k+1}}{U_k} < \frac{e^{-1/2}}{2}\left(\frac{4}{3}\right)^4 = \frac{128}{81\sqrt e} < 1$.
Therefore, for all $k\geq 3$ we have
\[
    \tau_k \leq U_k\leq U_3 = \frac{e^{-1}/3!}{6/8} =\frac{2}{9e}<1-2\alpha^* ~. \qedhere
\]
\end{proof}

\subsection{A stronger lower bound for laminar matroids}
\label{sec:appendix_laminar_lower_bound}

The construction recursively applies the idea of the basic $2.179$ lower bound for truncated $R_1$-partition matroids: starting from the two-variable prophet inequality hard instance, iteratively add a large number of Bernoulli variables and form a new level of capacity constraint.

\begin{theorem}[Restatement of Theorem~\ref{thm:laminar_lower_bound}]
\label{thm:laminar_lower_bound_restate}
There exists a laminar matroid prophet inequality instance such that every non-adaptive algorithm has competitive ratio at least $2.217$.
\end{theorem}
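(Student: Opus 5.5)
I would prove the theorem by a recursive, multi-level version of the instance in Theorem~\ref{thm:truncated_lower_bound}, analyzed level by level through a one-dimensional value recursion.

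\textbf{Construction.} Level $0$ is the classical two-variable hard instance $\{g_1,g_2\}$, forming a laminar set $S_0$ of rank $1$, with normalized prophet value $O_0=2$ and best non-adaptive value $A_0=1$. Given level $\ell-1$, occupying the laminar set $S_{\ell-1}$ of rank $\ell$, I form level $\ell$ as follows. Before $S_{\ell-1}$, prepend $N_\ell$ singleton items that take value $r_\ell$ with probability $b_\ell$, where $b_\ell\to0$ and $N_\ell b_\ell\to\infty$. Then add the new set $S_\ell=S_{\ell-1}\cup[N_\ell]$ with rank $\ell+1$. The arrival order is level $L$ Bernoullis first, then level $L-1$, and so on, down to $g_1,g_2$. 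The family is laminar because the sets are nested.

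\textbf{Prophet value.} The prophet takes one Bernoulli of value $r_\ell$ at each level, since one realizes with probability $\to1$. The rank constraints (one extra slot per level) are exactly compatible with this. So $O_L=2+\sum_\ell r_\ell$.

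\textbf{Algorithm side.} As in the basic proof, a deterministic non-adaptive algorithm only chooses how many Bernoullis $M_\ell$ at each level get a positive threshold. With $M_\ell b_\ell\to\lambda_\ell$, the number of level-$\ell$ successes is $\pois(\lambda_\ell)$. A level-$\ell$ success consumes a slot in every set $S_j$ with $j\ge\ell$.

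The key step is to reduce the analysis to a recursion in the number of free slots. I would define $W_\ell(s)$ as the optimal expected value obtainable from levels $\ell,\dots,0$ when $s\in\{0,1,\dots\}$ slots of the relevant budget remain. Because the laminar sets are nested and levels arrive outermost first, the only state passed from level $\ell$ to level $\ell-1$ is how many slots remain in $S_{\ell-1}$. That number is $\min(\ell,\ \ell+1-\text{(accepted so far)})$.

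I expect to need only states $s\in\{\ell-1,\ell\}$, up to negligible events, plus an analysis of states deep in deficit. This gives
\[
W_\ell=\max_{\lambda\ge0}\Bigl(r_\ell\,\E[\min\{P_\lambda,\cdot\}]+\Pr[P_\lambda=0]\,W_{\ell-1}(\text{full})+\Pr[P_\lambda=1]\,W_{\ell-1}(\text{full})+\dots\Bigr).
\]
This recursion generalizes $V_r(\lambda)=2r+e^{-\lambda}((1-2r)+(1-r)\lambda)$ from the basic proof.

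\textbf{Main obstacle.} The hardest step is controlling the value $W_{\ell-1}$ in reduced-capacity states, i.e., when earlier levels have eaten into inner slots. I would first try to show by induction that, with one slot lost, the best remaining value is the full-capacity value of a shallower instance, so the recursion closes over a small finite set of quantities per level.

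\textbf{Optimization.} Once the recursion closes, the bound is $\max_{(r_\ell)}O_L/W_L$ over the chosen $(r_\ell)$, with the inner maximization over $\lambda_\ell$ solved by first-order conditions as in the basic proof. I would then numerically optimize over the $r_\ell$ for a few levels (say $L=2,3$). I expect the ratio to increase from $2.179$ and reach at least $2.217$ at a small depth, matching the numeric optimization the acknowledgements mention.
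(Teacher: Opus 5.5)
Your construction (nested sets $S_\ell=S_{\ell-1}\cup[N_\ell]$ of rank $\ell+1$, with the outermost Bernoulli block arriving first) and the prophet bound $2+\sum_\ell r_\ell$ are exactly the paper's. So is the reduction to a recursion in the remaining capacity, which the paper writes as $\alg_\ell(s)=\sup_{\lambda\ge0}\{r_\ell\E[\min\{Z_\lambda,s\}]+\E[\alg_{\ell-1}(\min\{\ell,(s-Z_\lambda)^+\})]\}$ with $\alg(L_\ell)\le\alg_\ell(\ell+1)$.

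This recursion is only an upper bound on the best non-adaptive value, not ``the optimal value'' as you phrase it. Taking $\E[W_{\ell-1}(\cdot)]$ state by state lets the deeper thresholds be re-chosen after the number of level-$\ell$ successes is known, which a non-adaptive algorithm cannot do. That is the right direction for a lower bound, so it is fine, but you should state it that way.

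The ``main obstacle'' you identify is not there. The state space at level $\ell$ is the finite set $\{0,\dots,\ell+1\}$, so you simply evaluate the recursion on all of it, and nothing should be dropped as negligible. Do not try to prove that losing one slot reduces to the full-capacity value of a shallower instance: for the recursive bound this is false. With $r_1=0.34$ and $r_2=0.10$ one has $W_1(1)=1$ and $W_1(2)=0.68+0.66e^{-0.34/0.66}\approx1.0743$. The objective $r_2\E[\min\{P_\lambda,2\}]+\Pr[P_\lambda=0]W_1(2)+\Pr[P_\lambda=1]W_1(1)$ defining $W_2(2)$ has derivative $r_2+W_1(1)-W_1(2)\approx0.026>0$ at $\lambda=0$, so $W_2(2)>W_1(2)$. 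Substituting the smaller value would invalidate your bound on $\alg$.

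The genuine gap is that the proposal never establishes the constant. ``I expect the ratio to reach at least $2.217$'' is a conjecture, while the theorem is precisely the claim that an explicit instance certifies it. The paper fixes depth $3$ and $(r_1,r_2,r_3)=(0.34,0.10,0.02)$, so $\opt(L_3)\to2.46$, and certifies $\alg(L_3)\le\alg_3(4)\le1.1092$, giving $2.46/1.1092\approx2.2178$.

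The margin is thin: you need $\alg_3(4)\le2.46/2.217\approx1.1096$. So every supremum over $\lambda$ must be bounded from above rigorously, which the paper does on a grid with an explicit error term. Solving first-order conditions ``as in the basic proof'' gives no closed form beyond the first level. There the objective is a constant plus $e^{-\lambda}$ times a polynomial whose coefficients are earlier, only approximately computed, values of $W$. To turn your plan into a proof you would need certified root isolation with interval arithmetic, carried consistently through all levels.
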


\begin{proof}
For a laminar matroid prophet inequality instance $L$, let $\opt(L)$ denote the prophet value and $\alg(L)$ denote the expected value of the best non-adaptive algorithm. 

\paragraph{Base instance.}
The base instance $L_0$ is the classical two-variable prophet inequality hard instance with two elements $g_1,g_2$, where $X_{g_1}=1$, and $X_{g_2}=1/\epsilon$ with probability $\epsilon$, and $0$ otherwise. There is one constraint $|I \cap \{g_1,g_2\}|\leq 1$. $g_1$ arrives before $g_2$, so as $\epsilon \to 0$, we have $\opt(L_0) \to 2, \alg(L_0)\to 1$.

\paragraph{Recursive construction.}
Take a copy of $L_{\ell-1}$ and let $E_{\ell-1}$ denote the set of items. To construct $L_\ell$, add $N_\ell$ items $B_\ell$ with identical distributions, each taking value $r_\ell$ with probability $b_\ell$ and $0$ otherwise. All previous laminar constraints inside $L_{\ell-1}$ are preserved, and there is one additional constraint that $|I\cap (B_\ell\cup E_{\ell-1})|\leq \ell+1$. The arrival order is: first $B_\ell$ arrives, then all elements in $E_{\ell-1}$ arrive in the same order as in $L_{\ell-1}$. We choose the parameters such that $b_\ell \to 0, N_\ell b_\ell \to \infty$, and $r_\ell<r_{\ell-1}$ will be optimized later.

\paragraph{Prophet value.}
Since $N_\ell b_\ell \to \infty$, for each block $B_\ell$ at least one value $r_\ell$ is realized with probability tending to $1$. If that happens, since $r_\ell<r_{\ell-1}$, the prophet should take exactly one value $r_\ell$ from each $B_\ell$. Therefore, we have $\opt(L_\ell) \to 2 + \sum_{i=1}^\ell r_i$.

\paragraph{Non-adaptive algorithms.}
For the analysis, let $\alg_\ell(s)$ denote an upper bound on the expected value of the best non-adaptive algorithm for $L_\ell$ when at most $s$ items can be selected from $L_\ell$. For $B_\ell$, essentially a non-adaptive algorithm only needs to decide for how many variables in $B_\ell$ to set a threshold $r_\ell$. By symmetry, suppose it sets threshold $r_\ell$ for the first $k$ items in $B_\ell$, and let $Z\sim\operatorname{Bin}(k,b_\ell)$ be the number of items among the first $k$ that realize value $r_\ell$. Then the algorithm obtains value $r_\ell\min\{Z,s\}$ from $B_\ell$, and has remaining capacity $\min\{\ell,(s-Z)^+\}$ for the subsequent copy of $L_{\ell-1}$. Therefore, the expected value of the algorithm on the whole instance is at most
\[
r_\ell\E[\min\{Z,s\}] + \E\left[\alg_{\ell-1}(\min\{\ell,(s-Z)^+\})\right] ~.
\]
Since we take $b_\ell\to0$, if $kb_\ell\to\lambda$, then $Z$ converges to $Z_\lambda\sim\pois(\lambda)$. The best non-adaptive algorithm picks $k$ that maximizes the expected value, thus we obtain the recurrence
\[
\alg_\ell(s) := \sup_{\lambda\geq 0}\left\{r_\ell\E[\min\{Z_\lambda,s\}] + \E\left[\alg_{\ell-1}(\min\{\ell,(s-Z_\lambda)^+\})\right]\right\} ~.
\]
Finally, by definition we have $\alg(L_\ell) \leq  \alg_\ell(\ell+1)$.

Therefore, given any set of values $r_1>r_2>\ldots>r_\ell$, we can compute $\opt(L_\ell)$ and recursively compute a good estimate on $\alg_\ell(\ell+1)$. We optimize the parameters for $\ell=3$, and at $(r_1,r_2,r_3)=(0.34,0.10,0.02)$, we have $\opt(L_3)\to 2.46, \alg(L_3)\leq 1.1092$, yielding competitive ratio at least $2.217$. The lower bound is certified in the sense that, for each $\alg_\ell(s)$, we evaluate an upper bound of the supremum over $\lambda$ by estimating on a grid and adding a carefully bounded error term \footnote{Details and code of the numeric optimization can be found at \url{https://github.com/relyt871/Numeric-Optimization-of-Bounds-for-Laminar-Matroids}.}. 
Larger values of $\ell$ can in principle further improve the lower bound, but the marginal improvement is very small, and the error in the estimate will accumulate and prevent us from certifying a better lower bound.
\end{proof}

\subsection{Improved upper bounds for truncated partition and laminar matroids}
\label{sec:appendix_laminar_upper_bound}

\begin{proofof}{Theorem~\ref{thm:truncated_upper_bound}}
Let $k$ denote the global rank. We apply the non-adaptive OCRS framework in Section~\ref{sec:truncated_R1_partition_upper_bound} with $\alpha=1/3$. It suffices to prove inductively that $F_i\ge\alpha$ for every item $i$.

Fix an item $i$ and assume $F_j\geq \alpha$ for all $j<i$, so each previous item $j$ is accepted with probability $\alpha x_j$.  Suppose $i$ belongs to partition $B$, and let $r_B$ denote the local rank of $B$. Since $x\in P(\mathcal M)$, we have
\[
    \sum_{j<i} x_j \leq k ~, \qquad \sum_{j\in B, j<i}x_j \leq r_B ~.
\]

Let $A$ be the set of previously accepted items before $i$ arrives. Item $i$ is infeasible only if either the global rank is full, i.e., $|A|=k$, or its local partition is full, i.e., $|A\cap B|=r_B$. By inductive hypothesis, every previous item $j$ has acceptance marginal $\alpha x_j$, therefore
\[
    \E[|A|]\leq \alpha k ~, \qquad  \E[|A\cap B|]\le \alpha r_B ~.
\]
By Markov's inequality and $|A|\leq k, |A\cap B|\leq r_B$, this implies
\[
    \Pr[|A|=k] \leq \alpha ~, \qquad \Pr[|A\cap B|=r_B] \leq \alpha ~.
\]
Thus by union bound, item $i$ is infeasible with probability at most $2\alpha$, hence $F_i \geq 1-2\alpha = \alpha$.
\end{proofof}

We conjecture that the tight proof of Theorem~\ref{thm:truncated-R1-partition} can be extended to truncated partition matroids and also obtain the tight ratio of $\approx 2.179$, but the extension seems challenging. When the local partitions can have ranks larger than $1$, the event that the global constraint becomes tight before item $i$ arrives can no longer be cleanly related to a Poisson binomial variable. Further, the items in the same partition as $i$ can also interfere with the global constraint, so we cannot decompose the event that $i$ becomes infeasible into two disjoint events and consider external partitions and internal items separately.

\paragraph{Computing $F_i$ efficiently.} For truncated partition matroids, we can compute $F_i$ efficiently. Suppose item $i$ belongs to partition $B$, then it remains feasible if and only if less than $k$ previous items have been accepted, and less than $r_B$ previous items in partition $B$ have been accepted. Formally, for each partition $C$, let $Y_C$ be a random variable denoting the number of previous items in $C$ that realize a value above the corresponding threshold, then $i$ is feasible if and only if $Y_B<r_B$ and $Y_B+\sum_{C\neq B}\min\{Y_C,r_C\} < k$.

When processing item $i$, since we have already computed the thresholds for the first $i-1$ items, we are able to explicitly compute the distribution of each $Y_C$. For each partition $C$ and $t\in\{0\}\cup [r_C]$, let $p_{C,t}$ denote the probability that $\min\{Y_C,r_C\}=t$, which can be computed via a simple knapsack dynamic programming. Suppose the partitions other than $B$ are labeled $C_1,C_2,\ldots,C_m$, then $F_i$ can be expressed via a convolution over the partitions: 
\[
F_i = \sum_{\ell=0}^{r_B-1}p_{B,\ell}\cdot \sum_{s=0}^{k-\ell-1}\left(\sum_{0\leq t_j\leq r_{C_j},\forall j\in[m], \sum_{j=1}^m t_j=s}\prod_{j=1}^m p_{C_j,t_j}\right) ~.
\]
This convolution can be evaluated in $\mathrm{poly}(i,k)$ time using a brute force dynamic programming, and can be further accelerated using standard divide-and-conquer techniques.

\begin{proofof}{Theorem~\ref{thm:laminar_upper_bound}}
For laminar matroids, $F_i$ depends on all constraints that involve $i$, and the analysis becomes very complicated. We analyze a simpler and weaker version of the non-adaptive OCRS procedure: fix $\beta\in[0,1]$, then for each item $i$, set its threshold to $\tau_i(\beta x_i)$. We say that $i$ is active if it realizes a value above this threshold.

Fix an item $i$. For each positive integer $r$, let $S_r$ be the largest laminar constraint of rank $r$ that contains $i$, if such a constraint exists. Then $i$ is feasible as long as for every $r$ such that $S_r$ exists, at most $r-1$ previous items in $S_r$ are active.

Let $Y_r$ be the random variable indicating the number of previous active items in $S_r$. Since $x\in P(\mathcal M)$, we have $x(S_r)\leq r$, so $\E[Y_r] \leq \beta r$. Therefore, by Lemma~\ref{lem:bernoulli_splitting},
\[
\Pr[Y_r\geq r] \leq \zeta_r(\E[\min\{Y_r,r\}]) \leq \zeta_r(\E[Y_r]) \leq \zeta_r(\beta r) ~.
\]

Let $E_r$ denote the event that at most $r-1$ previous items in $S_r$ are active, then the above implies $\Pr[E_r] \geq 1-\zeta_r(\beta r)$. The events $E_r$ are decreasing events in the indicators of whether each previous item is active, so we can apply the FKG inequality~\cite{fkg},
\[
    \Pr\left[\bigcap_{r\geq 1} E_r\right] \geq \prod_{r\geq 1} \Pr[E_r] \geq \prod_{r\geq 1}\left(1-\zeta_r(\beta r)\right) := \Gamma(\beta) ~.
\]
Therefore, item $i$ is feasible with probability at least $\Gamma(\beta)$, hence it is accepted with probability at least $\beta\Gamma(\beta)\cdot x_i$. The ex-ante relaxation implies
\[
    \E[\alg]\ge \beta \Gamma(\beta)\cdot \sum_i \nu_i(x_i)\ge \beta \Gamma(\beta)\cdot\E[\opt] ~,
\]
so the competitive ratio is $\frac{1}{\beta\Gamma(\beta)}$. 
Numerically, $\min_{0<\beta<1}\frac{1}{\beta\Gamma(\beta)}\approx 6.311$, which is attained at $\beta\approx0.2788$ \footnote{Details and code of the numeric optimization can be found at \url{https://github.com/relyt871/Numeric-Optimization-of-Bounds-for-Laminar-Matroids}.}. 
\end{proofof}

\section{Deferred Proofs from Section~\ref{sec:graphic}}
\label{sec:appendix_graphic}

\subsection{Improved lower bound via recursive hats}
\label{sec:appendix_graphic_recursive}

We now recursively use the one-layer construction to obtain a stronger lower bound. The key observation is that the two parallel brim edges in the one-layer construction play the same role as the pair \(e_{i,1},e_{i,2}\) in a hat.
Thus, in the next layer, we replace this pair by a whole copy of the one-layer hard instance, while keeping $e_{i,3}$ untouched. This operation can be repeated iteratively, and can amplify the lower bound from $2.5$ to arbitrarily close to $3$. Figure~\ref{fig:two-layer-hat} demonstrates the recursive construction for two layers.

\begin{figure}[ht]
\centering
\begin{tikzpicture}[
    x=1cm, y=1cm,
    vertex/.style={circle, draw, fill=white, inner sep=1.2pt, minimum size=18pt},
    edge/.style={draw, thick},
    brim/.style={draw, very thick},
    elabel/.style={font=\scriptsize, fill=white, inner sep=1pt}
]

\node[vertex] (s) at (-5,0) {$s$};
\node[vertex] (t) at (5,0) {$t$};

\draw[brim, bend left=8] (s) to node[elabel, above] {$B_1$} (t);
\draw[brim, bend right=8] (s) to node[elabel, below] {$B_2$} (t);

\node[vertex] (v1) at (1.0,4) {$v_1$};
\node[vertex] (vm) at (0.8,2) {$v_m$};

\node at (0.9,3) {$\vdots$};

\draw[edge] (s) to[bend left=5]
    node[elabel, pos=0.6, above, sloped] {$e_{m,1}$} (vm);
\draw[edge] (s) to[bend right=5]
    node[elabel, pos=0.6, below, sloped] {$e_{m,2}$} (vm);
\draw[edge] (t) --
    node[elabel, pos=0.6, above, sloped] {$e_{m,3}$} (vm);

\draw[edge] (t) --
    node[elabel, pos=0.6, above, sloped] {$e_{1,3}$} (v1);


\coordinate (m) at ($(s)!0.52!(v1)$);

\node[vertex] (w1) at ($(m)+(-1.8,2.2)$) {$w_1$};
\node[vertex] (wn) at ($(m)+(-1.0,1.0)$) {$w_n$};

\node at ($(m)+(-1.25,1.7)$) {$\ddots$};

\draw[brim, bend left=7] (s) to node[elabel, pos=0.6, above, sloped] {$b_1$} (v1);
\draw[brim, bend right=7] (s) to node[elabel, pos=0.6, below, sloped] {$b_2$} (v1);

\draw[edge] (s) to[bend left=5]
    node[elabel, pos=0.6, above, sloped] {$f_{1,1}$} (w1);
\draw[edge] (s) to[bend right=5]
    node[elabel, pos=0.6, below, sloped] {$f_{1,2}$} (w1);
\draw[edge] (v1) --
    node[elabel, pos=0.6, above, sloped] {$f_{1,3}$} (w1);

\draw[edge] (s) to[bend left=5]
    node[elabel, pos=0.6, above, sloped] {$f_{n,1}$} (wn);
\draw[edge] (s) to[bend right=5]
    node[elabel, pos=0.6, below, sloped] {$f_{n,2}$} (wn);
\draw[edge] (v1) --
    node[elabel, pos=0.6, above, sloped] {$f_{n,3}$} (wn);

\end{tikzpicture}
\caption{Two-layer construction. For convenience, we only demonstrate the recursive hat for $(s,v_1)$.}
\label{fig:two-layer-hat}
\end{figure}

\begin{theorem}[Restatement of Theorem~\ref{thm:graphic_lower_bound_3}]
\label{thm:graphic_lower_bound_restate}
    For every integer $k\geq 1$, there exists a family of graphic matroid prophet inequality instances such that every non-adaptive algorithm has competitive ratio at least $3-\frac{1}{k+1}$.
\end{theorem}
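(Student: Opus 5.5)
Let $H_\ell$ denote the depth-$\ell$ recursive hat between two terminals $s,t$. We build it by induction and prove a two-quantity invariant. $H_0$ is the pair of parallel brim edges: a deterministic edge of value $c_0$ followed by an edge worth $c_0/\epsilon$ with probability $\epsilon$. $H_\ell$ takes $n_\ell$ vertices $v_1,\dots,v_{n_\ell}$. For each $i$, a scaled copy of $H_{\ell-1}$ goes between $s$ and $v_i$; it plays the role of the pair $e_{i,1},e_{i,2}$. An edge $e_{i,3}$ between $v_i$ and $t$ is worth $c_\ell/p_\ell$ with probability $p_\ell$. A final scaled copy of $H_{\ell-1}$ goes directly between $s$ and $t$ and serves as the brim. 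The arrival order is hat by hat: within hat $i$, first the inner copy in its own order, then $e_{i,3}$; the brim copy arrives last. The scales are chosen so that the brim copy has total value comparable to all hats combined, with $p_\ell\to0$ and $n_\ell p_\ell^2\to\infty$, exactly as in Theorem~\ref{thm:graphic_lower_bound_5/2}.

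The invariant tracks, for $H_\ell$ normalized so that the best non-adaptive value with $s,t$ unconnected at the start is $1$:
\begin{itemize}
\item $\E[\opt(H_\ell)]\to 3-\frac{1}{\ell+1}$, up to a common factor;
\item any non-adaptive strategy on $H_\ell$ that gets value $1-\delta$ connects $s$ and $t$ internally with probability $\Omega(\cdot)$, or alternatively loses at least its brim value.
\end{itemize}
More concretely, the key recursive quantity is $\alg_\ell(\text{connected at start})$ versus $\alg_\ell(\text{unconnected})$. In $H_0$, if $s,t$ are already connected you get $0$; otherwise you get $c_0$, while the prophet gets $2c_0$.

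\textbf{Step 1 (prophet value).} The maximum spanning forest uses one $s$--$v_i$ connection per hat plus one edge; the structure decomposes as in the one-layer proof. So $\opt(H_\ell)$ equals $\opt$ of the brim copy plus, for each hat, $\E\max(\opt(H_{\ell-1})\text{ as a single-path value},\,e_{i,3})$. Since $\opt(H_{\ell-1})$ counts all its hat edges, I will instead compute directly: $\opt$ restricted to hat $i$ is $\opt(H_{\ell-1}^{(i)})+$ one extra unit from $e_{i,3}$ with probability tending to $1$ in expectation. Choosing scales so that brim and hats are balanced gives the recursion
\[
\rho_\ell=\frac{\opt_\ell}{\alg_\ell}\quad\text{with}\quad \opt_\ell=\opt_{\ell-1}+\text{(extra from }e_{\cdot,3})\ \text{and}\ \alg_\ell=\alg_{\ell-1},
\]
which I expect solves to $3-\frac{1}{\ell+1}$ with $\alg_\ell$ normalized to $1$. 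As a check, $\ell=0$ gives $2$ and $\ell=1$ gives $5/2$, matching Theorem~\ref{thm:graphic_lower_bound_5/2}. The scaling for the hats at depth $\ell$ will therefore be $c_\ell$ proportional to $1/(\ell(\ell+1))$ of the brim, making the increments telescope: $2+\sum_{j=1}^{k}\frac{1}{j(j+1)}=3-\frac1{k+1}$.

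\textbf{Step 2 (algorithm upper bound).} I will prove by induction that any non-adaptive algorithm obtains at most $\alg_\ell=1$ (normalized) on $H_\ell$ when $s,t$ start unconnected, and at most the "one-edge-per-hat" value when they start connected. The per-hat strategy space collapses as before. Either the inner copy is played "safely" and $e_{i,3}$ is ignored; or $e_{i,3}$ gets a positive threshold while the inner copy is played to maximize value. In the latter case, by the induction hypothesis, playing the inner copy well forces connecting $s$ to $v_i$ through it, which together with $e_{i,3}$ connects $s$ to $t$ with probability at least $\Omega(p_\ell)$. The one-layer computation then shows that $\E[\alg_k]$ is convex in the number $k$ of aggressive hats, so the optimum is all-safe or all-aggressive. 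The all-aggressive choice destroys the brim with probability $1-o(1)$ while gaining only lower-order value.

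\textbf{Main obstacle.} Step 2 is the crux. I need an induction hypothesis strong enough to describe a nested copy's trade-off between its own value and the probability that it connects its endpoints. The one-layer "dominated strategy" argument must be replaced by a monotone trade-off curve $V_{\ell-1}(\pi)$: the best value attainable while leaving the endpoints unconnected with probability $\pi$. I would first try to show that this curve is linear or concave and that the outer convexity argument still forces an endpoint solution. This is where the internal brim distributions must be tuned, as the paper hints.
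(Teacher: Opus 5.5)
The main problem is in your construction, not only in the unfinished Step~2. You make the brim of $H_\ell$ a scaled copy of $H_{\ell-1}$, and you assume that when $s$ and $t$ are already connected such a copy is worth only its ``one-edge-per-hat'' value. That assumption is false.

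Once the terminals of a one-layer hat instance are connected, each of its hats becomes three parallel connections between $v_j$ and the $s$--$t$ component. Strategy 4 still earns $2-p$ per hat, exactly as in the one-layer analysis, which is essentially the full non-adaptive value. So a recursive brim imposes almost no penalty for connecting $s$ and $t$, and the dilemma that drives the bound disappears.

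Concretely, take depth $2$. Normalize $H_1$ so that its prophet value is $\approx 5$ and its best non-adaptive value is $\approx 2$. Let the brim copy have scale $\sigma$, the $n$ hat copies scale $h$, and $e_{i,3}$ mean $c$. The algorithm can do the following.
\begin{enumerate}
\item In each outer hat, play the inner copy safely: one edge per inner hat, and a threshold only on the rare inner brim edge, so $s$--$v_i$ gets connected only with probability $\epsilon$. Add a threshold on $e_{i,3}$. This earns $\approx 2h+c$ per outer hat.
\item Play the brim copy with Strategy 4 on all of its hats. This earns $\approx 2\sigma$ whether or not $s,t$ are connected.
\end{enumerate}
Since $\opt$ splits over blocks that share only $s,t$, we get $\E[\opt]\lesssim 5(\sigma+nh)+nc$ against $\E[\alg]\gtrsim 2(\sigma+nh)+nc$. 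So your $H_2$ cannot beat $5/2$ for any choice of scales, let alone reach $8/3$. Relatedly, the recursion ``$\alg_\ell=\alg_{\ell-1}$'' and the telescoping $2+\sum_j\frac{1}{j(j+1)}$ in Step~1 are fitted to the target rather than computed on your graph.

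The paper avoids this by recursing only inside the hats. $G_\ell(s,t)$ has $N_\ell$ hats, each consisting of a copy of $G_{\ell-1}(s,v_j)$ in place of the pair $e_{j,1},e_{j,2}$, plus an edge $r_j=(v_j,t)$ of mean $H_{\ell-1}$. The brim is always a \emph{fresh} two-edge pair (value $H_\ell$ deterministically, and $H_\ell/p_\ell$ with probability $p_\ell$), worth nothing once $s,t$ are connected.

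Because every copy of $G_{\ell-1}$ has a fresh terminal $v_j$, no copy is ever played with pre-connected terminals. The induction therefore needs no trade-off curve $V_{\ell-1}(\pi)$. It carries only two facts:
\begin{enumerate}
\item An algorithm that never connects $s,t$ gets at most $\ell H_\ell$.
\item Every algorithm gets at most $(\ell+1)H_\ell$, and must connect $s,t$ with probability at least $p_\ell$ to exceed $\ell H_\ell$.
\end{enumerate}
Each hat is then either safe or risky. A safe hat has value at most $\ell H_{\ell-1}$. A risky hat has value at most $(\ell+1)H_{\ell-1}$, only $(\ell+1-p_{\ell-1})H_{\ell-1}$ if $s,t$ are already connected, and it connects $s,t$ with probability at least $p_{\ell-1}^2=p_\ell$. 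The resulting upper bound is convex in the number of risky hats, so only the two extremes need checking. The prophet gains $2H_\ell$ from the brim plus $H_{\ell-1}$ per hat from $r_j$, giving $(3k+2)/(k+1)$.

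Finally, $n_\ell p_\ell^2\to\infty$ is not the right condition beyond one layer. You need $N_\ell$ times the inner copy's guaranteed connection probability times the activation probability of $r_j$ to diverge. The paper ensures this with $p_\ell=p_{\ell-1}^2$ and $N_\ell=p_{\ell-1}^{-3}$.
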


\begin{proof}
Let $n$ be a sufficiently large integer. Define $p_0=n^{-1/3}, H_0=1$, and for all $\ell\geq 1$, define
\[
p_\ell=p_{\ell-1}^2,\qquad N_\ell=p_{\ell-1}^{-3},\qquad H_\ell=\prod_{t=1}^{\ell}N_t ~.
\]
Then
\[
p_\ell=n^{-2^\ell/3},\qquad N_\ell=n^{2^{\ell-1}},\qquad H_\ell=n^{2^\ell-1} ~.
\]
In particular, $p_\ell\to0$ and $N_\ell p_\ell=p_{\ell-1}^{-1}\to\infty$ for every fixed $\ell$.

\paragraph{Base case.}
The base gadget $G_0(s,t)$ simply embeds the classic two-variable prophet inequality instance: there are two parallel edges $e_1,e_2$ between $s$ and $t$, where $e_1$ has deterministic value $1$, and $e_2$ takes value $1/p_0$ with probability $p_0$ and $0$ otherwise. The edge $e_1$ arrives before $e_2$.

\paragraph{Recursive construction.}
For each $\ell\geq 1$, $G_\ell(s,t)$ is constructed as follows: create vertices $v_1,\ldots,v_{N_\ell}$. For each $j\in[N_\ell]$, construct a recursive hat consisting of a copy of $G_{\ell-1}(s,v_j)$, and an edge $r_j=(v_j,t)$ that takes value $H_{\ell-1}/p_{\ell-1}$ with probability $p_{\ell-1}$ and $0$ otherwise. Finally, add two brim edges $b_1,b_2$ between $s$ and $t$: $b_1$ has deterministic value $H_\ell$, $b_2$ takes value $H_\ell/p_\ell$ with probability $p_\ell$ and $0$ otherwise. 

The arrival order is: sequentially for each $j=1,\ldots,N_\ell$, the copy $G_{\ell-1}(s,v_j)$ arrives in its original order, then $r_j$ arrives; finally $b_1,b_2$ arrive one after another.

\paragraph{Prophet value.}
Let $\opt_\ell$ denote the prophet value from $G_\ell(s,t)$. We prove by induction that $\opt_\ell\geq (3\ell+2)H_\ell-o(H_\ell)$. In the base case the prophet value is indeed $2-o(1) = 2H_0-o(H_0)$.

For $\ell\geq 1$, the prophet always accepts the larger among the two final brim edges and obtains expected value $(2-o(1))H_\ell$. For each recursive hat, if $r_j$ realizes value $0$, the prophet obtains value $\opt_{\ell-1}$ from $G_{\ell-1}(s,v_j)$; otherwise, the prophet accepts $r_j$ with value $H_{\ell-1}/p_{\ell-1}$. Therefore, assuming the inductive hypothesis $\opt_{\ell-1} \geq (3\ell-1)H_{\ell-1} - o(H_{\ell-1})$, each recursive hat contributes at least
\[
    (1-p_{\ell-1})\opt_{\ell-1} + p_{\ell-1}\cdot H_{\ell-1}/p_{\ell-1} = 3\ell\cdot H_{\ell-1}-o(H_{\ell-1}) ~.
\]
Summing over the brim edges and all $N_\ell$ recursive hats gives
\[
    \opt_\ell \geq N_\ell\cdot (3\ell H_{\ell-1}-o(H_{\ell-1})) + (2-o(1))H_\ell = (3\ell+2)H_\ell - o(H_\ell) ~.
\]

\paragraph{Non-adaptive algorithm.}
For each $\ell$, let $\alg_\ell = (\ell+1)H_\ell, \alg'_\ell = \ell H_\ell$, then we can prove the following by induction on $\ell$: 
\begin{enumerate}
    \item every non-adaptive algorithm on $G_\ell(s,t)$ that never connects $s,t$ has expected value at most $\alg'_\ell$. 
    \item every non-adaptive algorithm on $G_\ell(s,t)$ has expected value at most $\alg_\ell$, and to obtain value larger than $\alg'_\ell$, it connects $s,t$ with probability at least $p_\ell$.
\end{enumerate}

For the base case $\ell=0$, every non-adaptive algorithm has expected value at most $1$, and every non-adaptive algorithm with positive expected value connects $s,t$ with probability at least $p_0$, which corresponds to ignoring $e_1$ and setting threshold $1/p_0$ for $e_2$.

For $\ell\geq 1$, suppose the two claims hold for $\ell-1$. 

First consider algorithms on $G_\ell(s,t)$ that never connect $s,t$. The two brim edges must be ignored. For each recursive hat $j$, if the algorithm ignores $r_j$, then there are no constraints on $G_{\ell-1}(s,v_j)$ and the algorithm can obtain at most $\alg_{\ell-1}=\ell H_{\ell-1}$ from $G_{\ell-1}(s,v_j)$; if the algorithm sets a threshold $H_{\ell-1}/p_{\ell-1}$ for $r_j$ and obtains expected value $H_{\ell-1}$ from $r_j$, then in order to avoid connecting $s,t$, it must guarantee that $s,v_j$ is never connected, so it can only obtain value at most $\alg'_{\ell-1} = (\ell-1)H_{\ell-1}$ from $G_{\ell-1}(s,v_j)$. Therefore, for each recursive hat the algorithm gets value at most $\ell H_{\ell-1}$. Thus, we conclude that any algorithm on $G_\ell(s,t)$ that never connects $s,t$ obtains expected value at most $N_\ell \cdot \ell H_{\ell-1} = \ell H_\ell$.

Next we upper bound the value of an arbitrary non-adaptive algorithm. For a recursive hat $j$, as argued above, if the algorithm guarantees that $s,t$ is never connected through hat $j$ then it obtains value at most $\ell H_{\ell-1}$ from the hat. We call such hats ``safe''.

Otherwise, the algorithm sets threshold $H_{\ell-1}/p_{\ell-1}$ for $r_j$ and also obtains value larger than $\alg'_{\ell-1}$ from $G_{\ell-1}(s,v_j)$. By inductive hypothesis, the expected value from hat $j$ is at most $\alg_{\ell-1}+p_{\ell-1}\cdot H_{\ell-1}/p_{\ell-1} = (\ell+1)H_{\ell-1}$, and $s,v_j$ is connected with probability at least $p_{\ell-1}$, so $s,t$ is connected with probability at least $p_{\ell-1}^2 = p_\ell$. We call such hats ``risky''.

Suppose the algorithm uses $m$ risky hats. Since the recursive hats are identical, we may assume they are the first $m$ hats. Let $q_\ell=1-p_\ell$. Before the $j$-th risky hat arrives, the probability that $s,t$ are still disconnected is at most $q_\ell^{j-1}$. Further, if $s,t$ are already connected when a risky hat $j$ arrives, then with probability at least $p_{\ell-1}$, $s,v_j$ will become connected, and then the algorithm can no longer accept $r_j$, so in this case the expected value from hat $j$ is at most $(\ell+1-p_{\ell-1})H_{\ell-1}$. Therefore, the expected value from the recursive hats is at most
\begin{align*}
    &~ (N_\ell-m)\ell H_{\ell-1} + \sum_{j=1}^m\left(q_\ell^{j-1}(\ell+1)H_{\ell-1} + (1-q_\ell^{j-1})(\ell+1-p_{\ell-1})H_{\ell-1}\right)\\
    = &~ (N_\ell-m)\ell H_{\ell-1} + \sum_{j=1}^m\left((\ell+1-p_{\ell-1})H_{\ell-1} +q_\ell^{j-1}p_{\ell-1}H_{\ell-1}\right)\\
    = &~ \ell N_\ell H_{\ell-1} - m\ell H_{\ell-1} + m(\ell+1-p_{\ell-1})H_{\ell-1} + p_{\ell-1}H_{\ell-1}\cdot \frac{1-q_\ell^m}{1-q_\ell}\\
    = &~ \ell H_\ell + m(1-p_{\ell-1})H_{\ell-1} + \frac{H_{\ell-1}}{p_{\ell-1}}(1-q_\ell^m) ~.
\end{align*}
After all recursive hats arrive, the two brim edges are feasible only if $s,t$ are still disconnected. So with probability at most $q_\ell^m$ the algorithm can obtain additional expected value $H_\ell$ from the brim edges. Hence the total expected value is at most
\[
    \ell H_\ell + m(1-p_{\ell-1})H_{\ell-1} + \frac{H_{\ell-1}}{p_{\ell-1}} + \left(H_\ell-\frac{H_{\ell-1}}{p_{\ell-1}}\right)q_\ell^m ~.
\]
Since $H_\ell-\frac{H_{\ell-1}}{p_{\ell-1}} = n^{2^\ell - 1} - n^{2^{\ell-1}-1}/n^{-2^{\ell-1}/3} > 0$, this expression is convex in $m$, so the maximum over $0\leq m\leq N_\ell$ is attained at either $m = 0$ or $m = N_\ell$. At $m = 0$, the value is exactly $(\ell+1)H_\ell$; at $m = N_\ell$, the value is at most
\begin{align*}
&~ \ell H_\ell + (1-p_{\ell-1})N_\ell H_{\ell-1} + \frac{H_{\ell-1}}{p_{\ell-1}} + H_\ell (1-p_\ell)^{N_\ell}\\
\to &~ (\ell+1)H_\ell-p_{\ell-1}H_\ell+\frac{H_{\ell}}{N_\ell p_{\ell-1}}+H_\ell\exp(-N_\ell p_\ell)\\
= &~ (\ell+1)H_\ell - H_\ell(p_{\ell-1} - p_{\ell-1}^2 - \exp(-p_{\ell-1}^{-1}))\\
< &~ (\ell+1)H_\ell ~,
\end{align*}
the last inequality holds as long as $n$ is sufficiently large. 

Therefore we conclude that on $G_\ell(s,t)$ any non-adaptive algorithm has expected value at most $\alg_\ell=(\ell+1)H_\ell$. Further, in order to obtain value larger than $\alg'_\ell$, the algorithm must either use at least one risky hat, or set a threshold less than $\infty$ for the brim edges, and both strategies connect $s,t$ with probability at least $p_\ell$.

Therefore, for every integer $k\geq 1$, setting $\ell=k$ and taking $n$ sufficiently large gives an instance $G_k(s,t)$ such that any non-adaptive algorithm has competitive ratio at least
\[
\frac{\opt_k}{\alg_k}\geq \frac{(3k+2)H_k-o(H_k)}{(k+1)H_k} \to 3-\frac{1}{k+1} ~. \qedhere
\]

\end{proof}

\subsection{An attempt for simple graphs}
\label{sec:appendix_graphic_simple_fail}

The recursive construction above crucially relies on parallel edges. A natural attempt to adapt it to simple graphs is to replace each pair of brim edges at each level by a triangle: two deterministic edges $(s,v),(v,t)$ of value $H_\ell$, and one edge $(s,t)$ that takes value $H_\ell/p_\ell$ with probability $p_\ell$. 

Such a construction can be analyzed using the same proof strategy as for general graphs. The details are omitted. Eventually we obtain (under the same notation as in the proof of Theorem~\ref{thm:graphic_lower_bound_restate}):
\[
    \opt_\ell=(4\ell+3)H_\ell-o(H_\ell), \qquad  \alg'_\ell=(2\ell+1)H_\ell, \qquad  \alg_\ell=(2\ell+2)H_\ell ~.
\]
Therefore it only gives a lower bound of $\frac{\opt_k}{\alg_k} \to 2-\frac{1}{2k+2}$. Thus, it still remains unclear whether there is a strict separation between adaptive and non-adaptive algorithms for simple graphic matroids.

\subsection{Improved upper bounds for graphic matroids}
\label{sec:appendix_graphic_upper}

\begin{proofof}{Theorem~\ref{thm:graphic_upper_bound_8}}
    The algorithm only accepts edges with one endpoint in $A$ and the other endpoint in $B$. We charge its value to the endpoint in $B$.

    From the perspective of each $v\in B$, each incident edge is \emph{active} with probability $1/2$ (if the other endpoint belongs to $A$),\ and otherwise \emph{inactive} (the threshold is set to $+\infty$ and is equivalent to not arriving). Let $\tilde E(v)$ denote the set of active incident edges, then we set $T_v$ to be the standard threshold that guarantees a $2$-approximation to the prophet inequality instance induced by $\tilde E(v)$. Under the graphic matroid constraint, it is always feasible to accept at least one edge in $E(v)$. Note that it may be possible to accept more than one edge in $E(v)$, but we only count the first one, which is enough for a $2$-approximation. Therefore, by Fact~\ref{fact:random_subset}, the expected value charged to $v$ is
    \[
    \Pr[v\in B]\cdot \frac{1}{2}\cdot \E_{\tilde E(v)}\left[\E\left[\max_{e\in \tilde E(v)}X_e\right]\right] \geq \frac{1}{2}\cdot \frac{1}{2}\cdot \frac{1}{2}\cdot \E\left[\max_{e\ni v}X_e\right] = \frac{1}{8}\cdot \E\left[\max_{e\ni v}X_e\right] ~.
    \]
    Since the value of each accepted edge is charged to exactly one vertex, we have
    \[
    \E[\alg] \geq \frac{1}{8}\sum_{v\in V}\E\left[\max_{e\ni v}X_e\right] \geq \frac{1}{8}\opt ~. \qedhere
    \]
\end{proofof}

For simple graphs, suppose we partition the graph such that $A$ includes each vertex independently with probability $p$, then the analysis of each $v\in B$ boils down to the following lemma.

\begin{lemma}
\label{lem:subsample_prophet inequality}
    Fix a prophet inequality instance with independent variables $X_1,\ldots,X_n$, suppose each variable actually arrives independently with probability $p$, then there is an algorithm that obtains expected reward at least a $\frac{p}{1+p}$ fraction of the prophet's value on the full instance. Further, this ratio is tight.
\end{lemma}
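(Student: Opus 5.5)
The plan is a single-threshold algorithm analyzed against the full-instance prophet, followed by a matching hard instance. Let $A_i\sim\ber(p)$ be the independent indicator that $X_i$ actually arrives. The algorithm sees $Y_i:=A_iX_i$, where a missing item is treated as value $0$, which is equivalent to not arriving. I would run a single-threshold rule on the $Y_i$: accept the first $i$ with $Y_i\ge T$. The threshold $T$ is chosen, with randomized tie-breaking at atoms as in the preliminaries, so that
\[
q:=\Pr\Bigl[\max_i Y_i<T\Bigr]=\prod_i\bigl(1-p\Pr[X_i\ge T]\bigr)=\frac{1}{1+p}.
\]
This $q$ decreases continuously from $1$ toward $\prod_i(1-p)$ as $T$ decreases. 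That limit is at most $1-p<1/(1+p)$, so such a $T$ exists. (If there is only one variable, the statement is trivial.)

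The analysis follows the standard Samuel-Cahn accounting. With probability $1-q$ some item is accepted, contributing at least $T$. In addition, item $i$ is reached with probability at least $q$, since all other $Y_j$ being below $T$ is independent of $Y_i$. Hence
\[
\E[\alg]\ \ge\ (1-q)\,T+q\sum_i\E[(Y_i-T)^+]\ =\ (1-q)\,T+q\,p\sum_i\E[(X_i-T)^+].
\]
Here I use $\E[(Y_i-T)^+]=p\,\E[(X_i-T)^+]$, because $T\ge 0$ and $Y_i$ is either $X_i$ or $0$. On the prophet side, for the full instance,
\[
\E\Bigl[\max_i X_i\Bigr]\le T+\sum_i\E[(X_i-T)^+].
\]
The choice $q=1/(1+p)$ is exactly what balances the two coefficients: $1-q=qp=\frac{p}{1+p}$. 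Combining gives $\E[\alg]\ge\frac{p}{1+p}\E[\max_i X_i]$. The subtle point is that the independence between the subsampling and the values is what lets the factor $p$ appear only once, multiplying the $(\cdot)^+$ term, rather than twice.

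For tightness, I would mimic the construction in Theorem~\ref{thm:truncated_lower_bound}. Take $N$ i.i.d.\ items of value $1$ with probability $b$, followed by a final item of value $\frac{1}{p\eps}$ with probability $\eps$, and let $b\to0$, $Nb\to\infty$, $\eps\to0$. Then $\E[\opt]\to 1+1/p$. Any online algorithm, adaptive or not, stops after accepting a small item. By symmetry it is characterized by how many present small items it is willing to take, and in the Poisson limit the number of present small items it considers is $\pois(p\lambda)$. Its value is therefore at most
\[
(1-e^{-p\lambda})+e^{-p\lambda}\cdot p\cdot\tfrac1p=1
\]
for every $\lambda$, which gives the ratio $\frac{1}{1+1/p}=\frac{p}{1+p}$.

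The main obstacle is making the lower-bound argument rigorous for arbitrary, possibly randomized, stopping rules rather than just threshold counts. I would handle it by noting that only the first accepted value matters, and that the big item contributes expected value at most $p\cdot\frac{1}{p}=1$ whenever it is reached.
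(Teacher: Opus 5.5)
Your proof is correct. The algorithmic half matches the paper's proof: the same single threshold with $\prod_i(1-p\Pr[X_i\ge T])=\frac{1}{1+p}$, the same bound of the probability of reaching item $i$ by the probability of never stopping, and the same balancing against $T+\sum_i\E[(X_i-T)^+]$. The tightness half uses a genuinely different instance. The paper puts a deterministic item of value $k$ first, followed by $k/p$ items worth $1/\eps$ with probability $\eps$. There the full-instance prophet collects about $k+k/p$, and the paper argues that an online algorithm gets only $k+o(k)$ because about $k$ of the rare items survive subsampling. You instead put a swarm of value-$1$ Bernoulli items first, so the full-instance prophet has value $1$ almost surely. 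Your last item has value $\frac{1}{p\eps}$: it is worth $\frac1p$ to the full-instance prophet but only $p\cdot\frac1p=1$ in expectation to an online algorithm that reaches it. Your closing argument runs as follows: only the first accepted item counts, a small item is worth at most $1$, and the last item contributes at most $1$ in expectation when reached, since its arrival and value are independent of the history. This gives $\E[\alg]\le 1$ exactly for every stopping rule, adaptive or randomized. That is cleaner than the paper's concentration remark, and it makes your Poisson-$\lambda$ computation redundant. The same accounting shows that even the swarm is unnecessary: a deterministic $X_1=1$ followed by the rare item already gives ratio tending to $\frac{p}{1+p}$. It also shows that the paper's own instance satisfies $\E[\alg]\le k$ for every $k$, with no large-$k$ limit required.
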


\begin{proof}
    Fix a prophet inequality instance with independent nonnegative random variables $X_1,\ldots,X_n$. We say that $i$ is active if it is included in the sub-instance. 
    
    Following the standard analysis of single-choice prophet inequalities, we decompose the value accepted by the algorithm into $T$ and above $T$:
    \begin{align*}
    \E[\alg] = &~ \sum_{i=1}^n \Pr[\text{algorithm does not stop before }i]\cdot p\cdot\E[(X_i-T)^+]\\
    &~ + \Pr[\text{algorithm accepts some item}]\cdot T ~.
    \end{align*}
    For each $i\in[n]$, let $p_i=\Pr[X_i<T]$. The probability that the algorithm
    does not accept any of the $n$ items is 
    \[
    Q_p(T)=\prod_{i=1}^n (1-p+p p_i) ~.
    \]
    Then for each $i\in[n]$, we can lower bound the probability that the algorithm does not stop before $i$ by the probability that the algorithm does not stop at all, which is exactly $Q_p(T)$. Therefore, we conclude that
    \[
    \E[\alg] \geq Q_p(T) \cdot p \cdot \sum_{i=1}^n \E[(X_i-T)^+] + (1-Q_p(T))\cdot T ~.
    \]
    To balance the two terms, we set $T$ such that $Q_p(T)=\frac{1}{1+p}$, then
    \[
    \E[\alg] \geq \frac{p}{1+p}\cdot \sum_{i=1}^n \E[(X_i-T)^+] + \frac{p}{1+p}\cdot T \geq \frac{p}{1+p}\cdot \E\left[\max_{i\in[n]} X_i\right] ~.
    \]
    Setting such a $T$ is always possible: under the assumption that the distributions are continuous, each $p_i=\Pr[X_i<T]$ is continuous and monotone non-decreasing in $T$, therefore $Q_p(T)$ is a continuous and monotone non-decreasing function of $T$. Thus, since
    \[
    Q_p(0)=(1-p)^n\le 1/(1+p)\leq Q_p(+\infty)=1 ~,
    \]
    there must exist some $T$ such that $Q_p(T)=\frac{1}{1+p}$.

    Next we show that this is tight: fix $p$, let $k$ be a sufficiently large number such that $k/p$ is an integer, and $\eps\to 0$, consider an instance where
    \[
    X_1 = k, \qquad X_2,\ldots,X_{k/p+1} = \begin{cases}
        1/\eps, \qquad \text{with probability }\eps,\\
        0,\qquad \text{otherwise}.
    \end{cases}
    \]
    Then the prophet gets expected value 
    \[
    \E[\max_i X_i] = k(1-\eps)^{k/p}+\frac{1}{\eps}\left(1-(1-\eps)^{k/p}\right) \to k+\frac{k}{p} ~,
    \]
    while any online algorithm can get at most $k+o(k)$, since when $k/p$ is sufficiently large, the number of variables among $X_2,\ldots,X_{k/p+1}$ that survive the sub-sampling will concentrate at $k\pm o(k)$ almost surely. Therefore the ratio tends to $\frac{k}{k+k/p} = \frac{p}{1+p}$.
\end{proof}

We plug the algorithm from Lemma~\ref{lem:subsample_prophet inequality} into Algorithm~\ref{alg:alg1} and optimize the parameter $p$ to obtain an improved upper bound for simple graphs. 

\begin{proofof}{Theorem~\ref{thm:graphic_upper_bound_simple}}
    We slightly modify Algorithm~\ref{alg:alg1}: partition the graph such that each vertex belongs to $A$ independently with probability $p$, and use Lemma~\ref{lem:subsample_prophet inequality} to compute the thresholds for each $v\in B$. Then from the perspective of each $v\in B$, each incident edge is considered independently with probability $p$, so by Lemma~\ref{lem:subsample_prophet inequality} the expected value charged to $v$ is at least 
    \[
    \Pr[v\in B]\cdot \frac{p}{1+p}\E\left[\max_{e\ni v}X_e\right] = \frac{p(1-p)}{1+p}\E\left[\max_{e\ni v}X_e\right] ~. 
    \]
    This gives a competitive ratio of $\frac{1+p}{p(1-p)}$. The optimal choice of $p$ is $p^*=\sqrt 2 - 1$ and gives competitive ratio $\frac{1+p^*}{p^*(1-p^*)}=3+2\sqrt 2 \approx 5.828$.
\end{proofof}

\end{document}